\numberwithin{equation}{section}
\numberwithin{figure}{section}
\theoremstyle{plain}
\newtheorem{thm}{\protect\theoremname}
\theoremstyle{plain}
\newtheorem{lem}[thm]{\protect\lemmaname}
\theoremstyle{definition}
\newtheorem{defn}[thm]{\protect\definitionname}
\theoremstyle{definition}
\newtheorem{xca}[thm]{\protect\exercisename}
\theoremstyle{plain}
\newtheorem{fact}[thm]{\protect\factname}
\providecommand{\definitionname}{Definition}
\providecommand{\exercisename}{Exercise}
\providecommand{\factname}{Fact}
\providecommand{\lemmaname}{Lemma}
\providecommand{\theoremname}{Theorem}
\begin{document}
\global\long\def\defeq{\stackrel{\mathrm{{\scriptscriptstyle def}}}{=}}%
\global\long\def\norm#1{\left\Vert #1\right\Vert }%
\global\long\def\R{\mathbb{R}}%
\global\long\def\Diag{\mathrm{Diag}}%
 
\global\long\def\Rn{\mathbb{R}^{n}}%
\global\long\def\tr{\mathrm{Tr}}%
\global\long\def\diag{\mathrm{diag}}%
\global\long\def\rank{\mathrm{rank}}%
\global\long\def\ma{\mathbf{A}}%
\global\long\def\mx{\mathbf{X}}%
\global\long\def\ms{\mathbf{S}}%
\global\long\def\mproj{\mathbf{P}}%
\global\long\def\mi{\mathbf{I}}%
\global\long\def\mm{\mathbf{M}}%
\global\long\def\mh{\mathbf{H}}%
\global\long\def\mzero{\mathbf{0}}%
\global\long\def\ml{\mathbf{L}}%
\global\long\def\mb{\mathbf{B}}%
\global\long\def\oma{\overline{\ma}}%
\global\long\def\md{\mathbf{D}}%
\global\long\def\mw{\mathbf{W}}%
\global\long\def\omx{\overline{\mathbf{X}}}%
\global\long\def\oms{\overline{\mathbf{S}}}%
\global\long\def\ox{\overline{x}}%
\global\long\def\os{\overline{s}}%

\global\long\def\ttag#1{\tag{#1}}%
\global\long\def\PriR{\mathcal{P}}%
 
\global\long\def\DualR{\mathcal{D}}%
 
\title{Tutorial on the Robust Interior Point Method}
\author{Yin Tat Lee\thanks{University of Washington \& Microsoft Research. Email: yintat@uw.edu.
Supported in part by NSF awards CCF-1749609, DMS-1839116, DMS-2023166,
CCF-2105772, Microsoft Research Faculty Fellowship, Sloan Research
Fellowship and Packard Fellowship.}\\
\\
\and Santosh S. Vempala\thanks{Georgia Tech. Email: vempala@gatech.edu. Supported in part by NSF
awards DMS-1839323, CCF-1909756, CCF-2007443 and CCF-2106444.}\\
\\
}
\maketitle
\begin{abstract}
We give a short, self-contained proof of the interior point method
and its robust version.
\end{abstract}

\section{Introduction}

Consider the primal linear program
\begin{equation}
\min_{\ma x=b,x\in\R_{\geq0}^{n}}c^{\top}x\ttag P\label{eq:tut_primal_LP}
\end{equation}
and its dual
\begin{equation}
\max_{\ma^{\top}y+s=c,s\in\R_{\geq0}^{n}}b^{\top}y\ttag D\label{eq:tut_dual_LP}
\end{equation}
where $\ma\in\R^{d\times n}$ and $\R_{\geq0}=\{x\geq0\}$. The feasible
regions for the two programs are 
\[
\PriR=\{x\in\R_{\geq0}^{n}:\ma x=b\}\mbox{ and }\DualR=\{s\in\R_{\geq0}^{n}:\ma^{\top}y+s=c\text{ for some }y\}.
\]
We define their interiors: 
\[
\PriR^{\circ}=\{x\in\R_{>0}^{n}:\ma x=b\}\mbox{ and \ensuremath{\DualR^{\circ}}}=\{s\in\R_{>0}^{n}:\ma^{\top}y+s=c\text{ for some }y\}.
\]

To motivate the main idea of the interior point method, we recall
the optimality condition for linear programs.
\begin{thm}[Complementary Slackness]
\label{thm:comp_slackness}Any $x\in\PriR$ and $s\in\DualR$ are
optimal if and only if $x^{\top}s=0$. Moreover, if both $\PriR$
and $\DualR$ are non-empty, there exist $x^{*}\in\PriR$ and $s^{*}\in\DualR$
such that $(x^{*})^{\top}s^{*}=0$ and $x^{*}+s^{*}>0$.
\end{thm}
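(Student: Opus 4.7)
My plan proceeds in three stages: weak duality (which gives the easy direction), strong duality (which gives the ``only if'' direction and the existence of a zero-gap optimal pair), and the Goldman--Tucker strict complementarity. First, for any primal-feasible $x$ and dual-feasible $(y,s)$ the constraints $\ma x = b$ and $\ma^\top y + s = c$ give $x^\top s = x^\top(c - \ma^\top y) = c^\top x - (\ma x)^\top y = c^\top x - b^\top y$. Since $x, s \geq 0$, this already proves weak duality and shows that $x^\top s = 0$ forces both $x$ and $s$ to be optimal, yielding the ``if'' direction.

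For the ``only if'' direction and for the existence of an optimal pair with zero gap, I would invoke strong duality, which can be derived self-contained via Farkas' lemma: if $v^\star = \inf_{x \in \PriR} c^\top x$ is finite, the system $\{\ma x = b,\ x \geq 0,\ c^\top x < v^\star\}$ is infeasible, and the resulting Farkas certificate provides a dual feasible $(y,s)$ with $b^\top y = v^\star$. Combined with the gap identity above, any primal-optimal $x$ and dual-optimal $s$ must then satisfy $x^\top s = 0$.

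For strict complementarity, I would handle each index $i$ separately by looking at the auxiliary LP $\max\{x_i : \ma x = b,\ x \geq 0,\ c^\top x = v^\star\}$. If its optimum is positive, some optimal primal $x$ has $x_i > 0$. Otherwise, applying LP duality to this auxiliary LP produces $(y, \tau)$ with $\tau \geq 0$, $\ma^\top y + \tau c \geq e_i$, and $b^\top y + \tau v^\star = 0$. When $\tau > 0$, dividing by $\tau$ exhibits an optimal dual solution with $s_i > 0$. When $\tau = 0$, the vector $y$ satisfies $\ma^\top y \geq e_i \geq 0$ and $b^\top y = 0$, so shifting any existing dual optimum $(y^\star, s^\star)$ to $(y^\star - \lambda y,\ s^\star + \lambda\, \ma^\top y)$ for some $\lambda > 0$ preserves both feasibility and optimal value while making the $i$-th slack strictly positive. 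Averaging the certificates obtained across all $i$ (and similarly on the primal side) yields a single optimal pair $(x^\star, s^\star)$ with $x^\star + s^\star > 0$. The step I expect to be hardest is the $\tau = 0$ subcase, since direct rescaling fails and one must instead use the Farkas direction as a shift on an existing dual optimum; the first two stages are essentially bookkeeping on top of standard LP duality.
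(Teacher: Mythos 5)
The paper states this theorem without proof, treating it as a classical fact, so there is no paper argument to compare against. Your proposal is correct and is the standard modern derivation: weak duality gives the ``if'' direction; strong duality (via Farkas on the infeasibility of $\{\ma x=b,\ x\geq 0,\ c^\top x < v^\star\}$) gives the ``only if'' direction and the existence of a zero-gap pair; and the Goldman--Tucker strict complementarity follows from the per-index auxiliary LP plus convexity of the optimal faces.

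Two small points worth tightening if you write this out in full. First, you write the auxiliary LP with the equality $c^\top x = v^\star$ but then claim the dual multiplier satisfies $\tau \geq 0$; as stated the multiplier on an equality is free. The fix is what you implicitly do: since $c^\top x \geq v^\star$ holds on all of $\PriR$, the constraint can be relaxed to $c^\top x \leq v^\star$ without changing the feasible set, and then $\tau \geq 0$ is correct. Second, before invoking duality on the auxiliary LP you should dispose of the case where it is unbounded above (then trivially some optimal primal has $x_i > 0$, e.g.\ by adding a recession ray with $d_i>0$) and note that its feasible set is nonempty because a feasible, bounded LP attains its optimum. With those remarks the argument is complete; the $\tau=0$ subcase you were worried about does go through exactly as you sketched, since $\ma^\top y \geq e_i \geq 0$ and $b^\top y = 0$ make $(y^\star - \lambda y,\ s^\star + \lambda \ma^\top y)$ dual feasible with unchanged objective and strictly positive $i$-th slack, and the final averaging works because the primal and dual optimal faces are convex and any primal-optimal/dual-optimal pair has $x^\top s = 0$.
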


More generally, the quantity $x^{\top}s$ measures the duality gap
of the feasible solution:
\begin{lem}[Duality Gap]
\label{lem:duality_gap}For any $x\in\PriR$ and $s\in\DualR$, the
duality gap $c^{\top}x-b^{\top}y=x^{\top}s$. In particular $c^{\top}x\leq\min_{x\in\PriR}c^{\top}x+x^{\top}s$.
\end{lem}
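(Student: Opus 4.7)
The claim is a direct algebraic identity, so the plan is to substitute the primal and dual feasibility constraints into the expression $c^{\top}x-b^{\top}y$ and simplify. The only mildly subtle point is that the second sentence invokes weak duality, which itself follows from the first sentence; I would point this out explicitly rather than assume it.

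\textbf{Main computation.} Since $s\in\DualR$, there is a $y$ with $\ma^{\top}y+s=c$. Multiplying on the left by $x^{\top}$ gives
\[
c^{\top}x \;=\; x^{\top}\ma^{\top}y + x^{\top}s \;=\; (\ma x)^{\top}y + x^{\top}s \;=\; b^{\top}y + x^{\top}s,
\]
where the last equality uses $\ma x=b$ from $x\in\PriR$. Rearranging gives $c^{\top}x-b^{\top}y=x^{\top}s$, proving the first assertion.

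\textbf{Second assertion.} Applying the identity to an \emph{arbitrary} primal feasible point $x'\in\PriR$ together with the same dual feasible $(y,s)$ yields $c^{\top}x'-b^{\top}y=(x')^{\top}s\geq 0$, since $x'\geq 0$ and $s\geq 0$. Thus $b^{\top}y\leq c^{\top}x'$ for every $x'\in\PriR$, and in particular $b^{\top}y\leq\min_{x'\in\PriR}c^{\top}x'$. Combining this with the identity $c^{\top}x=b^{\top}y+x^{\top}s$ applied to our given $x$ yields
\[
c^{\top}x \;=\; b^{\top}y + x^{\top}s \;\leq\; \min_{x'\in\PriR}c^{\top}x' + x^{\top}s,
\]
which is the claimed inequality.

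\textbf{Obstacle.} There is essentially none: this is a one-line manipulation using only the defining equalities $\ma x=b$ and $\ma^{\top}y+s=c$ together with nonnegativity for the weak-duality step. The only thing to be careful about is the potential notational confusion in the second sentence, where the same letter $x$ appears both as the fixed feasible point and as the dummy variable inside the minimum; I would rename the dummy variable to $x'$ as above to keep the argument unambiguous.
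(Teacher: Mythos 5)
Your proof is correct and follows essentially the same approach as the paper: the identity is obtained by the same substitution of the feasibility constraints, and the second assertion reduces to weak duality. The only difference is that you derive weak duality explicitly from the identity (applied to an arbitrary $x'$) rather than citing it, which is a slight gain in self-containedness but not a different route.
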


\begin{proof}
Using $\ma x=b$ and $\ma^{\top}y+s=c$, we can compute the duality
gap as follows
\[
c^{\top}x-b^{\top}y=c^{\top}x-(\ma x)^{\top}y=c^{\top}x-x^{\top}(\ma y)=x^{\top}s.
\]
By weak duality, we have
\[
c^{\top}x=b^{\top}y+x^{\top}s\leq\max_{\ma^{\top}y+s=c,s\in\R_{\geq0}^{n}}b^{\top}y+x^{\top}s\leq\min_{x\in\PriR}c^{\top}x+x^{\top}s.
\]
\end{proof}
The main implication of Lemma \ref{lem:duality_gap} is that any feasible
$(x,s)$ with small $x^{\top}s$ is a nearly optimal solution of the
linear program. This leads us to primal-dual algorithms in which we
start with a feasible primal-dual solution pair $(x,s)$ and iteratively
update the solution to decrease the duality gap $x^{\top}s$. 

\section{Interior Point Method}

In this section, we discuss the classical short-step interior point
method. For two vectors $a,b$, we use $ab$ to denote the vector
with components $(ab)_{i}=a_{i}b_{i}$ and $a/b$ to denote the vector
with components $a_{i}/b_{i}$. For a scalar $t\in\R$, we let $t\mathbf{1}$
denote the vector with all coordinates equal to $t$. 
\begin{defn}[Central Path]
\label{def:central_path}We define the central path $(x_{t},s_{t})\in\PriR^{\circ}\times\DualR^{\circ}$
by $x_{t}s_{t}=t$. We say $x_{t}$ is on the central path of (\ref{eq:tut_primal_LP})
at $t$.
\end{defn}

The algorithm maintains a pair $(x,s)\in\PriR^{\circ}\times\DualR^{\circ}$
and a scalar $t>0$ satisfying the invariant 
\[
\|\frac{xs}{t}-1\|_{2}\leq\frac{1}{4}.
\]
Note that the deviation from the central path is measured in $\ell_{2}$
norm. In each step, the algorithm decreases $t$ by a factor of $1-\Omega(n^{-1/2})$
while maintaining the invariant.

\subsection{Basic Property of a Step}

To see why there is a pair $(x,s)\in\PriR^{\circ}\times\DualR^{\circ}$
satisfying the invariant, we prove the following generalization.
\begin{lem}[Quadrant Representation of Primal-Dual]
\label{lem:primal_dual_representation}Suppose $\PriR$ is non-empty
and bounded. For any positive vector $\mu\in\R_{>0}^{n}$, there is
an unique pair $(x_{\mu},s_{\mu})\in\PriR^{\circ}\times\DualR^{\circ}$
such that $x_{\mu}s_{\mu}=\mu$. Furthermore, $x_{\mu}=\min_{x\in\PriR}f_{\mu}(x)$
where
\[
f_{\mu}(x)=c^{\top}x-\sum_{i=1}^{n}\mu_{i}\ln x_{i}.
\]
\end{lem}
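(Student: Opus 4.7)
The plan is to realize $(x_\mu,s_\mu)$ as the unique KKT pair of the convex program $\min_{x\in\PriR}f_\mu(x)$. First I would note that $f_\mu$ is strictly convex on $\R_{>0}^n$, since $c^\top x$ is linear and each $-\mu_i\ln x_i$ is strictly convex; therefore its restriction to the affine slice $\PriR^\circ=\{\ma x=b\}\cap\R_{>0}^n$ is strictly convex, and any minimizer, if it exists, is unique.

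Next I would establish existence of a minimizer, and crucially show it lies in $\PriR^\circ$. Since $\PriR$ is non-empty and bounded (hence compact), $c^\top x$ is bounded on $\PriR$, so $f_\mu$ is bounded below by a constant minus $\sum_i \mu_i \ln x_i$. The barrier term $-\sum_i\mu_i\ln x_i$ blows up to $+\infty$ as any coordinate $x_i\to 0$. Fixing some reference point $x_0\in\PriR^\circ$, the sublevel set $\{x\in\PriR:f_\mu(x)\le f_\mu(x_0)\}$ is therefore a compact subset of $\PriR^\circ$ on which the continuous function $f_\mu$ attains its minimum at some $x_\mu\in\PriR^\circ$.

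At this interior minimizer I would write down the Lagrangian first-order condition for the equality constraint $\ma x=b$: $\nabla f_\mu(x_\mu)=c-\mu/x_\mu$ must lie in the row space of $\ma$, so there exists $y$ with $c-\mu/x_\mu=\ma^\top y$. Setting $s_\mu\defeq\mu/x_\mu>0$ rearranges this to $\ma^\top y+s_\mu=c$, i.e.\ $s_\mu\in\DualR^\circ$, and by construction $x_\mu s_\mu=\mu$. This simultaneously delivers the desired pair and the formula $x_\mu=\arg\min_{x\in\PriR}f_\mu(x)$.

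For uniqueness of the \emph{pair} (not just of the minimizer of $f_\mu$), I would reverse the KKT computation: any $(x',s')\in\PriR^\circ\times\DualR^\circ$ with $x's'=\mu$ satisfies $\mu/x'=s'=c-\ma^\top y'$ for some $y'$, which is precisely the first-order stationarity condition for $f_\mu$ on $\PriR$; by strict convexity this forces $x'=x_\mu$ and then $s'=\mu/x'=s_\mu$. The main technical obstacle is the existence step: I need both that $\PriR^\circ$ is non-empty (the tacit non-degeneracy assumption, needed even to phrase the conclusion) and that the minimum cannot escape to the boundary, which is where boundedness of $\PriR$ is used — without it, $c^\top x$ could drive $f_\mu$ to $-\infty$ along a recession direction and no minimizer would exist.
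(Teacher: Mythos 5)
Your proof is correct and takes essentially the same route as the paper: realize $(x_\mu,s_\mu)$ as the KKT pair of $\min_{x\in\PriR}f_\mu(x)$, use the barrier blowup for interiority, and strict convexity for uniqueness. Your existence step via the compact sublevel set is a bit more careful than the paper's terse assertion, and your explicit flag that $\PriR^\circ\neq\emptyset$ is tacitly required is a worthwhile observation, but the argument is the same.
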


\begin{proof}
Fix $\mu\in\R_{>0}^{n}$. We define $x_{\mu}=\arg\min_{x\in\PriR}f_{\mu}(x)$
and prove that $(x_{\mu},s_{\mu})\in\PriR^{\circ}\times\DualR^{\circ}$
with $x_{\mu}s_{\mu}=\mu$ for some $s_{\mu}$. Since $\PriR$ is
non-empty and bounded and since $f_{\mu}$ is strictly convex, such
an $x_{\mu}$ exists. Furthermore, since $f_{\mu}(x)\rightarrow+\infty$
as $x_{i}\rightarrow0$ for any $i$, we have that $x_{\mu}\in\PriR^{\circ}$.

By the KKT optimality condition for $f_{\mu}$, there is a vector
$y$ such that
\[
\nabla f_{\mu}(x)=c-\frac{\mu}{x}=\ma^{\top}y.
\]
Define $s_{\mu}=\frac{\mu}{x_{\mu}}$, then one can check that $s_{\mu}\in\DualR^{\circ}$
and $x_{\mu}s_{\mu}=\mu$.

For the uniqueness, if $(x,s)\in\PriR^{\circ}\times\DualR^{\circ}$
and $xs=\mu$, then $x$ satisfies the optimality condition for $f_{\mu}$.
Since $f_{\mu}$ is strictly convex, such $x$ must be unique.
\end{proof}
Lemma \ref{lem:primal_dual_representation} shows that any point in
$\PriR^{\circ}\times\DualR^{\circ}$ is uniquely represented by a
positive vector $\mu$. Interior point methods move $\mu$ uniformly
to $0$ while maintaining the corresponding $x_{\mu}$. Now we discuss
how to find $(x_{\mu},s_{\mu})$ given a nearby interior feasible
point $(x,s)$. Namely, how to move $(x,s)$ to $(x+\delta_{x},s+\delta_{s})$
such that it satisfies the equation
\begin{align*}
(x+\delta_{x})(s+\delta_{s}) & =\mu,\\
\ma(x+\delta_{x}) & =b,\\
\ma^{\top}(y+\delta_{y})+(s+\delta_{s}) & =c,\\
(x+\delta_{x},s+\delta_{s}) & \in\R_{>0}^{2n}.
\end{align*}
Although the equation above involves $y$, our approximate solution
does not need to know $y$. By ignoring the inequality constraint
and the second-order term $\delta_{x}\delta_{s}$ in the first equation
above, and using $\ma x=b$ and $\ma^{\top}y+s=c$ we can simplify
the system:
\begin{align}
xs+\ms\delta_{x}+\mx\delta_{s} & =\mu,\label{eq:LP_newton_step}\\
\ma\delta_{x} & =0,\nonumber \\
\ma^{\top}\delta_{y}+\delta_{s} & =0,\nonumber 
\end{align}
where $\mx$ and $\ms$ are the diagonal matrix with diagonal $x$
and $s$. In the following Lemma, we show how to write the step above
using a projection matrix ($\mproj^{2}=\mproj$).
\begin{lem}
\label{lem:Newton_step}Suppose that $\ma$ has full row rank and
$(x,s)\in\PriR^{\circ}\times\DualR^{\circ}$. Then, the unique solution
for the linear system (\ref{eq:LP_newton_step}) is given by
\begin{align*}
\mx^{-1}\delta_{x} & =(\mi-\mproj)(\delta_{\mu}/\mu),\\
\ms^{-1}\delta_{s} & =\mproj(\delta_{\mu}/\mu)
\end{align*}
where $\delta_{\mu}=\mu-xs$ and $\mproj=\ms^{-1}\ma^{\top}(\ma\ms^{-1}\mx\ma^{\top})^{-1}\ma\mx$.
\end{lem}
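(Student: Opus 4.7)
The plan is to view (\ref{eq:LP_newton_step}) as the direct-sum decomposition of a single vector in $\R^{n}$, with $\mproj$ playing the role of the associated projection. The first step is to rescale: dividing the top equation of (\ref{eq:LP_newton_step}) coordinatewise by $xs$ gives
\[
\mx^{-1}\delta_{x}+\ms^{-1}\delta_{s}=\frac{\delta_{\mu}}{xs},
\]
and the two remaining constraints force the summands into natural subspaces, namely $\mx^{-1}\delta_{x}\in\ker(\ma\mx)$ (from $\ma\delta_{x}=0$) and $\ms^{-1}\delta_{s}\in\mathrm{image}(\ms^{-1}\ma^{\top})$ (from $\delta_{s}=-\ma^{\top}\delta_{y}$).

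Next I would check that these two subspaces are complementary. Under the weighted inner product $\langle u,v\rangle_{xs}=u^{\top}\mx\ms\,v$, they are in fact orthogonal complements: $\langle\mx^{-1}\delta_{x},\ms^{-1}\ma^{\top}u\rangle_{xs}=\delta_{x}^{\top}\ma^{\top}u=0$, and their dimensions sum to $n$ by the full row rank hypothesis on $\ma$. Invertibility of $\ma\ms^{-1}\mx\ma^{\top}$, which is what makes $\mproj$ well-defined in the first place, follows from positive definiteness of the diagonal $\ms^{-1}\mx$ together with full row rank of $\ma$. Together, these facts give existence and uniqueness of the decomposition, and hence of $(\delta_{x},\delta_{s})$.

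Finally I would verify that $\mproj$ is exactly the projection onto $\mathrm{image}(\ms^{-1}\ma^{\top})$ along $\ker(\ma\mx)$, via two short computations: $\mproj$ fixes every $\ms^{-1}\ma^{\top}u$ (the middle factors telescope against $(\ma\ms^{-1}\mx\ma^{\top})^{-1}$), and $\mproj v=0$ as soon as $\ma\mx v=0$. Reading off the two components of the right-hand side $\delta_{\mu}/(xs)$ along this direct sum then yields the claimed closed forms (which reduce to the lemma's $\delta_{\mu}/\mu$ on the central path, where $xs=\mu$). I do not anticipate any significant obstacle; the argument is mostly bookkeeping. The one mild idea is picking the rescaling by $xs$: this is what exposes the orthogonal direct-sum structure and forces $\mproj$ to appear naturally, after which everything reduces to standard linear algebra.
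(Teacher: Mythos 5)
Your proposal is correct, and it takes a genuinely different route from the paper. The paper's proof is a direct elimination: multiply $\ms\delta_{x}+\mx\delta_{s}=\delta_{\mu}$ on the left by $\ma\ms^{-1}$, use $\ma\delta_{x}=0$ and $\delta_{s}=-\ma^{\top}\delta_{y}$ to isolate $\delta_{y}=-(\ma\ms^{-1}\mx\ma^{\top})^{-1}\ma\ms^{-1}\delta_{\mu}$, and back-substitute for $\delta_{s}$ and then $\delta_{x}$. Yours instead rescales the first equation to $\mx^{-1}\delta_{x}+\ms^{-1}\delta_{s}=\delta_{\mu}/(xs)$, observes that the two summands are forced into $\ker(\ma\mx)$ and $\mathrm{image}(\ms^{-1}\ma^{\top})$, shows these are complementary (indeed $\langle\cdot,\cdot\rangle_{xs}$-orthogonal, with dimensions $n-d$ and $d$ by full row rank), and then verifies that $\mproj$ is the associated oblique projection by the two one-line computations you describe. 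Both are sound; yours has the advantage of \emph{explaining} where the projection structure comes from and why existence and uniqueness hold (a direct-sum decomposition), rather than producing the formula by elimination and only observing afterward that it can be written as a projection. The paper's is a bit shorter and requires no discussion of subspaces.

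One small point worth tightening. You write that the denominator $xs$ ``reduce[s] to the lemma's $\delta_{\mu}/\mu$ on the central path, where $xs=\mu$,'' but on the central path $\delta_{\mu}=0$, so that framing is vacuous. What is actually happening is a notational overload in the paper: in the lemma statement $\mu$ is the \emph{target} (so that $\delta_{\mu}=\mu-xs$ is nonzero), but in the formulas $\delta_{\mu}/\mu$ the symbol $\mu$ is already being used in the sense it carries later in the text, namely $\mu:=xs$, the \emph{current} product (this is explicit in $\mathtt{L2Step}$, which sets $\mu=xs$, and in Lemma \ref{lem:RobustIPM_basic}, which reads the lemma with $\overline{\mu}=\ox\os$). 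Your derivation produces exactly $\delta_{\mu}/(xs)$, which is what the paper means; you should simply say the lemma's $\mu$ in the denominator is the current product $xs$ rather than the target, and drop the ``on the central path'' remark.
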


\begin{proof}
Note that the step satisfies $\ms\delta_{x}+\mx\delta_{s}=\delta_{\mu}$.
Multiply both sides by $\ma\ms^{-1}$ and using $\ma\delta_{x}=0$,
we have
\[
\ma\ms^{-1}\mx\delta_{s}=\ma\ms^{-1}\delta_{\mu}.
\]
Now we use that $\ma^{\top}\delta_{y}+\delta_{s}=0$ and get
\[
\ma\ms^{-1}\mx\ma^{\top}\delta_{y}=-\ma\ms^{-1}\delta_{\mu}.
\]
Since $\ma\in\R^{d\times n}$ has full row rank and $\ms^{-1}\mx$
is invertible, we have that $\ma\ms^{-1}\mx\ma^{\top}$ is invertible.
Hence, 
\[
\delta_{y}=-(\ma\ms^{-1}\mx\ma^{\top})^{-1}\ma\ms^{-1}\delta_{\mu}\mbox{\ensuremath{\qquad} and \ensuremath{\qquad}}\delta_{s}=\ma^{\top}(\ma\ms^{-1}\mx\ma^{\top})^{-1}\ma\ms^{-1}\delta_{\mu}.
\]
Putting this into $\ms\delta_{x}+\mx\delta_{s}=\delta_{\mu}$, we
have
\[
\delta_{x}=\ms^{-1}\delta_{\mu}-\ms^{-1}\mx\ma^{\top}(\ma\ms^{-1}\mx\ma^{\top})^{-1}\ma\ms^{-1}\delta_{\mu}.
\]
The result follows from the definition of $\mproj$.
\end{proof}

\subsection{Lower Bounding Step Size}

The efficiency of interior point methods depends on how large a step
we can take while staying within the domain. We first study the step
operators $(\mi-\mproj)$ and $\mproj$. The following lemma implies
that $\mproj$ is a nearly orthogonal projection matrix when $\mu$
is close to a multiple of the all-ones vector. Hence, the relative
changes of $\mx^{-1}\delta_{x}$ and $\ms^{-1}\delta_{s}$ are essentially
the orthogonal decomposition of the relative step $\delta_{\mu}/\mu$.
For two vectors $u,v$, we define $\norm u_{v}\defeq\sqrt{u^{\top}\Diag(v)u}$
to be the norm defined by $v$.
\begin{lem}
\label{lem:Projection_property}Under the assumption in Lemma \ref{lem:Newton_step},
$\mproj$ is a projection matrix such that $\|\mproj v\|_{\mu}\leq\|v\|_{\mu}$
for any $v\in\Rn$. Similarly, we have that $\|(\mi-\mproj)v\|_{\mu}\leq\|v\|_{\mu}$.
\end{lem}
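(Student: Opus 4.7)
The plan is to realize $\mproj$ as a standard Euclidean orthogonal projection after a diagonal change of basis; once this is done, both claims follow immediately.

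The first step is to verify idempotency $\mproj^{2}=\mproj$ by a direct computation. Expanding $\mproj\cdot\mproj$, the middle product $\ma\mx\cdot\ms^{-1}\ma^{\top}$ collapses to $\ma\ms^{-1}\mx\ma^{\top}$ because the diagonal matrices $\mx$ and $\ms^{-1}$ commute, and this factor then cancels with one of the two inverses, leaving exactly one copy of $\mproj$.

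The main step is the following factorization. Introduce $\mb\defeq\mx^{1/2}\ms^{-1/2}\ma^{\top}$, so that $\mb^{\top}\mb=\ma\ms^{-1}\mx\ma^{\top}$ is precisely the invertible matrix appearing in $\mproj$. Pushing the factor $(\mx\ms)^{1/2}$ through $\ms^{-1}$ on the left and through $\mx$ on the right produces the identity
\[
\mproj=(\mx\ms)^{-1/2}\,\widetilde{\mproj}\,(\mx\ms)^{1/2},\qquad\widetilde{\mproj}\defeq\mb(\mb^{\top}\mb)^{-1}\mb^{\top},
\]
in which $\widetilde{\mproj}$ is the ordinary Euclidean orthogonal projection onto the column space of $\mb$. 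In particular, $\widetilde{\mproj}$ is symmetric with $\|\widetilde{\mproj}\|_{2}\le 1$, and the complement $\mi-\widetilde{\mproj}$ shares both properties.

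With $\mu=xs$ (the value of $xs$ for the current iterate, in the sense of Lemma \ref{lem:primal_dual_representation}), the weighted norm takes the clean form $\|v\|_{\mu}^{2}=\|(\mx\ms)^{1/2}v\|_{2}^{2}$. Substituting $w=(\mx\ms)^{1/2}v$ then reduces the first inequality to the trivial $\|\widetilde{\mproj}w\|_{2}\le\|w\|_{2}$, and the bound for $\mi-\mproj$ follows from the parallel identity $\mi-\mproj=(\mx\ms)^{-1/2}(\mi-\widetilde{\mproj})(\mx\ms)^{1/2}$ together with the same operator-norm bound applied to $\mi-\widetilde{\mproj}$. The only place I would take extra care is in the algebraic bookkeeping of the factorization, where the $(\mx\ms)^{\pm 1/2}$ factors must be pushed through $\ms^{-1}$ and $\mx$ to match $\mb^{\top}\mb$; this is where an arithmetic slip is most likely, but there is no conceptual obstacle.
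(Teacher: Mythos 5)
Your proof is correct and follows essentially the same approach as the paper: you conjugate $\mproj$ by the diagonal matrix $(\mx\ms)^{1/2}$ to reveal the Euclidean orthogonal projection $\widetilde{\mproj}=\mb(\mb^{\top}\mb)^{-1}\mb^{\top}$, which is exactly the matrix the paper calls $\mproj_{\mathrm{orth}}$ (note $\ms^{-1/2}\mx^{1/2}=\mx^{1/2}\ms^{-1/2}$ since the factors are diagonal). The paper presents the same idea by directly expanding the quadratic form $\|\mproj v\|_{\mu}^{2}$ and recognizing the sandwich $v^{\top}(\mx\ms)^{1/2}\mproj_{\mathrm{orth}}(\mx\ms)^{1/2}v$, whereas you first state the operator-level similarity $\mproj=(\mx\ms)^{-1/2}\widetilde{\mproj}(\mx\ms)^{1/2}$ and substitute $w=(\mx\ms)^{1/2}v$; this is slightly cleaner bookkeeping but not a different argument. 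One small remark: the lemma's $\|\cdot\|_{\mu}$ is being read as the norm weighted by $\mu=xs$ (the current iterate), which is the convention under which both your proof and the paper's go through and which matches all downstream uses of the lemma.
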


\begin{proof}
$\mproj$ is a projection because $\mproj^{2}=\mproj$. Define the
orthogonal projection matrix
\[
\mproj_{\mathrm{orth}}=\ms^{-1/2}\mx^{1/2}\ma^{\top}(\ma\ms^{-1}\mx\ma^{\top})^{-1}\ma\mx^{1/2}\ms^{-1/2},
\]
then we have
\begin{align*}
\|\mproj v\|_{\mu}^{2} & =v^{\top}\mx\ma^{\top}(\ma\ms^{-1}\mx\ma^{\top})^{-1}\ma\ms^{-1}\mx\ms\ms^{-1}\ma^{\top}(\ma\ms^{-1}\mx\ma^{\top})^{-1}\ma\mx v\\
 & =v^{\top}\ms^{1/2}\mx^{1/2}\mproj_{\mathrm{orth}}\ms^{1/2}\mx^{1/2}v\\
 & \leq v^{\top}\ms^{1/2}\mx^{1/2}\ms^{1/2}\mx^{1/2}v=\|v\|_{\mu}^{2}.
\end{align*}
The calculation for $\|(\mi-\mproj)v\|_{\mu}$ is similar.
\end{proof}
Next we give a lower bound on the largest feasible step size.
\begin{lem}
\label{lem:step_size}We have that $\|\mx^{-1}\delta_{x}\|_{\infty}^{2}\leq\frac{1}{\min_{i}\mu_{i}}\|\delta_{\mu}/\mu\|_{\mu}^{2}$
and $\|\ms^{-1}\delta_{s}\|_{\infty}^{2}\leq\frac{1}{\min_{i}\mu_{i}}\|\delta_{\mu}/\mu\|_{\mu}^{2}$.
In particular, if $\|\delta_{\mu}/\mu\|_{\mu}^{2}<\min_{i}\mu_{i}$,
we have $(x+\delta_{x},s+\delta_{s})\in\PriR^{\circ}\times\DualR^{\circ}$
\end{lem}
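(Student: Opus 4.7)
The plan is to chain three elementary inequalities: the pointwise bound $\ell_{\infty}\leq\ell_{2}$, a comparison between the unweighted $\ell_{2}$ norm and the $\mu$-weighted norm $\|\cdot\|_{\mu}$, and the contraction bounds for $\mproj$ and $\mi-\mproj$ supplied by Lemma \ref{lem:Projection_property}.

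First, I would invoke Lemma \ref{lem:Newton_step} to rewrite $\mx^{-1}\delta_{x}=(\mi-\mproj)(\delta_{\mu}/\mu)$ and $\ms^{-1}\delta_{s}=\mproj(\delta_{\mu}/\mu)$. Then for any $v\in\Rn$, since every coordinate satisfies $\mu_{i}\geq\min_{j}\mu_{j}$, one has
\[
\|v\|_{\infty}^{2}\leq\|v\|_{2}^{2}=\sum_{i}v_{i}^{2}\leq\frac{1}{\min_{i}\mu_{i}}\sum_{i}\mu_{i}v_{i}^{2}=\frac{1}{\min_{i}\mu_{i}}\|v\|_{\mu}^{2}.
\]
Applying this with $v=\mx^{-1}\delta_{x}$ and then Lemma \ref{lem:Projection_property} to the projection factor yields
\[
\|\mx^{-1}\delta_{x}\|_{\infty}^{2}\leq\frac{1}{\min_{i}\mu_{i}}\|(\mi-\mproj)(\delta_{\mu}/\mu)\|_{\mu}^{2}\leq\frac{1}{\min_{i}\mu_{i}}\|\delta_{\mu}/\mu\|_{\mu}^{2},
\]
and the identical argument with $\mproj$ in place of $\mi-\mproj$ handles $\ms^{-1}\delta_{s}$.

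For the feasibility statement, the assumption $\|\delta_{\mu}/\mu\|_{\mu}^{2}<\min_{i}\mu_{i}$ gives $\|\mx^{-1}\delta_{x}\|_{\infty}<1$ and $\|\ms^{-1}\delta_{s}\|_{\infty}<1$; coordinatewise this means $|(\delta_{x})_{i}|<x_{i}$ and $|(\delta_{s})_{i}|<s_{i}$, so $x+\delta_{x}>0$ and $s+\delta_{s}>0$. The linear equations $\ma\delta_{x}=0$ and $\ma^{\top}\delta_{y}+\delta_{s}=0$ from system (\ref{eq:LP_newton_step}) then place $(x+\delta_{x},s+\delta_{s})$ in $\PriR^{\circ}\times\DualR^{\circ}$.

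I do not expect a real obstacle here: once the right-hand sides of Lemma \ref{lem:Newton_step} are written in the form $(\mi-\mproj)(\delta_{\mu}/\mu)$ and $\mproj(\delta_{\mu}/\mu)$, the proof reduces to the two-line norm comparison above together with the already-established operator bounds. The only subtle point worth flagging is that $\|\cdot\|_{\mu}$ weights coordinate $i$ by $\mu_{i}$ (not by $1/\mu_{i}$), which is exactly why the lower bound $\min_{i}\mu_{i}$ rather than its reciprocal appears in the feasibility threshold.
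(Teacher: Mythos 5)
Your proposal is correct and follows essentially the same argument as the paper: express the steps via the projection from Lemma \ref{lem:Newton_step}, bound the $\ell_\infty$ norm by the $\mu$-weighted norm using $\min_i\mu_i$, apply the contraction bound from Lemma \ref{lem:Projection_property}, and conclude feasibility from $\|\mx^{-1}\delta_x\|_\infty<1$ and $\|\ms^{-1}\delta_s\|_\infty<1$. The only cosmetic difference is that you pass through $\|v\|_2$ on the way from $\|v\|_\infty$ to $\|v\|_\mu$, whereas the paper compares $\|v\|_\infty$ to $\|v\|_\mu$ directly; both give the same constant.
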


\begin{proof}
For $\|\mx^{-1}\delta_{x}\|_{\infty}$, we have $\min_{i}\mu_{i}\|\mx^{-1}\delta_{x}\|_{\infty}^{2}\leq\|\mx^{-1}\delta_{x}\|_{\mu}^{2}$
and hence
\[
\|\mx^{-1}\delta_{x}\|_{\infty}^{2}\leq\frac{1}{\min_{i}\mu_{i}}\|\mx^{-1}\delta_{x}\|_{\mu}^{2}=\frac{1}{\min_{i}\mu_{i}}\|(\mi-\mproj)(\delta_{\mu}/\mu)\|_{\mu}^{2}\leq\frac{1}{\min_{i}\mu_{i}}\|\delta_{\mu}/\mu\|_{\mu}^{2}.
\]
The proof for $\|\ms^{-1}\delta_{s}\|_{\infty}$ is similar.

Hence, if $\|\delta_{\mu}/\mu\|_{\mu}^{2}<\min_{i}\mu_{i}$, we have
that $\|\mx^{-1}\delta_{x}\|_{\infty}<1$ and $\|\ms^{-1}\delta_{s}\|_{\infty}<1$,
i.e., $|\delta_{x,i}|<|x_{i}|$ and $|\delta_{s,i}|<|s_{i}|$ for
all $i$. Therefore, $x+\delta_{x}>0$ and $s+\delta_{s}>0$ are feasible.
\end{proof}
To decrease $\mu$ uniformly, we set $\delta_{\mu}=-h\mu$ for some
step size $h$. To ensure the feasibility, we need $\|\delta_{\mu}/\mu\|_{\mu}^{2}\leq\min_{i}\mu_{i}$
and this gives the maximum step size
\begin{equation}
h=\sqrt{\frac{\min_{i}\mu_{i}}{\sum_{i}\mu_{i}}}.\label{eq:max_h_formula}
\end{equation}
Note that the above quantity is maximized at $h=n^{-1/2}$ when $\mu$
has all equal coordinates.

\subsection{Staying within small $\ell_{2}$ distance}

Since the step size (\ref{eq:max_h_formula}) maximizes when $\mu$
is a constant vector. A natural approach is to keep $\mu$ as a vector
close in $\ell_{2}$ norm to a multiple of the all-ones vector. This
motivates the following algorithm:

\begin{algorithm2e}[H]

\caption{$\mathtt{L2Step}(\ma,x,s,t_{\mathrm{start}},t_{\mathrm{end}})$}

\label{alg:L2Step}

\textbf{Define} $\mproj_{x,s}=\ms^{-1}\ma^{\top}(\ma\ms^{-1}\mx\ma^{\top})^{-1}\ma\mx$.

\textbf{Invariant:} $(x,s)\in\PriR^{\circ}\times\DualR^{\circ}$ and
$\|xs-t\|_{2}\leq\frac{t}{4}$. 

Let $t=t_{\mathrm{start}}$, $h=1/(16\sqrt{n})$ and $n$ is the number
of columns in $\ma$.

\Repeat{$t\neq t_{\mathrm{end}}$}{

Let $t'=\max(t/(1+h),t_{\mathrm{end}})$.

Let $\mu=xs$ and $\delta_{\mu}=t'-\mu$.

Let $\delta_{x}=\mx(\mi-\mproj_{x,s})(\delta_{\mu}/\mu)$ and $\delta_{s}=\ms\mproj_{x,s}(\delta_{\mu}/\mu)$.

Set $x\leftarrow x+\delta_{x}$, $s\leftarrow s+\delta_{s}$ and $t\leftarrow t'$.

}

\textbf{Return} $(x,s)$.

\end{algorithm2e}

Note that the algorithm requires some initial point $(x,s)$ close
to the central path and we will show how to get this in the next section
(by changing the linear program temporarily).

First, we show that the invariant is maintained in each step. The
conclusion distance less than $t/6$ is needed in Section \ref{sec:IPM_interior}
where we call $\mathtt{L2Step}$ on a modified LP, then prove the
result is close to central path for the original LP.
\begin{lem}
\label{lem:L2Step}Suppose that the input satisfies $(x,s)\in\PriR^{\circ}\times\DualR^{\circ}$
and $\|xs-t_{\mathrm{start}}\|_{2}\leq\frac{t_{\mathrm{start}}}{4}$.
Then, the algorithm $\mathtt{L2Step}$ maintains $(x,s)\in\PriR^{\circ}\times\DualR^{\circ}$
and $t>0$ such that $\|xs-t\|_{2}\leq\frac{t}{6}$.
\end{lem}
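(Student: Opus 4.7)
The plan is to prove by induction on iterations that the invariant $(x,s)\in\PriR^{\circ}\times\DualR^{\circ}$ with $\|xs-t\|_{2}\leq t/4$ is preserved, and moreover after every iteration the tighter bound $\|xs-t\|_{2}\leq t/6$ holds. So assume at the start of an iteration $(x,s)\in\PriR^{\circ}\times\DualR^{\circ}$ and $\|xs-t\|_{2}\leq t/4$; I will show that after the update $(x',s')\defeq(x+\delta_x,s+\delta_s)\in\PriR^{\circ}\times\DualR^{\circ}$ and $\|x's'-t'\|_{2}\leq t'/6$. Since $t'/6\leq t'/4$, this restores the hypothesis needed by the next iteration.

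The key algebraic identity is that by construction $\ms\delta_x+\mx\delta_s=\delta_\mu=t'-xs$, so
\[
x's'=xs+\ms\delta_x+\mx\delta_s+\delta_x\delta_s=t'+\delta_x\delta_s,
\]
and hence $\|x's'-t'\|_2=\|\delta_x\delta_s\|_2$. The whole problem thus reduces to bounding this second-order term. Next I would establish the quadratic Newton bound $\|\delta_x\delta_s\|_2\leq\|\delta_\mu/\mu\|_\mu^{2}$. Writing $u=\mx^{-1}\delta_x$ and $v=\ms^{-1}\delta_s$, each coordinate is $(\delta_x\delta_s)_i=\mu_i u_i v_i$, so
\[
\|\delta_x\delta_s\|_2^{2}=\sum_i(\mu_i u_i^{2})(\mu_i v_i^{2})\leq\Bigl(\sum_i\mu_i u_i^{2}\Bigr)\Bigl(\max_i\mu_i v_i^{2}\Bigr)\leq\|u\|_\mu^{2}\|v\|_\mu^{2}.
\]
Lemmas \ref{lem:Newton_step} and \ref{lem:Projection_property} give $\|u\|_\mu=\|(\mi-\mproj)(\delta_\mu/\mu)\|_\mu\leq\|\delta_\mu/\mu\|_\mu$ and similarly $\|v\|_\mu\leq\|\delta_\mu/\mu\|_\mu$, proving the claim.

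Finally I would plug in the hypothesis. The bound $\|xs-t\|_2\leq t/4$ forces $\mu_i\in[3t/4,5t/4]$ coordinatewise, so $1/\mu_i\leq 4/(3t)$ and
\[
\|\delta_\mu/\mu\|_\mu^{2}\leq\frac{4}{3t}\|t'-xs\|_2^{2}\leq\frac{4}{3t}\bigl(\sqrt n\,|t-t'|+t/4\bigr)^{2}.
\]
With $h=1/(16\sqrt n)$ and $t'\in[t/(1+h),t]$, one has $\sqrt n\,|t-t'|\leq\sqrt n\cdot th=t/16$, so $\|\delta_\mu/\mu\|_\mu^{2}\leq\frac{4}{3t}(5t/16)^{2}=25t/192$. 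This is strictly less than $3t/4\leq\min_i\mu_i$, so Lemma \ref{lem:step_size} certifies $(x',s')\in\PriR^{\circ}\times\DualR^{\circ}$. Using $t'\geq t/(1+h)\geq 16t/17$, an arithmetic check shows $t'/6\geq 16t/102>25t/192\geq\|\delta_x\delta_s\|_2$, which gives the required $\|x's'-t'\|_2\leq t'/6$.

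The main obstacle is the quadratic estimate $\|\delta_x\delta_s\|_2\leq\|\delta_\mu/\mu\|_\mu^{2}$; everything else is bookkeeping in an $\ell_2$ neighborhood of the central path. This quadratic bound is the classical Newton-step improvement and relies crucially on the operator-norm control of both $\mproj$ and $\mi-\mproj$ in the $\|\cdot\|_\mu$ norm established in Lemma \ref{lem:Projection_property}.
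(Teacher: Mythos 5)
Your proof is correct and follows essentially the same approach as the paper: the identity $x's'-t'=\delta_x\delta_s$, the quadratic Newton bound $\|\delta_x\delta_s\|_2\leq\|\delta_\mu/\mu\|_\mu^2$ via Lemmas \ref{lem:Newton_step} and \ref{lem:Projection_property}, the feasibility check through Lemma \ref{lem:step_size}, and then arithmetic. The only (cosmetic) difference is in the final bookkeeping for $\|\delta_\mu/\mu\|_\mu$: you pass to the unweighted $\ell_2$ norm right away via $\mu_i\geq 3t/4$ and then apply the triangle inequality to $\|t'-xs\|_2$, whereas the paper first decomposes $t'-\mu=\frac{t'}{t}(t-\mu)+(\frac{t'}{t}-1)\mu$ and bounds each piece in the $\mu$-weighted norm; both yield $\|\delta_\mu/\mu\|_\mu^2\approx 0.13\,t$, comfortably under $t'/6$ and under $\min_i\mu_i$.
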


\begin{proof}
We prove by induction that $\|xs-t\|_{2}\leq\frac{t}{6}$ after each
step. Note that the input satisfies $\|xs-t\|_{2}\leq\frac{t}{4}$.

Let $x'=x+\delta_{x}$, $s'=s+\delta_{s}$ and $t'$ defined in the
algorithm. Note that
\[
x's'-t'=(x+\delta_{x})(s+\delta_{s})-t'=\mu+\ms\delta_{x}+\mx\delta_{s}+\delta_{x}\delta_{s}-t'.
\]
Lemma \ref{lem:Newton_step} shows that $\ms\delta_{x}+\mx\delta_{s}=t'-\mu$.
Hence, we have 
\[
x's'-t'=\delta_{x}\delta_{s}=\mx^{-1}\delta_{x}\cdot\ms^{-1}\delta_{s}\cdot\mu.
\]
Using this, we have
\[
\|x's'-t'\|_{2}\leq\|\mu^{1/2}\mx^{-1}\delta_{x}\|_{2}\|\mu^{1/2}\ms^{-1}\delta_{s}\|_{2}=\|\mx^{-1}\delta_{x}\|_{\mu}\|\ms^{-1}\delta_{s}\|_{\mu}\leq\|\delta_{\mu}/\mu\|_{\mu}^{2}.
\]
where we used Lemma \ref{lem:Projection_property} at the end.

Using $t'-\mu=\frac{t'}{t}(t-\mu)+(\frac{t'}{t}-1)\mu$, we have
\[
\|\delta_{\mu}/\mu\|_{\mu}=\|\frac{t'}{t}\frac{t-\mu}{\mu}+(\frac{t'}{t}-1)\|_{\mu}\leq\frac{t'}{t}\|xs-t\|_{\mu^{-1}}+\|\frac{t'}{t}-1\|_{\mu}.
\]
Since $\|\mu-t\|_{2}\leq\frac{t}{4}$, we have $\min_{i}\mu_{i}\geq\frac{3t}{4}$
and $\max_{i}\mu_{i}\leq\frac{5}{4}t$. Using $|\frac{t'}{t}-1|\leq h=\frac{1}{16\sqrt{n}}$,
we have
\[
\|\delta_{\mu}/\mu\|_{\mu}\leq\frac{t'}{t}\sqrt{\frac{4}{3t}}\|xs-t\|_{2}+h\sqrt{\frac{5}{4}tn}\leq\sqrt{\frac{t}{12}}+h\sqrt{\frac{5}{4}tn}\leq0.38\sqrt{t}.
\]
Hence, we have $\|x's'-t'\|_{2}\leq\|\delta_{\mu}/\mu\|_{\mu}^{2}\leq0.15t\leq t'/6$.
Furthermore, $\|\delta_{\mu}/\mu\|_{\mu}^{2}<\min_{i}\mu_{i}$ which
implies $(x,s)$ is feasible (Lemma \ref{lem:step_size}).
\end{proof}
Note that the lemma above only concludes the output is close to central
path. To upper bound the error, we can apply Lemma \ref{lem:duality_gap}
which shows the duality gap is equal to $x^{\top}s$.

\subsection{Solving LP Approximately and Exactly\label{subsec:SolveLP_apx}}

Here we discuss how to get a feasible interior point close to the
central path by modifying the linear program. The runtime of interior
point method depends on how degenerate the linear program is.
\begin{defn}
\label{def:LP_para}We define the following parameters for the linear
program $\min_{\ma x=b,x\geq0}c^{\top}x$:
\begin{enumerate}
\item Inner radius $r$: There exists a $x\in\PriR$ such that $x_{i}\geq r$
for all $i\in[n]$.
\item Outer radius $R$: For any $x\geq0$ with $\ma x=b$, we have that
$\|x\|_{2}\leq R$.
\item Lipschitz constant $L$: $\|c\|_{2}\leq L$.
\end{enumerate}
\end{defn}

Since $\mathtt{L2Step}$ requires a feasible point near the central
path, we modify the linear program to make it happen. To satisfy the
constraint $\ma x=b$, we start the algorithm by taking a least square
solution of the constraint $\ma x=b$. Since it can be negative, we
write the variable $x=x^{+}-x^{-}$ with both $x^{+},x^{-}\geq0$.
We put a large cost vector on $x^{-}$ to ensure the solution is roughly
the same. The crux of the proof is that if we optimize this new program
well enough, we will have $x^{+}-x^{-}>0$ and hence $x^{+}-x^{-}$
gives a good starting point of the original program. Due to technical
reasons, we need to put an extra constraint $1^{\top}x^{+}\leq\Lambda$
for some $\Lambda$ to ensure the problem is bounded. The precise
formulation of the modified linear program is as follows:
\begin{defn}[Modified Linear Program]
\label{def:IPM_interior_modified}Consider a linear program $\min_{\ma x=b,x\geq0}c^{\top}x$
with inner radius $r$, outer radius $R$ and Lipschitz constant $L$.
For any $\overline{R}\geq10R$, $t\geq8L\overline{R}$, we define
the modified primal linear program by 
\[
\min_{(x^{+},x^{-},x^{\theta})\in\mathcal{P}_{\overline{R},t}}c^{\top}x^{+}+\widetilde{c}^{\top}x^{-}
\]
where 
\[
\mathcal{P}_{\overline{R},t}=\{(x^{+},x^{-},x^{\theta})\in\R_{\geq0}^{2n+1}:\ma(x^{+}-x^{-})=b,\sum_{i=1}^{n}x_{i}^{+}+x^{\theta}=\widetilde{b}\}
\]
with $x_{c}^{+}=\frac{t}{c+t/\overline{R}},x_{c}^{-}=x_{c}^{+}-\ma^{\top}(\ma\ma^{\top})^{-1}b,\widetilde{c}=t/x_{c}^{-}$,
$\widetilde{b}=\sum_{i}x_{c,i}^{+}+\overline{R}$. We define the corresponding
dual polytope by
\[
\mathcal{D}_{\overline{R},t}=\{(s^{+},s^{-},s^{\theta})\in\R_{\geq0}^{2n+1}:\ma^{\top}y+\lambda\mathbf{1}+s^{+}=c,-\ma^{\top}y+s^{-}=\widetilde{c},\lambda+s^{\theta}=0\text{ for some }y\in\R^{d}\text{ and }\lambda\in\R\}.
\]
\end{defn}

The main result about the modified program is the following.
\begin{thm}
\label{thm:IPM_interior}Given a linear program $\min_{\ma x=b,x\in\R_{\geq}^{n}}c^{\top}x$
with inner radius $r$, outer radius $R$ and Lipschitz constant $L$.
For any $0\leq\epsilon\leq\frac{1}{2}$, the modified linear program
(Definition \ref{def:IPM_interior_modified}) with $\overline{R}=\frac{5}{\epsilon}R,t=2^{16}\epsilon^{-3}n^{2}\frac{R}{r}\cdot LR$
has the following properties:
\begin{itemize}
\item The point $(x_{c}^{+},x_{c}^{-},\overline{R})$ is on the central
path of the modified program at $t$.
\item For any primal $x\defeq(x^{+},x^{-},x^{\theta})\in\mathcal{P}_{\overline{R},t}$
and dual $s\defeq(s^{+},s^{-},s^{\theta})\in\mathcal{D}_{\overline{R},t}$
such that $\frac{5}{6}LR\leq x_{i}s_{i}\leq\frac{7}{6}LR$, we have
that
\[
(x^{+}-x^{-},s^{+}-s^{\theta})\in\PriR\times\DualR
\]
and that $x_{i}^{-}\leq\epsilon x_{i}^{+}$ and $s^{\theta}\leq\epsilon s_{i}^{+}$
for all $i$.
\end{itemize}
\end{thm}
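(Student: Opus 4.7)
The plan is to split the proof into two parts matching the theorem.

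For the central-path claim, I would exhibit an explicit dual witness: set $y_c = 0$ and $\lambda_c = -t/\overline R$, so that the corresponding slacks are $s_c^+ = c + (t/\overline R)\mathbf{1}$, $s_c^- = \widetilde c$, $s_c^\theta = t/\overline R$. Since $t \geq 8L\overline R$ forces $t/\overline R \geq 8L \geq \|c\|_\infty$, $s_c^+>0$ coordinatewise; positivity of $s_c^-$ and $s_c^\theta$ is immediate. Primal and dual feasibility follow directly from the defining relations of $x_c^-$, $\widetilde b$, and $\widetilde c$, and complementarity $x_{c,i}^+ s_{c,i}^+ = x_{c,i}^- s_{c,i}^- = \overline R\cdot s_c^\theta = t$ holds by plugging in. This part is routine.

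For the feasibility claim the key tool is the orthogonality identity $(x-\bar x)^\top(s-\bar s)=0$, valid for any two primal-feasible $x, \bar x$ and dual-feasible $s, \bar s$ of the modified LP, since their differences lie in complementary linear subspaces. I would use a single primal reference $\bar x = (x^*, 0, \widetilde b - \sum_i x^*_i)$ built from an interior point $x^* \in \PriR$ with $x^*_i \geq r$, together with two dual references: $\bar s^{(1)}$, any dual-optimal solution of the modified LP, and $\bar s^{(2)} = (c+L\mathbf 1, \widetilde c, L)$, coming from $(y,\lambda) = (0,-L)$. Using $(x^*_{LP}, 0, \cdot)$ as a feasible primal shows that the modified LP's value equals $c^\top x^*_{LP}$, so strong duality gives $\bar x^\top \bar s^{(1)} = c^\top x^* - c^\top x^*_{LP} \in [0, 2LR]$. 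A direct expansion gives $\bar x^\top \bar s^{(2)} = c^\top x^* + L\widetilde b = O(nLR/\epsilon)$.

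Applying the identity with $\bar s^{(1)}$ yields $\bar x^\top s \leq x^\top s + \bar x^\top \bar s^{(1)} = O(nLR)$. Expanding $\bar x^\top s = \sum x^*_i s^+_i + (\widetilde b - \sum x^*_i) s^\theta$ and using $x^*_i \geq r$ and $\widetilde b - \sum x^*_i = \Omega(n\overline R)$, I extract $\sum_i s^+_i = O(nLR/r)$ (hence $x^+_i \geq \Omega(r/n)$ by complementarity) and $s^\theta = O(L\epsilon)$. Applying the identity with $\bar s^{(2)}$ gives $x^\top \bar s^{(2)} = O(nLR/\epsilon)$; since $\bar s^{(2)-} = \widetilde c$, this forces $\widetilde c^\top x^- = O(nLR/\epsilon)$, and combined with $\widetilde c_i = t/x_{c,i}^- = \Omega(\epsilon^{-2} n^2 LR/r)$ (built into the choice $t = 2^{16}\epsilon^{-3}n^2(R/r)LR$), this yields $x^-_i = O(r\epsilon/n)$. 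The ratio $x^-_i/x^+_i = O(\epsilon)$ can be made $\leq \epsilon$ using the $2^{16}$ slack in $t$.

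Once $x^-_i \leq \epsilon x^+_i$ is in hand, $x^+-x^- \geq (1-\epsilon)x^+ \geq 0$ and solves $\ma(x^+-x^-)=b$, so it lies in $\PriR$; hence $\|x^+-x^-\|_2 \leq R$ and $\|x^+\|_2 \leq R + \|x^-\|_2 \leq 2R$, giving $x^+_i \leq 2R$ and, by complementarity, $s^+_i \geq \Omega(L)$. Combined with $s^\theta = O(L\epsilon)$, this gives $s^\theta/s^+_i = O(\epsilon) \leq \epsilon$. Finally $s^+ - s^\theta\mathbf 1 \geq 0$ and satisfies $\ma^\top y + (s^+ - s^\theta\mathbf 1) = c$, so it lies in $\DualR$. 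The hardest part will be the careful bookkeeping of constants: crucially, using the LP-optimum-based reference $\bar s^{(1)}$ rather than a crude dual point is what avoids a stray $1/\epsilon$ factor in the $s^+$ bound; without it, one would only get $x^-_i = O(1)\cdot x^+_i$, which is useless.
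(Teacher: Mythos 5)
Your approach to the second bullet is genuinely different from the paper's. The paper (Appendix~\ref{sec:IPM_interior}) uses the variational characterization of the near-central point: since $x$ minimizes $f_\mu$ over $\mathcal{P}_{\overline{R},t}$, the directional derivative $\frac{d}{dt}f(p(t))\big|_{t=0}$ toward a carefully built reference $(v^+,v^-,v^\theta)$ (with $v^- = \min(x^-,\frac{8L\overline R}{t}R)$, a choice that depends on $x$ itself) is nonnegative, and the three terms of this derivative are bounded one by one. You instead use the orthogonality identity $(x-\bar x)^\top(s-\bar s)=0$ with a single, $x$-independent primal reference $\bar x=(x^*,0,\widetilde b-\sum_i x_i^*)$ and two dual references. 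This is a cleaner framing; the two arguments are cousins (the paper's $\nabla f_\mu(x) = c - \mu/x = \overline{c} - s$, so the directional derivative is really a disguised duality computation), but yours separates the roles of the primal and dual bounds more transparently and delivers slightly sharper lower bounds on $\min_i x_i^+$.

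There is, however, one genuine gap. You assert that exhibiting $(x^*_{LP},0,\cdot)$ as a feasible primal shows that the modified LP's optimal value \emph{equals} $c^\top x^*_{LP}$. That argument only shows $\mathrm{OPT}_{\mathrm{mod}} \leq c^\top x^*_{LP}$. The reverse inequality is not free: a feasible point of $\mathcal P_{\overline R,t}$ need not have $x^+-x^-\geq 0$, so you cannot directly compare against the original LP's optimum. Without the equality, the only obvious lower bound is $\mathrm{OPT}_{\mathrm{mod}} \geq -L\widetilde b = -\Theta(nLR/\epsilon)$, which makes $\bar x^\top \bar s^{(1)} = O(nLR/\epsilon)$, and then the $s^\theta$ bound you extract degrades to $O(L)$ rather than $O(L\epsilon)$, killing the conclusion $s^\theta \leq \epsilon s_i^+$. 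This is exactly the stray $1/\epsilon$ factor you were trying to avoid by using $\bar s^{(1)}$, except it now enters through $\mathrm{OPT}_{\mathrm{mod}}$ instead. The missing step is to prove $\mathrm{OPT}_{\mathrm{mod}} \geq c^\top x^*_{LP}$ by lifting an optimal dual $y^*$ of the \emph{original} LP to a feasible dual of the modified LP via $(y,\lambda)=(y^*,0)$; the only nontrivial constraint is $-\ma^\top y^* \leq \widetilde c$, i.e.\ $s^*\leq c+\widetilde c$, and to check this you first need the bound $\|s^*\|_\infty \leq \|s^*\|_1 \leq (x^*)^\top s^*/r \leq 2LR/r$ (duality gap against the interior point $x^*\geq r\mathbf{1}$), which is comfortably below $\widetilde c_i \geq t/(2\overline R) = \Theta(\epsilon^{-2}n^2 LR/r)$. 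Once this is supplied, your chain of estimates (including the $\bar s^{(2)}$ step, which is fine as written) goes through.
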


\begin{proof}
Since the proof is not illuminating, we defer it to Appendix \ref{sec:IPM_interior}
(Lemma \ref{lem:ipm_interior_modified} and Lemma \ref{lemma:IPM_center_original}).
\end{proof}
Now we state our main algorithm.

\begin{algorithm2e}[H]

\caption{$\mathtt{SlowSolveLP}(\ma,b,c,x^{(0)},\delta)$}

\label{alg:SlowSolveLP}

\textbf{Assumption: the linear program has inner radius $r$, outer
radius $R$ and Lipschitz constant $L$.}

Let $\epsilon=1/(100\sqrt{n})$, $\overline{R}=\frac{5}{\epsilon}R$,
$t=2^{16}\epsilon^{-3}n^{2}\frac{R}{r}\cdot LR$. 

\tcp{Define the modified program $\min_{\oma x=\overline{b}}\overline{c}^{\top}x$
by Definition \ref{def:IPM_interior_modified} with parameters $\overline{R}$
and $t$.}

Let $\oma=\left[\begin{array}{ccc}
\ma & -\ma & 0\\
1 & 0 & 1
\end{array}\right]$, $\overline{c}=(c,\widetilde{c})$, $\overline{b}=(b,\widetilde{b})$
where $\widetilde{c}$ and $\widetilde{b}$ are defined in Definition
\ref{def:IPM_interior_modified}.

\tcp{Write down the central path at $t$ for modified linear program
using Lemma \ref{lem:ipm_interior_modified}.}

$\overline{x}=(x_{c}^{+},x_{c}^{-},\overline{R})$. $\overline{s}=x/t$.

$(\overline{x},\overline{s})=\mathtt{L2Step}(\oma,\overline{x},\overline{s},t,LR)$.

$(x,s)=(x^{+}-x^{-},s^{+}-s^{\theta})$ where $\overline{x}=(x^{+},x^{-},x^{\theta})$
and $\overline{s}=(s^{+},s^{-},s^{\theta})$.

$(x_{\mathrm{end}},s_{\mathrm{end}})=\mathtt{L2Step}(\ma,x^{+}-x^{-},s^{+}-s^{\theta},LR,t_{\mathrm{end}})$
with $t_{\mathrm{end}}=\delta LR/(2n)$.

\textbf{Return} $x_{\mathrm{end}}$.

\end{algorithm2e}
\begin{thm}
\label{thm:SlowSolveLP}Consider a linear program $\min_{\ma x=b,x\geq0}c^{\top}x$
with $n$ variables and $d$ constraints. Assume the linear program
has inner radius $r$, outer radius $R$ and Lipschitz constant $L$
(see Definition \ref{def:LP_para}). Then, $\mathtt{SlowSolveLP}$
outputs $x$ such that
\begin{align*}
c^{\top}x & \leq\min_{\ma x=b,x\geq0}c^{\top}x+\delta LR,\\
\ma x & =b,\\
x & \geq0.
\end{align*}
The algorithm takes $O(\sqrt{n}\log(nR/(\delta r)))$ Newton steps
(defined in (\ref{eq:LP_newton_step})).

If we further assume that the solution $x^{*}=\arg\min_{\ma x=b,x\geq0}c^{\top}x$
is unique and that $c^{\top}x\geq c^{\top}x^{*}+\eta LR$ for any
other vertex $x$ of $\{\ma x=b,x\geq0\}$ for some $\eta>\delta\geq0$,
then we have that $\|x-x^{*}\|_{2}\leq\frac{2\delta R}{\eta}$.
\end{thm}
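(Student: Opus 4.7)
The plan is to analyze the two calls to $\mathtt{L2Step}$ separately and then glue them together. First, by the first bullet of Theorem \ref{thm:IPM_interior} the initial pair $((x_c^+, x_c^-, \overline{R}), \overline{x}/t)$ lies exactly on the central path of the modified LP at parameter $t$, so it trivially satisfies the $\|\cdot\|_2 \leq t/4$ invariant of $\mathtt{L2Step}$. Lemma \ref{lem:L2Step} then certifies that $\mathtt{L2Step}(\oma, \overline{x}, \overline{s}, t, LR)$ terminates with $\|\overline{x}\overline{s} - LR\mathbf{1}\|_2 \leq LR/6$, so coordinatewise $\overline{x}_i \overline{s}_i \in [\tfrac{5}{6}LR, \tfrac{7}{6}LR]$. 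This is precisely the hypothesis of the second bullet of Theorem \ref{thm:IPM_interior}, which yields that $(x, s) \defeq (x^+ - x^-, s^+ - s^\theta)$ is feasible for the original primal--dual pair and that $x_i^- \leq \epsilon x_i^+$, $s^\theta \leq \epsilon s_i^+$ pointwise.

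The main obstacle is the handoff: showing that $(x, s)$ lies within $\ell_2$ distance $LR/4$ of the original central path at parameter $LR$, so the second $\mathtt{L2Step}$ can legally start. I would expand
\[
x_i s_i = x_i^+ s_i^+ - x_i^+ s^\theta - x_i^- s_i^+ + x_i^- s^\theta
\]
and use the pointwise bounds with $\epsilon = 1/(100\sqrt{n})$: each cross-term is $O(\epsilon)\cdot x_i^+ s_i^+ \lesssim \epsilon LR$, so their total $\ell_2$ contribution summed over the $n$ coordinates is $O(\epsilon \sqrt n)\cdot LR = O(LR/100)$; adding this to $\|x^+ s^+ - LR\mathbf{1}\|_2 \leq LR/6$ keeps us under $LR/4$. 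The specific choice $\epsilon = \Theta(1/\sqrt{n})$ is calibrated precisely for this aggregation. Once the invariant holds, Lemma \ref{lem:L2Step} applied to $\mathtt{L2Step}(\ma, x, s, LR, t_{\mathrm{end}})$ yields $\|x_{\mathrm{end}} s_{\mathrm{end}} - t_{\mathrm{end}}\mathbf{1}\|_2 \leq t_{\mathrm{end}}/6$, so
\[
x_{\mathrm{end}}^\top s_{\mathrm{end}} = \mathbf{1}^\top (x_{\mathrm{end}} s_{\mathrm{end}}) \leq n t_{\mathrm{end}} + \sqrt{n}\cdot t_{\mathrm{end}}/6 \leq 2 n t_{\mathrm{end}} = \delta LR,
\]
and Lemma \ref{lem:duality_gap} converts this to the claimed primal-objective bound. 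The iteration count is routine: each Newton step shrinks $t$ by a factor $1 + h = 1 + \Theta(n^{-1/2})$, so both phases together use $O(\sqrt{n}\log(t_{\mathrm{start}}/t_{\mathrm{end}}))$ steps; substituting $t_{\mathrm{start}}/t_{\mathrm{end}} = \mathrm{poly}(n, R/r, 1/\delta)$ yields the claimed $O(\sqrt{n}\log(nR/(\delta r)))$.

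For the final uniqueness claim, I would decompose $x_{\mathrm{end}} = \sum_v \lambda_v v$ as a convex combination of vertices of $\{\ma x = b, x \geq 0\}$. The vertex-gap hypothesis gives $c^\top v \geq c^\top x^* + \eta LR$ for every $v \neq x^*$, so $c^\top x_{\mathrm{end}} - c^\top x^* \geq \eta LR(1 - \lambda_{x^*})$; combined with the approximation bound $c^\top x_{\mathrm{end}} - c^\top x^* \leq \delta LR$, this forces $1 - \lambda_{x^*} \leq \delta/\eta$. The outer radius gives $\|v - x^*\|_2 \leq 2R$ for every vertex $v$, so the triangle inequality yields $\|x_{\mathrm{end}} - x^*\|_2 \leq 2R(1 - \lambda_{x^*}) \leq 2\delta R/\eta$, completing the argument.
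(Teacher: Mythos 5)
Your proof is correct and follows essentially the same route as the paper's for the main approximation and iteration-count claims: first-bullet of Theorem \ref{thm:IPM_interior} to seed the first $\mathtt{L2Step}$, then Lemma \ref{lem:L2Step} plus the second bullet of Theorem \ref{thm:IPM_interior} to certify the handoff, then Lemma \ref{lem:duality_gap} to finish. Your handoff estimate (expanding $x_is_i = x_i^+s_i^+ - x_i^+ s^\theta - x_i^- s_i^+ + x_i^- s^\theta$ and aggregating $O(\epsilon LR)$ pointwise errors into $O(\epsilon\sqrt{n})LR$ in $\ell_2$) is exactly the calculation the paper compresses into the phrase $x = (1\pm\epsilon)x^+$, $s = (1\pm\epsilon)s^+$.

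The one place you genuinely diverge is the uniqueness bound at the end. The paper defines $\mathcal{P}_t = \mathcal{P}\cap\{c^\top x \leq c^\top x^* + tLR\}$, observes that $\mathcal{P}_\eta$ is a cone with apex $x^*$ (since no other vertex has objective within $\eta LR$), and uses the scaling $\mathcal{P}_\delta - x^* = \frac{\delta}{\eta}(\mathcal{P}_\eta - x^*)$ plus the diameter bound $2R$. You instead write $x_{\mathrm{end}}$ as a convex combination $\sum_v \lambda_v v$ of vertices, use the vertex gap to show $c^\top x_{\mathrm{end}} - c^\top x^* \geq (1-\lambda_{x^*})\eta LR$, conclude $1-\lambda_{x^*}\leq\delta/\eta$, and then invoke the triangle inequality with $\|v - x^*\|_2\leq 2R$. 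Both are correct; your version is arguably more elementary and avoids the (slightly delicate) claim that the sublevel set is literally a scaled cone, which requires invoking that the tangent cone is scale-invariant and that no vertices other than $x^*$ lie in $\mathcal{P}_\eta$. The paper's version is more geometric and self-contained in one line once you accept the cone claim. Either is a legitimate way to finish the theorem.
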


\begin{proof}
By Theorem \ref{thm:IPM_interior}, the point $(x_{c}^{+},x_{c}^{-},\overline{R})$
is on the central path of the modified program at $t$. After the
first call of $\mathtt{L2Step}$, Lemma \ref{lem:L2Step} shows that
$\mathtt{L2Step}$ returns $(\overline{x},\overline{s})$ such that
$\|\overline{x}\overline{s}-t\|_{2}\leq\frac{t}{6}$ with $t=LR$.

Theorem \ref{thm:IPM_interior} shows that $(x,s)=(x^{+}-x^{-},s^{+}-s^{\theta})\in\PriR\times\DualR$
and that $x=(1\pm\epsilon)x^{+}$ and $s=(1\pm\epsilon)s^{+}$. Since
$\epsilon=\frac{1}{100\sqrt{n}}$ and $\|x^{+}s^{+}-t\|_{2}\leq\frac{t}{6}$
with $t=LR$, we have that $\|xs-t\|_{2}\leq\frac{t}{4}$. This verifies
the condition for the second call of $\mathtt{L2Step}$.

After the second call of $\mathtt{L2Step}$, Lemma \ref{lem:L2Step}
shows that $\mathtt{L2Step}$ returns $(x_{\mathrm{end}},s_{\mathrm{end}})$
such that $\|x_{\mathrm{end}}s_{\mathrm{end}}-t_{\mathrm{end}}\|_{2}\leq\frac{t}{6}$
with $t_{\mathrm{end}}=\delta LR/(2n)$. Hence, Lemma \ref{lem:duality_gap}
shows that 
\[
c^{\top}x_{\mathrm{end}}\leq\min_{\ma x=b,x\geq0}c^{\top}x+x_{\mathrm{end}}^{\top}s_{\mathrm{end}}\leq\min_{\ma x=b,x\geq0}c^{\top}x+2t_{\mathrm{end}}n\leq\min_{\ma x=b,x\geq0}c^{\top}x+\delta LR.
\]
For the runtime, note that $\mathtt{L2Step}$ decreases $t$ by $1-\Omega(n^{-1/2})$
factor each step. Hence, the first call takes $O(\sqrt{n}\log(nR/r))$
Newton steps and second call takes $O(\sqrt{n}\log(n/\delta))$ Newton
steps.

For the last conclusion, we assume $\delta\leq\eta$ and let $\mathcal{P}_{t}=\mathcal{P}\cap\{c^{\top}x\leq c^{\top}x^{*}+tLR\}$.
Note that $\mathcal{P}_{\eta}$ is a cone at $x^{*}$ (because there
is no vertex except $x^{*}$ with value less than $c^{\top}x^{*}+tLR$).
Hence, we have $\mathcal{P}_{\delta}-x^{*}=\frac{\delta}{\eta}(\mathcal{P}_{\eta}-x^{*})$.
Since $x\in\mathcal{P}_{\delta}$, we have that 
\[
\|x-x^{*}\|_{2}\leq\frac{\delta}{\eta}\text{diameter}(\mathcal{P}_{\eta}-x^{*})\leq\frac{2\delta R}{\eta}.
\]
\end{proof}
If we know the solution of the linear program is integral or rational
with some bound on the number of bits, then getting a solution close
enough to $x^{*}$ allows us to round the solution to an integral
solution. Therefore, the last conclusion of the theorem above gives
us an exact linear program algorithm assuming $\ma,b,c$ are integral
and bounded. The uniqueness assumption can be achieved by perturbing
the cost vector by a random vector (e.g., using the ``isolation''
lemma \cite[Lemma 4]{klivans2001randomness}).
\begin{xca}
Make the perturbation deterministic while preserving solutions.
\end{xca}

\section{Robust Interior Point Method}

To improve the interior point method, one can either improve the number
of steps $\widetilde{O}(\sqrt{n})$ or the cost per step. The first
is a major open problem. In this note, we focus on the latter question.
Recall from (\ref{eq:LP_newton_step}) that the linear system we solve
in each step is of the form
\begin{align}
\ms\delta_{x}+\mx\delta_{s} & =\delta_{\mu},\nonumber \\
\ma\delta_{x} & =0,\nonumber \\
\ma^{\top}\delta_{y}+\delta_{s} & =0.\label{eq:LP_newton_step_2}
\end{align}
In each step, $x,s$ and $\delta_{\mu}$ in the equation above changes
relatively by a vector with bounded $\ell_{2}$ norm. So, only few
coordinates change a lot in each step. To take advantage of this,
the robust interior point method contains two new components: 1) Analyze
the convergence when we only solve the linear system approximately
(Section \ref{subsec:IPM_cosh_step}). 2) Show how to maintain the
solution throughout the iteration (Section \ref{subsec:select_xsr}
and Section \ref{subsec:IPM_Inverse_Maintenance}).

\subsection{Staying within small $\ell_{\infty}$ distance\label{subsec:IPM_cosh_step}}

In the above description and analysis, we assumed that we computed
each step of the interior point method precisely. But one can imagine
that it suffices to compute steps approximately since our goal is
only to stay close to the central path. This could have significant
computational advantages.

To make the interior point method robust to noise in the updates to
$x$ and $s$, we need the method to work under a larger neighborhood
than that given by the Euclidean norm ($\|xs-t\|_{2}\leq\frac{t}{4}$).
We cannot increase the radius of the $\ell_{2}$ ball because we need
the neighborhood to lie strictly inside the feasible region. One natural
alternative choice of distance and potential would be a higher norm,
$\|xs-t\|_{q}^{q}$. However, analyzing the step $\delta_{\mu}$ that
minimizes $\|\mu+\delta_{\mu}-t\|_{q}^{q}$ involves many cases. Instead,
we use the potential
\begin{equation}
\Phi(r)=\sum_{i=1}^{n}\cosh(\lambda r_{i})=\sum_{i=1}^{n}\frac{(e^{\lambda r_{i}}+e^{-\lambda r_{i}})}{2}.\label{eq:Phi_potential}
\end{equation}
with $r=\frac{xs-t}{t}$ for some scalar $\lambda=\Theta(\log n)$.
This potential induces the following algorithm where each step of
the algorithm takes the step $\delta_{\mu}\approx-c\nabla\Phi(\frac{xs-t}{t})$.

\begin{algorithm2e}[H]

\caption{$\mathtt{RobustStep}(\ma,x,s,t_{\mathrm{start}},t_{\mathrm{end}})$}

\label{alg:RobustStep}

\textbf{Define} $\Phi(r)$ and $r$ according to (\ref{eq:Phi_potential})
with $\lambda=16\log40n$.

\textbf{Invariant:} $(x,s)\in\PriR^{\circ}\times\DualR^{\circ}$ and
$\Phi(r)\leq16n$.

Let $t=t_{\mathrm{start}}$, $h=1/(128\lambda\sqrt{n})$ and $n$
is the number of columns in $\ma$.

\Repeat{$t\neq t_{\mathrm{end}}$}{

Pick $\ox$, $\os$ and $\overline{r}$ such that $\|\ln\ox-\ln x\|_{\infty}\leq\frac{1}{48}$,
$\|\ln\os-\ln s\|_{\infty}\leq\frac{1}{48}$ and $\|\overline{r}-r\|_{\infty}\leq\frac{1}{48\lambda}$.

Let $t'=\max(t/(1+h),t_{\mathrm{end}})$, $\overline{\delta}_{\mu}=-\frac{t'}{32\lambda}\frac{\overline{g}}{\|\overline{g}\|_{2}}$,
$\overline{g}=\nabla\Phi(\overline{r})$.

Find $\delta_{x},\delta_{s}$ such that

\begin{align}
\oms\delta_{x}+\omx\delta_{s} & =\overline{\delta}_{\mu},\nonumber \\
\ma\delta_{x} & =0,\nonumber \\
\ma^{\top}\delta_{y}+\delta_{s} & =0.\label{eq:LP_newton_step_2-1}
\end{align}

Set $x\leftarrow x+\delta_{x}$, $s\leftarrow s+\delta_{s}$ and $t\leftarrow t'$.

}

\textbf{Return} $(x,s)$.

\end{algorithm2e}

\begin{figure}
\begin{centering}
\includegraphics[height=2in]{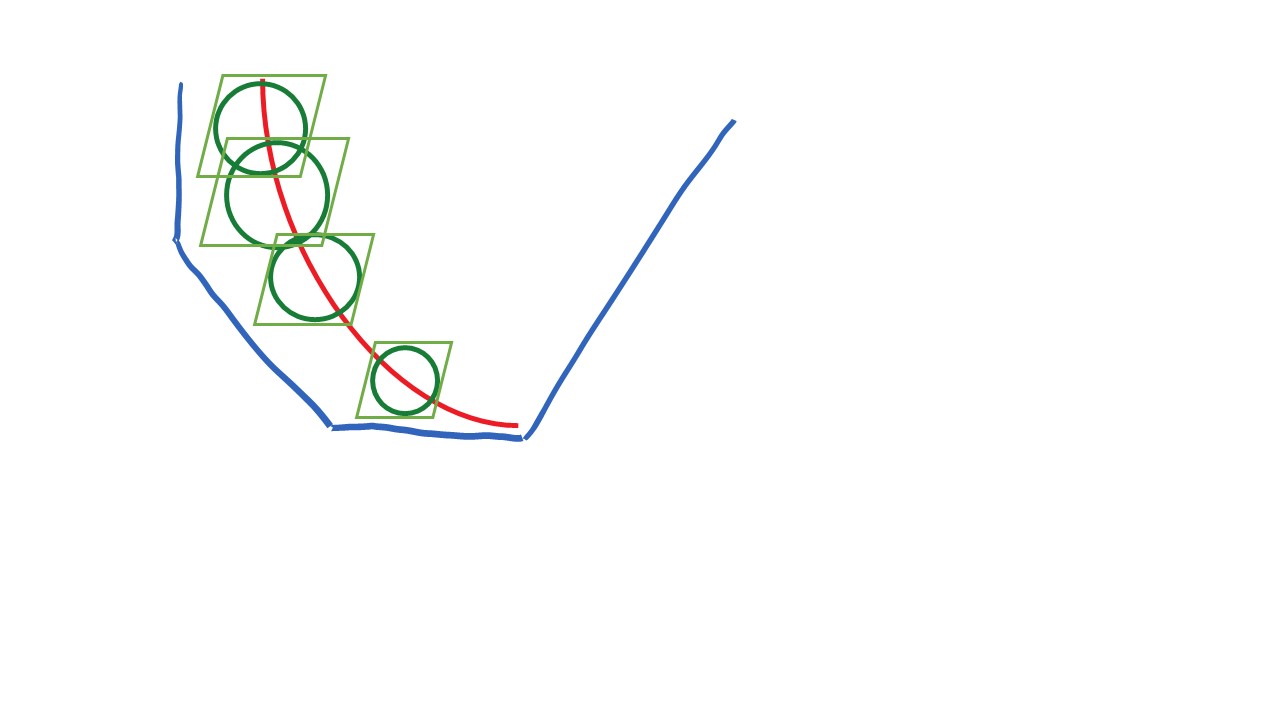}
\par\end{centering}
\caption{The Central Path. The standard method keeps the current point within
a relative $\ell_{2}$-norm ball, while it suffices to use a larger
$\ell_{\infty}$-norm ball. The robust method provides a bridge to
the latter via the cosh ball defined by $\Phi$. }
\end{figure}

We begin with useful facts about $\Phi$.
\begin{lem}
\label{lem:robust_IPM_Phi}Define $\Phi(r)$ according to (\ref{eq:Phi_potential}).
For any $r\in\Rn$, we have that $\|r\|_{\infty}\leq\frac{\log2\Phi(r)}{\lambda}$
and $\|\nabla\Phi(r)\|_{2}\geq\frac{\lambda}{\sqrt{n}}(\Phi(r)-n)$.
Moreover, if $\Phi(r)\geq4n$ and $\|\delta\|_{\infty}\leq\frac{1}{5\lambda}$,
we have 
\[
\|\nabla\Phi(r+\delta)-\nabla\Phi(r)\|_{2}\leq\frac{1}{3}\|\nabla\Phi(r)\|_{2}.
\]
\end{lem}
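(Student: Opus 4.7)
The plan is to verify each of the three inequalities in turn, since the third uses both previous ones. All the estimates follow from elementary hyperbolic identities and Cauchy--Schwarz; only the third requires substantive work. For the first, I would note that $\cosh(x) \geq \frac{1}{2} e^{|x|}$, so picking a coordinate $i^*$ attaining $\|r\|_\infty$ gives $\Phi(r) \geq \cosh(\lambda r_{i^*}) \geq \frac{1}{2} e^{\lambda \|r\|_\infty}$, whence $\|r\|_\infty \leq \log(2\Phi(r))/\lambda$.

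For the second bound, the coordinates of $\nabla\Phi(r)$ are $\lambda\sinh(\lambda r_i)$, and the factoring $\sinh^2 x = (\cosh x - 1)(\cosh x + 1) \geq (\cosh x - 1)^2$ (valid since $\cosh \geq 1$) combined with Cauchy--Schwarz gives
\begin{equation*}
\|\nabla\Phi(r)\|_2^2 \geq \lambda^2 \sum_i (\cosh(\lambda r_i) - 1)^2 \geq \frac{\lambda^2}{n}\left(\sum_i (\cosh(\lambda r_i) - 1)\right)^2 = \frac{\lambda^2}{n}(\Phi(r) - n)^2.
\end{equation*}
Since $\Phi(r) \geq n$, taking square roots yields the claim.

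For the third bound, I would expand via $\sinh(a+b) - \sinh(a) = \sinh(a)(\cosh(b) - 1) + \cosh(a)\sinh(b)$ with $a = \lambda r_i$ and $b = \lambda\delta_i$, apply $(u+v)^2 \leq 2u^2 + 2v^2$ coordinatewise, and sum. Since $|b| \leq 1/5$, this yields
\begin{equation*}
\|\nabla\Phi(r+\delta) - \nabla\Phi(r)\|_2^2 \leq 2\lambda^2(\cosh(1/5) - 1)^2 \sum_i \sinh^2(\lambda r_i) + 2\lambda^2\sinh^2(1/5) \sum_i \cosh^2(\lambda r_i).
\end{equation*}
The first sum is $\|\nabla\Phi(r)\|_2^2/\lambda^2$; using $\cosh^2 = \sinh^2 + 1$, the second equals the first plus $n$. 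The hypothesis $\Phi(r) \geq 4n$ enters here: from the second bound, $\|\nabla\Phi(r)\|_2^2 \geq 9\lambda^2 n$, so $\lambda^2 n \leq \|\nabla\Phi(r)\|_2^2/9$. Collecting, the bound becomes $[2(\cosh(1/5) - 1)^2 + \tfrac{20}{9}\sinh^2(1/5)]\|\nabla\Phi(r)\|_2^2$, and a direct numerical check confirms the bracketed constant is below $1/9$.

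The main obstacle is the cross term $\cosh(a)\sinh(b)$ in the expansion: it is not controlled pointwise by $\nabla\Phi(r)$, because $\cosh(\lambda r_i)$ can far exceed $|\sinh(\lambda r_i)|$ at coordinates where $r_i$ is small. The only way to absorb it is through the aggregate lower bound from the second assertion together with $\Phi(r) \geq 4n$. Since the resulting constant $2(\cosh(1/5)-1)^2 + \tfrac{20}{9}\sinh^2(1/5)$ is not comfortably below $1/9$, I would verify the inequality numerically rather than bounding the hyperbolic functions by looser Taylor estimates.
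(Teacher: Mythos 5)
Your first two claims follow the same route as the paper (for the first, the paper's display has an apparent typo writing $\min_i$ where it means $\max_i$; your reading is the correct one). For the third claim the approaches diverge in one step. Both start from $\sinh(a+b)-\sinh a = \sinh a(\cosh b - 1) + \cosh a\sinh b$, and both identify the cross term $\cosh a\sinh b$ as the obstruction requiring the aggregate hypothesis $\Phi(r)\ge 4n$. The paper then uses the \emph{pointwise} bound $\cosh a \le |\sinh a| + 1$ to fold the cross term back into $|\sinh a|$ plus a constant, arriving at $|\nabla\Phi(r+\delta)_i - \nabla\Phi(r)_i| \le \frac14|\nabla\Phi(r)_i| + \frac{\lambda}{4}$; taking $\ell_2$ norms and the triangle inequality gives $\frac14\|\nabla\Phi(r)\|_2 + \frac{\lambda\sqrt n}{4}$, and the lower bound $\|\nabla\Phi(r)\|_2 \ge 3\lambda\sqrt n$ absorbs the second term to yield $\frac14 + \frac1{12} = \frac13$ exactly. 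You instead square via $(u+v)^2 \le 2u^2 + 2v^2$ and use the exact identity $\sum_i \cosh^2(\lambda r_i) = \sum_i \sinh^2(\lambda r_i) + n$, then absorb the $\lambda^2 n$ term through the same gradient lower bound. Your route is correct, but the $(u+v)^2 \le 2u^2+2v^2$ step is lossier, which is why you land on a constant ($2(\cosh\frac15 - 1)^2 + \frac{20}{9}\sinh^2\frac15 \approx 0.091$) that needs a numerical check against $1/9$ rather than the paper's cleaner split $\frac14 + \frac1{12}$. The paper's use of $\cosh \le |\sinh| + 1$ is the small trick that buys the cleaner constants; both arguments are otherwise structurally the same and both are valid.
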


\begin{proof}
We have $\Phi(r)\geq\frac{1}{2}\min_{i}e^{\lambda|r_{i}|}$ and hence
$\|r\|_{\infty}\leq\frac{\log2\Phi(r)}{\lambda}$.

For the second claim, wsing that $\nabla\Phi(r)=\sum_{i=1}^{n}\lambda\sinh(\lambda r_{i})$,
we have
\begin{align*}
\|\nabla\Phi(r)\|_{2} & =\lambda\sqrt{\sum_{i=1}^{n}\sinh^{2}(\lambda r_{i})}=\lambda\sqrt{\sum_{i=1}^{n}(\cosh^{2}(\lambda r_{i})-1)}\\
 & \geq\frac{\lambda}{\sqrt{n}}\sum_{i=1}^{n}\sqrt{\cosh^{2}(\lambda r_{i})-1}\geq\frac{\lambda}{\sqrt{n}}\sum_{i=1}^{n}(\cosh(\lambda r_{i})-1)\\
 & =\frac{\lambda}{\sqrt{n}}(\Phi(r)-n).
\end{align*}

For the last claim, using $\sinh(r+\delta)=\sinh r\cosh\delta+\cosh r\sinh\delta$
and $|\cosh r-\sinh r|\leq1$, for $|\delta|\leq\frac{1}{5}$, we
have
\begin{align*}
|\sinh(r+\delta)-\sinh(r)| & \leq|\cosh\delta-1|\cdot|\sinh r|+|\sinh\delta|\cdot\cosh r\\
 & \leq\left(|\cosh\delta-1|+|\sinh\delta|\right)\cdot|\sinh r|+|\sinh\delta|\\
 & \leq\frac{1}{4}|\sinh r|+\frac{1}{4}.
\end{align*}
Using that $\nabla\Phi(r)=\sum_{i=1}^{n}\lambda\sinh(\lambda r_{i})$,
for $\|\delta\|_{\infty}\leq\frac{1}{5\lambda}$, we have
\begin{equation}
\|\nabla\Phi(r+\delta)-\nabla\Phi(r)\|_{2}\leq\frac{1}{4}\|\nabla\Phi(r)\|_{2}+\frac{\sqrt{n}\lambda}{4}.\label{eq:grad_phi_diff}
\end{equation}
Since $\Phi(r)\geq4n$, we have that $\|\nabla\Phi(r)\|_{2}\geq3\sqrt{n}\lambda$
and hence (\ref{eq:grad_phi_diff}) shows that
\[
\|\nabla\Phi(r+\delta)-\nabla\Phi(r)\|_{2}\leq(\frac{1}{4}+\frac{1}{12})\|\nabla\Phi(r)\|_{2}=\frac{1}{3}\|\nabla\Phi(r)\|_{2}.
\]
\end{proof}
We collect some basic bounds on the step in the following lemma.
\begin{lem}
\label{lem:RobustIPM_basic}Using the notation in $\mathtt{RobustStep}$
(Algorithm \ref{alg:RobustStep}). Under the invariant $\Phi((xs-t)/t)\leq16n$,
we have $\|xs-t\|_{\infty}\leq\frac{t}{16}$, $\|\delta_{x}/x\|_{2}\leq\frac{1}{16\lambda}$,
and $\|\delta_{s}/s\|_{2}\leq\frac{1}{16\lambda}$.
\end{lem}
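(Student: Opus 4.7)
The first claim is immediate from Lemma~\ref{lem:robust_IPM_Phi}: the invariant $\Phi(r)\le 16n$ gives $\|r\|_\infty \le \log(32n)/\lambda \le \log(40n)/\lambda = 1/16$, where the last equality uses $\lambda = 16\log(40n)$. Multiplying by $t$ yields $\|xs-t\|_\infty \le t/16$.

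For the two step-size bounds, the plan is to reuse the machinery of Section~2.1 but at the surrogate pair $(\ox,\os)$ rather than at $(x,s)$. Setting $\overline{\mu}\defeq \ox\os$, the linear system solved inside $\mathtt{RobustStep}$ is precisely the Newton system of Lemma~\ref{lem:Newton_step} at $(\ox,\os)$ with right-hand side $\overline{\delta}_\mu$ (its RHS is formally $\mu'-\ox\os$ for $\mu'\defeq\overline{\mu}+\overline{\delta}_\mu$). Hence the solution has the closed form
\[
\omx^{-1}\delta_x=(\mi-\overline{\mproj})(\overline{\delta}_\mu/\overline{\mu}),\qquad \oms^{-1}\delta_s=\overline{\mproj}(\overline{\delta}_\mu/\overline{\mu}),
\]
with $\overline{\mproj}\defeq \oms^{-1}\ma^\top(\ma\oms^{-1}\omx\ma^\top)^{-1}\ma\omx$. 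Applying Lemma~\ref{lem:Projection_property} at $(\ox,\os)$ then bounds both $\|\omx^{-1}\delta_x\|_{\overline{\mu}}$ and $\|\oms^{-1}\delta_s\|_{\overline{\mu}}$ by $\|\overline{\delta}_\mu/\overline{\mu}\|_{\overline{\mu}}$.

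It remains to convert these $\overline{\mu}$-weighted bounds back into unweighted $\ell_2$-bounds on $\delta_x/x$ and $\delta_s/s$. From the first claim, $x_is_i\ge 15t/16$, and the proximity assumption $\|\ln\ox-\ln x\|_\infty, \|\ln\os-\ln s\|_\infty\le 1/48$ gives $\overline{\mu}_i \ge e^{-1/24}x_is_i \ge 0.9\,t$. Combining with $\|\overline{\delta}_\mu\|_2 = t'/(32\lambda) \le t/(32\lambda)$ and the chain
\[
\|\omx^{-1}\delta_x\|_2 \le \tfrac{1}{\sqrt{\min_i \overline{\mu}_i}}\|\omx^{-1}\delta_x\|_{\overline{\mu}} \le \tfrac{1}{\sqrt{\min_i \overline{\mu}_i}}\|\overline{\delta}_\mu/\overline{\mu}\|_{\overline{\mu}} \le \tfrac{1}{\min_i \overline{\mu}_i}\|\overline{\delta}_\mu\|_2
\]
yields an $\ell_2$-bound on $\omx^{-1}\delta_x$, which becomes the desired bound on $\delta_x/x$ after multiplication by $e^{1/48}$; the argument for $\delta_s/s$ is symmetric.

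The main (and essentially only) obstacle is the numerical bookkeeping: one must verify that the product of slack factors $e^{\pm 1/24}$, the ratio $15/16$, the $1/32$ hidden inside $\overline{\delta}_\mu$, and the projection non-expansion compose to something comfortably below $1/(16\lambda)$. Indeed the chain above gives $\|\delta_x/x\|_2 \le e^{1/48}/(28.8\,\lambda)$, which is well under $1/(16\lambda)$. Conceptually this is exactly the argument of Lemma~\ref{lem:L2Step}, transported from $(x,s)$ to $(\ox,\os)$ and translated back by the logarithmic proximity assumption.
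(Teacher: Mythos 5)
Your proof is correct and follows essentially the same route as the paper: get the $\ell_\infty$ bound from Lemma~\ref{lem:robust_IPM_Phi}, apply Lemma~\ref{lem:Newton_step} and Lemma~\ref{lem:Projection_property} at the surrogate pair $(\ox,\os)$, lower-bound $\overline\mu$ using the $\ell_\infty$ bound together with the $1/48$-proximity of $\ox,\os$, and chain the norm inequalities. You are in fact a bit more careful than the paper in explicitly distinguishing $\delta_x/\ox$ from $\delta_x/x$ and accounting for the extra $e^{1/48}$ conversion factor (the paper's displayed chain writes $\delta_x/x$ where the Newton formula yields $\omx^{-1}\delta_x$), but this is a cosmetic refinement of the same argument, and your final constant $e^{1/48}/(28.8\lambda)$ is comfortably below $1/(16\lambda)$.
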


\begin{proof}
Using $\Phi((xs-t)/t)\leq16n$ and Lemma \ref{lem:robust_IPM_Phi},
we have
\[
\|xs-t\|_{\infty}\leq\frac{t\log32n}{\lambda}\leq\frac{t}{16}
\]

By Lemma \ref{lem:Newton_step}, we have $\mx^{-1}\delta_{x}=(\mi-\mproj)(\overline{\delta}_{\mu}/\overline{\mu})$
where $\overline{\mu}=\ox\os$ and $\mproj=\oms^{-1}\ma^{\top}(\ma\oms^{-1}\omx\ma^{\top})^{-1}\ma\omx$.
By Lemma \ref{lem:Projection_property}, we have 
\[
\|\delta_{x}/x\|_{\overline{\mu}}=\|(\mi-\mproj)v\|_{\overline{\mu}}\leq\|\overline{\delta}_{\mu}/\overline{\mu}\|_{\overline{\mu}}.
\]
Using that $\|xs-t\|_{\infty}\leq\frac{t}{16}$, $\|\ln\ox-\ln x\|_{\infty}\leq\frac{1}{48}$,
$\|\ln\os-\ln s\|_{\infty}\leq\frac{1}{48}$, we have $\overline{\mu}\geq\frac{10}{11}t$
and hence
\[
\|\delta_{x}/x\|_{2}\leq\sqrt{\frac{11}{10t}}\|\delta_{x}/x\|_{\overline{\mu}}\leq\sqrt{\frac{11}{10t}}\|\overline{\delta}_{\mu}\|_{\overline{\mu}^{-1}}\leq\frac{11}{10t}\|\overline{\delta}_{\mu}\|_{2}
\]
Using the formula $\overline{\delta}_{\mu}=-\frac{t'}{32\lambda}\frac{\overline{g}}{\|\overline{g}\|_{2}}$,
we have
\begin{align*}
\|\delta_{x}/x\|_{2} & \leq\frac{11}{10}\frac{t'}{32\lambda t}\leq\frac{1}{16\lambda}.
\end{align*}
Same proof gives $\|\delta_{s}/s\|_{2}\leq\frac{1}{16\lambda}$.
\end{proof}
Using this, we prove the algorithm $\mathtt{RobustStep}$ satisfies
the invariant on the distance.
\begin{lem}
\label{lem:RobustStep}Suppose that the input satisfies $(x,s)\in\PriR^{\circ}\times\DualR^{\circ}$
and $\Phi((xs-t_{\mathrm{start}})/t_{\mathrm{start}})\leq16n$. Let
$x^{(k)},s^{(k)},t^{(k)}$ be the $x,s,t$ computed in the $\mathtt{RobustStep}$
after the $k$-th step. Let $\Phi^{(k)}=\Phi((x^{(k)}s^{(k)}-t^{(k)})/t^{(k)})$.
Then, we have 
\[
\Phi^{(k+1)}\leq\begin{cases}
12n & \text{if }\Phi^{(k)}\leq8n\\
\Phi^{(k)} & \text{otherwise}
\end{cases}.
\]
Furthermore, we have that $\|r^{(k+1)}-r^{(k)}\|_{2}\leq\frac{1}{16\lambda}$
where $r^{(k)}=(x^{(k)}s^{(k)}-t^{(k)})/t^{(k)}$.
\end{lem}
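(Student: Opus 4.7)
The plan is to compute $\Delta r \defeq r^{(k+1)} - r^{(k)}$ explicitly in terms of $\delta_x,\delta_s,h$, bound $\|\Delta r\|_2$, and then expand $\Phi$ to second order around $r^{(k)}$.

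First, using $x' = x + \delta_x$, $s' = s + \delta_s$, and $t' = t/(1+h)$, I would decompose
\[
\Delta r \;=\; \frac{s\delta_x + x\delta_s + \delta_x\delta_s}{t'} \;+\; \frac{h\,xs}{t}.
\]
Replacing $s\delta_x + x\delta_s$ by $\overline{\delta}_\mu$ (equal up to a perturbation $(s-\os)\delta_x + (x-\ox)\delta_s$ controlled by $\|\ln\ox - \ln x\|_\infty,\|\ln\os - \ln s\|_\infty \leq 1/48$) and substituting $\overline{\delta}_\mu = -(t'/(32\lambda))\,\overline g/\|\overline g\|_2$ yields
\[
\Delta r \;=\; -\frac{1}{32\lambda}\frac{\overline g}{\|\overline g\|_2} \;+\; h(\mathbf 1 + r^{(k)}) \;+\; E,
\]
with $E$ collecting quadratic and approximation errors. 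The leading term has $\ell_2$-norm $1/(32\lambda)$; the $h\mathbf 1$ piece contributes $h\sqrt n = 1/(128\lambda)$; the $hr^{(k)}$ piece is at most $h\sqrt{30n}/\lambda$ (via $\sum_i(\lambda r_i)^2 \leq 2(\Phi(r)-n) \leq 30n$ from $\cosh x \geq 1 + x^2/2$); and $\|E\|_2 = O(1/\lambda^2)$ via the step-size bounds $\|\delta_x/x\|_2,\|\delta_s/s\|_2 \leq 1/(16\lambda)$ of Lemma \ref{lem:RobustIPM_basic}. Summing gives $\|\Delta r\|_2 \leq 1/(16\lambda)$.

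Next I would apply a second-order expansion of $\Phi$ via $\cosh(a+b) = \cosh a \cosh b + \sinh a \sinh b$ together with $\|\lambda\Delta r\|_\infty \leq 1/16$:
\[
\Phi^{(k+1)} - \Phi^{(k)} \;\leq\; \nabla\Phi(r)^\top \Delta r \;+\; O\!\left(\lambda^2 \sum_i \Delta r_i^2 \cosh(\lambda r_i)\right).
\]
Using $\cosh(\lambda r_i) \leq 1 + |g_i|/\lambda$ with $g \defeq \nabla\Phi(r)$, the second-order term is $O(\lambda\|\Delta r\|_\infty\|\Delta r\|_2\|g\|_2 + \lambda^2\|\Delta r\|_2^2) = O(\|g\|_2/(256\lambda)) + O(1/256)$. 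For the first-order term, the descent contribution is
\[
\nabla\Phi(r)^\top \frac{\overline{\delta}_\mu}{t'} \;=\; -\frac{1}{32\lambda}\,\frac{g^\top \overline g}{\|\overline g\|_2} \;\leq\; -\frac{\|g\|_2}{64\lambda},
\]
where the inequality uses Lemma \ref{lem:robust_IPM_Phi} (valid for $\Phi(r)\geq 4n$) to conclude $\|\overline g - g\|_2 \leq \|g\|_2/3$, hence $g^\top\overline g/\|\overline g\|_2 \geq \|g\|_2/2$. The $h\mathbf 1$, $hr^{(k)}$, $\delta_x\delta_s/t'$, and $\ox\neq x,\os\neq s$ contributions to $\nabla\Phi(r)^\top\Delta r$ are each strictly smaller than $\|g\|_2/(64\lambda)$ by Cauchy--Schwarz combined with the step-size bounds.

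Combining: when $\Phi^{(k)} > 8n$ one has $\|g\|_2 \geq 7\lambda\sqrt n$ by Lemma \ref{lem:robust_IPM_Phi}, the descent strictly outweighs the error sum, and $\Phi^{(k+1)} \leq \Phi^{(k)}$. When $\Phi^{(k)} \leq 8n$, the crude bound $\|g\|_2 \leq 4n\lambda$ (from $\sum \sinh^2 \leq \max\cosh \cdot \sum\cosh \leq 16n^2$) gives $|\nabla\Phi(r)^\top\Delta r| + \text{second order} \leq n/4 + O(n/\lambda^2) < 4n$, so $\Phi^{(k+1)} \leq 12n$. The main technical obstacle is the constant-balancing in the descent case: the negative $-\|g\|_2/(64\lambda)$ must strictly outweigh the combined (sign-indeterminate) $h\mathbf 1$ projection, the quadratic $\delta_x\delta_s/t'$, the $\ox\neq x,\os\neq s$ slack in the Newton equation, and the Taylor remainder, which is what forces the particular choices $\lambda = 16\log(40n)$ and $h = 1/(128\lambda\sqrt n)$.
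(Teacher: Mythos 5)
Your plan follows the same strategy as the paper — decompose $\Delta r=r^{(k+1)}-r^{(k)}$ into the prescribed descent direction $-\frac{1}{32\lambda}\overline g/\|\overline g\|_2$ plus error terms, bound them, show $\|\Delta r\|_2\le 1/(16\lambda)$, and then argue $\Phi$ stays bounded. The main technical divergence is the potential change: you expand $\Phi$ to second order around $r^{(k)}$ and control the remainder via $\cosh(\lambda r_i)\le 1+|g_i|/\lambda$, whereas the paper applies the mean value theorem to write $\Phi(r')-\Phi(r)=\langle\nabla\Phi(\widetilde r),\Delta r\rangle$ exactly for an intermediate $\widetilde r$, and then uses Lemma \ref{lem:robust_IPM_Phi} (centered at $\overline r$, valid since $\Phi(r)\ge 8n\Rightarrow\Phi(\overline r)\ge 4n$) to control $\|\nabla\Phi(\widetilde r)-\overline g\|_2\le\frac13\|\overline g\|_2$. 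The paper's route avoids introducing a separate quadratic remainder and keeps all the noise in a single vector $\eta$, which makes the bookkeeping shorter; your route is fine in principle but produces one more term to balance.

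A few concrete issues with your estimates. First, $\|E\|_2$ is \emph{not} $O(1/\lambda^2)$: the pieces $(s-\os)\delta_x/t'$ and $(x-\ox)\delta_s/t'$ each contribute $\Theta((e^{1/48}-1)\cdot\frac{1}{16\lambda})=\Theta(1/\lambda)$, the same order as the descent term; only $\delta_x\delta_s/t'$ is $O(1/\lambda^2)$. This matters because $g^\top E$ then competes directly with the $-\|g\|_2/(64\lambda)$ descent, so it must be included in the constant-balancing rather than dismissed. Second, in the $\Phi^{(k)}\le 8n$ case, $\sum_i\sinh^2(\lambda r_i)\le\Phi(r)^2\le 64n^2$ gives $\|g\|_2\le 8n\lambda$, not $4n\lambda$ (off by a factor of 2, though the final conclusion still holds with room to spare). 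Third, the claim that the error contributions are "each strictly smaller than $\|g\|_2/(64\lambda)$" is not, on its own, enough: the $h\mathbf 1$ piece alone can be $\|g\|_2/(128\lambda)$, i.e., half the descent, so the \emph{sum} of the $h\mathbf 1$, $hr$, $E$, and second-order contributions must be verified to stay strictly below $\|g\|_2/(64\lambda)$, plus an absolute term dominated by $\|g\|_2\ge 7\lambda\sqrt n$. You acknowledge this is the hard part, and indeed with the corrected $O(1/\lambda)$ bound on $\|E\|_2$ the margin is tight; the approach survives, but the proposal as written leaves this unchecked.
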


\begin{proof}
Fix some iteration $k$. Let $x=x^{(k)},s=s^{(k)},t=t^{(k)},x'=x^{(k+1)},s'=s^{(k+1)}$
and $t'=t^{(k+1)}$. We define $r=(xs-t)/t$ and $r'=(x's'-t')/t'$.
By the definition of $\delta_{x}$ and $\delta_{s}$, we have $\oms\delta_{x}+\omx\delta_{s}=\overline{\delta}_{\mu}=-\frac{t'}{32\lambda}\frac{\overline{g}}{\|\overline{g}\|_{2}}$
and hence
\begin{align}
\frac{x's'-t'}{t'}= & \frac{(x+\delta_{x})(s+\delta_{s})-t'}{t'}=\frac{xs+s\delta_{x}+x\delta_{s}+\delta_{x}\delta_{s}-t'}{t'}\nonumber \\
= & \frac{xs-t'+\os\delta_{x}+\ox\delta_{s}+(s-\os)\delta_{x}+(x-\ox)\delta_{s}+\delta_{x}\delta_{s}}{t'}\nonumber \\
= & \frac{xs-t}{t}-\frac{1}{32\lambda}\cdot\frac{\overline{g}}{\|\overline{g}\|_{2}}+\eta\label{eq:IPM_robust_r_diff}
\end{align}
where the error term
\[
\eta=(\frac{t}{t'}-1)\frac{xs}{t}+\frac{(s-\os)\delta_{x}+(x-\ox)\delta_{s}+\delta_{x}\delta_{s}}{t'}.
\]
Now, we bound the error term $\eta$. Using Lemma \ref{lem:RobustIPM_basic}
($\|\delta_{x}/x\|_{2}\leq\frac{1}{16\lambda}$, $\|\delta_{s}/s\|_{2}\leq\frac{1}{16\lambda}$,
$\|xs-t\|_{\infty}\leq\frac{t}{16}$) and the definition of the algorithm
($\lambda\geq16$, $|t'-t|\leq\frac{t'}{128\lambda\sqrt{n}}\leq\frac{t}{128\lambda\sqrt{n}}$,
$\|\ln\ox-\ln x\|_{\infty}\leq\frac{1}{48}$, $\|\ln\os-\ln s\|_{\infty}\leq\frac{1}{48}$),
we have
\begin{align}
\|\eta\|_{2}\leq & |\frac{t}{t'}-1|\|\frac{xs}{t}\|_{\infty}\sqrt{n}+\|\frac{xs}{t'}\|_{\infty}\|\frac{s-\os}{s}\|_{\infty}\|\frac{\delta_{x}}{x}\|_{2}\nonumber \\
 & +\|\frac{xs}{t'}\|_{\infty}\|\frac{x-\ox}{x}\|_{\infty}\|\frac{\delta_{s}}{s}\|_{2}+\|\frac{xs}{t'}\|_{\infty}\|\frac{\delta_{x}}{x}\|_{2}\|\frac{\delta_{s}}{s}\|_{2}\nonumber \\
\leq & \frac{1}{128\lambda}\frac{17}{16}+\frac{9}{8}(e^{1/48}-1)(\frac{1}{16\lambda}+\frac{1}{16\lambda})+\frac{9}{8}(\frac{1}{16\lambda}\frac{1}{16\lambda})\leq\frac{1}{60\lambda}.\label{eq:robust_IPM_eta}
\end{align}

In particular, we use (\ref{eq:IPM_robust_r_diff}) and (\ref{eq:robust_IPM_eta})
to get
\[
\|r-r'\|_{2}\leq\frac{1}{32\lambda}+\|\eta\|_{2}\leq\frac{1}{16\lambda}.
\]
This proves the conclusion about $r$.

Case 1: $\Phi(r)\leq8n$. 

The definition of $\Phi$ together with the fact $\|r-r'\|_{2}\leq\frac{1}{16\lambda}$
implies that $\Phi(r')\leq\frac{3}{2}\Phi(r)\leq12n$. 

Case 2: $\Phi(r)\geq8n$.

Mean value theorem shows there is $\widetilde{r}$ between $r$ and
$r'$ such that
\[
\Phi(r')=\Phi(r)+\left\langle \nabla\Phi(\widetilde{r}),r'-r\right\rangle =\Phi(r)+\left\langle \nabla\Phi(\widetilde{r}),-\frac{1}{32\lambda}\frac{\overline{g}}{\|\overline{g}\|_{2}}+\eta\right\rangle 
\]
where we used (\ref{eq:IPM_robust_r_diff}) at the end. Using $\|r-r'\|_{2}\leq\frac{1}{16\lambda}$
and $\|\overline{r}-r\|_{\infty}\leq\frac{1}{48\lambda}$ (by assumption),
we have $\|\overline{r}-\widetilde{r}\|_{\infty}\leq\frac{1}{5\lambda}$.
Since $\Phi(r)\geq8n$, we have $\Phi(\overline{r})\geq4n$ and hence
Lemma \ref{lem:robust_IPM_Phi} shows that
\[
\|\nabla\Phi(\widetilde{r})-\nabla\Phi(\overline{r})\|_{2}\leq\frac{1}{3}\|\nabla\Phi(\overline{r})\|_{2}.
\]
Using $\overline{g}=\nabla\Phi(\overline{r})$ and letting $\eta_{2}=\nabla\Phi(\widetilde{r})-\nabla\Phi(\overline{r})$,
we have
\[
\Phi(r')-\Phi(r)=\left\langle \overline{g}+\eta_{2},-\frac{1}{32\lambda}\frac{\overline{g}}{\|\overline{g}\|_{2}}+\eta\right\rangle =-\frac{1}{32\lambda}\|\overline{g}\|_{2}-\frac{1}{32\lambda}\eta_{2}^{\top}\frac{\overline{g}}{\|\overline{g}\|_{2}}+\overline{g}^{\top}\eta+\eta_{2}^{\top}\eta.
\]
Using $\|\eta_{2}\|_{2}\leq\frac{1}{3}\|\overline{g}\|_{2}$ and $\|\eta\|_{2}\leq\frac{1}{60\lambda}$
(\ref{eq:robust_IPM_eta}), we have
\begin{align*}
\Phi(r')-\Phi(r) & \leq-\frac{1}{32\lambda}\|\overline{g}\|_{2}+\frac{1}{32\lambda}\cdot\frac{1}{3}\|\overline{g}\|_{2}+\|\overline{g}\|_{2}\cdot\frac{1}{60\lambda}+\frac{1}{3}\|\overline{g}\|_{2}\cdot\frac{1}{60\lambda}\leq-\frac{1}{720\lambda}\|\overline{g}\|_{2}
\end{align*}
Using Lemma \ref{lem:robust_IPM_Phi}, we have $\|\overline{g}\|_{2}\geq\frac{\lambda}{\sqrt{n}}(\Phi(\overline{r})-n)\geq3\lambda\sqrt{n}$.
Hence, we have
\begin{equation}
\Phi(r')\leq\Phi(r)-\frac{\sqrt{n}}{240}<16n.\label{eq:IPM_robust_decrease}
\end{equation}
The potential actually decreases in this case. This proves the conclusion
about $\Phi$.
\end{proof}
\begin{xca}
Show that we can use the potential function $\Phi=\sum_{i=1}^{n}\exp(\lambda|v_{i}|)$
to get a similar conclusion.
\end{xca}

\subsection{Selecting $\protect\ox,\protect\os$ and $\overline{r}$\label{subsec:select_xsr}}

Each step of the robust interior point method solves the linear system
\[
\left(\begin{array}{ccc}
\oms & \omx & \mzero\\
\ma & \mzero & \mzero\\
\mzero & \mi & \ma^{\top}
\end{array}\right)\left(\begin{array}{c}
\delta_{x}\\
\delta_{s}\\
\delta_{y}
\end{array}\right)=\left(\begin{array}{c}
\nabla\Phi(\overline{r})\\
0\\
0
\end{array}\right)
\]
for some vectors $\ox,\os,\overline{r}$ such that $\|\ln\ox-\ln x\|_{\infty}\leq\frac{1}{48}$,
$\|\ln\os-\ln s\|_{\infty}\leq\frac{1}{48}$, $\|\overline{r}-r\|_{\infty}\leq\frac{1}{48\lambda}$.
The key observation is that only a few coordinates of $x,s$ and $r$
change significantly each step and hence we can maintain the solution
of the linear system instead of computing from scratch. In this section,
we discuss how to select $\ox,\os,\overline{r}$ with as few updates
as possible while maintaining the invariants.

First, we observe that $\ln x,\ln s$ and $r$ change by $O(1)$ in
$\ell_{2}$ norm in each step.
\begin{lem}
\label{lem:change_xsr}Define $x^{(k)},s^{(k)},r^{(k)}$ according
to Lemma \ref{lem:RobustStep}. Then, $\|\ln x^{(k+1)}-\ln x^{(k)}\|_{2}$,
$\|\ln s^{(k+1)}-\ln s^{(k)}\|_{2}$ and $\|r^{(k+1)}-r^{(k)}\|_{2}$
are all bounded by $1/(8\lambda)$.
\end{lem}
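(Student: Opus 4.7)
The plan is to reduce all three bounds to quantities already controlled in Lemma \ref{lem:RobustIPM_basic} and Lemma \ref{lem:RobustStep}. The bound on $\|r^{(k+1)}-r^{(k)}\|_2$ is immediate: the second assertion of Lemma \ref{lem:RobustStep} already gives $\|r^{(k+1)}-r^{(k)}\|_2 \le \frac{1}{16\lambda}$, which is stronger than $\frac{1}{8\lambda}$, so there is nothing to do there.

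For the $\ln x$ bound, I write componentwise
\[
\ln x^{(k+1)}_i - \ln x^{(k)}_i = \ln\!\left(1 + \frac{\delta_{x,i}}{x^{(k)}_i}\right).
\]
By Lemma \ref{lem:RobustIPM_basic} we have $\|\delta_x/x\|_2 \le \frac{1}{16\lambda}$, which in particular yields $\|\delta_x/x\|_\infty \le \frac{1}{16\lambda} \le \frac12$ (recall $\lambda = 16 \log 40n \ge 16$). On the interval $|u| \le \frac12$ we have $|\ln(1+u)| \le 2|u|$, so squaring and summing gives
\[
\|\ln x^{(k+1)} - \ln x^{(k)}\|_2 \le 2 \|\delta_x/x\|_2 \le \frac{2}{16\lambda} = \frac{1}{8\lambda}.
\]
The same argument, with $\|\delta_s/s\|_2 \le \frac{1}{16\lambda}$ (also from Lemma \ref{lem:RobustIPM_basic}), handles $\|\ln s^{(k+1)} - \ln s^{(k)}\|_2$.

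There is essentially no obstacle: the lemma is a routine bookkeeping corollary of the basic step-size bounds. The only subtlety is confirming that the pointwise argument $|\delta_{x,i}/x_i| \le \frac12$ is available to invoke the linearization of $\ln(1+u)$, and this follows trivially from the $\ell_2$ bound together with $\lambda \ge 16$. So the whole proof is a two-line appeal to Lemma \ref{lem:RobustIPM_basic} and Lemma \ref{lem:RobustStep}, plus the elementary inequality $|\ln(1+u)| \le 2|u|$ for $|u|\le \frac12$.
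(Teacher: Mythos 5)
Your proof is correct and follows essentially the same route as the paper: bound the relative step $\delta_x/x$ in $\ell_2$, convert to a bound on $\ln x^{(k+1)} - \ln x^{(k)}$ via the elementary inequality $|\ln(1+u)| \le 2|u|$ for $|u|\le\tfrac12$, and read off the $r$ bound from Lemma \ref{lem:RobustStep}. The only difference is that you cite Lemma \ref{lem:RobustIPM_basic} directly for $\|\delta_x/x\|_2 \le \tfrac{1}{16\lambda}$, whereas the paper re-derives a marginally tighter constant $\tfrac{1}{28\lambda}$ from Lemma \ref{lem:Projection_property}; both yield the claimed $\tfrac{1}{8\lambda}$.
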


\begin{proof}
Lemma \ref{lem:Projection_property} shows that
\begin{equation}
\|(x^{(k+1)}-x^{(k)})/x^{(k)}\|_{\mu^{(k)}}\leq\|\overline{\delta}_{\mu}/\mu^{(k)}\|_{\mu^{(k)}}\label{eq:robust_x_step}
\end{equation}
where $\mu^{(k)}=\overline{x}^{(k)}\overline{s}^{(k)}$ and $\overline{x}^{(k)},\overline{s}^{(k)}$
are the $\ox,\os$ used in the $k$-th step. 

To bound $\mu^{(k)}$, Lemma \ref{lem:RobustStep} shows that the
invariant $\Phi(r^{(k)})\le16n$ holds and hence Lemma \ref{lem:robust_IPM_Phi}
shows that (recall $\lambda=16\log40n$):
\[
\|(x^{(k)}s^{(k)}-t^{(k)})/t^{(k)}\|_{\infty}=\|r^{(k)}\|_{\infty}\leq\frac{\log32n}{\lambda}\leq\frac{1}{16}.
\]
 Together with the fact that $\|\ln\ox^{(k)}-\ln x^{(k)}\|_{\infty}\leq\frac{1}{48}$,
$\|\ln\os^{(k)}-\ln s^{(k)}\|_{\infty}\leq\frac{1}{48}$, we have
\[
\|\frac{\ox^{(k)}\os^{(k)}-t^{(k)}}{t^{(k)}}\|_{\infty}\leq\frac{1}{8}.
\]
Using this on (\ref{eq:robust_x_step}) gives $\|(x^{(k+1)}-x^{(k)})/x^{(k)}\|_{2}\leq\frac{8}{7}\frac{1}{t^{(k)}}\|\overline{\delta}_{\mu}\|_{2}.$
Using $\overline{\delta}_{\mu}=-\frac{t'}{32\lambda}\frac{\overline{g}}{\|\overline{g}\|_{2}}$,
we have
\[
\|(x^{(k+1)}-x^{(k)})/x^{(k)}\|_{2}\leq\frac{1}{28\lambda}.
\]
To translate the bound to $\log$ scale, we note that $|\ln(1+t)-t|\leq2t$
for all $|t|\leq\frac{1}{2}$ and hence
\[
\|\ln x^{(k+1)}-\ln x^{(k)}\|_{2}=\|\ln(1+\frac{x^{(k+1)}-x^{(k)}}{x^{(k)}})\|_{2}\leq\frac{1}{14\lambda}.
\]

The bound for $\|\ln s^{(k+1)}-\ln s^{(k)}\|_{2}$ is similar.

The bound for $\|r^{(k+1)}-r^{(k)}\|_{2}$ follows from Lemma \ref{lem:RobustStep}.
\end{proof}
Now the question is how to select $\ln\ox,\ln\os$ and $\overline{r}$
such that they are close to $\ln x,\ln s$ and $r$ in $\ell_{\infty}$
norm. If the cost of updating the inverse of a matrix is linear in
the rank of the update, then we can simply update any coordinate of
$\ox,\os$ and $\overline{r}$ whenever they violate the condition.
However, due to fast matrix multiplication, the average cost (per
rank) of update is lower when the rank of update is large. Therefore,
it is beneficial to update coordinates preemptively. 

Now we state the algorithm for selecting $\ox,\os$ and $\overline{r}$.
This algorithm is a general algorithm for maintaining a vector $\overline{v}$
such that $\|\overline{v}-v\|_{\infty}\leq\delta$. For every $2^{k}$
steps, the algorithm updates the coordinate of the vector $\overline{v}$
if that coordinate has changed by more than $\delta/(2\log n)$ between
this step and $2^{k}$ steps earlier.

\begin{algorithm2e}[H]

\caption{$\mathtt{SelectVector}(\overline{v},v^{(0)},v^{(1)},\cdots,v^{(k)},\delta)$}

\label{alg:SelectVector}

Let $I=\{\}$ be the set of updating coordinates.

\For{$\ell=0,1,\cdots,\left\lceil \log n\right\rceil $}{

\If{$k=0\mod2^{\ell}$}{

\uIf{$\ell=\left\lceil \log n\right\rceil $}{

$I=[n]$.

}\Else{

$I=I\cup\{i:|v_{i}^{(k)}-v_{i}^{(k-2^{\ell})}|\geq\delta/(2\left\lceil \log n\right\rceil )\}$.

}

}

}

$\overline{v}_{i}\leftarrow v_{i}^{(k)}$ for all $i\in I$

\textbf{Return} $\overline{v}$

\end{algorithm2e}
\begin{lem}
\label{lem:SelectVector}Given vectors $v^{(0)},v^{(1)},v^{(2)},\cdots$
arriving in a stream, suppose that $\|v^{(k+1)}-v^{(k)}\|_{2}\leq\beta$
for all $k$. For any $\frac{1}{2}>\delta>0$, define the vector $\overline{v}^{(0)}=v^{(0)}$
and $\overline{v}^{(k)}=\mathtt{SelectVector}(\overline{v}^{(k-1)},v^{(0)},v^{(1)},\cdots,v^{(k)},\delta)$.
Then, we have that
\begin{itemize}
\item $\|\overline{v}^{(k)}-v^{(k)}\|_{\infty}\leq\delta$ for all $k$.
\item $\|\overline{v}^{(k)}-\overline{v}^{(k-1)}\|_{0}\leq O(2^{2\ell_{k}}(\beta/\delta)^{2}\log^{2}n)$
where $\ell_{k}$ is the largest integer $\ell$ with $k=0\mod2^{\ell}$. 
\end{itemize}
\end{lem}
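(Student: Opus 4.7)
My plan is to handle the two bullets separately, both via elementary counting with a dyadic decomposition of the time axis as the central device.

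For the $\ell_\infty$ bound, I would fix a coordinate $i$ and a step $k$, and track the last step $k_0 \leq k$ at which coordinate $i$ entered $I$, so that $\overline{v}_i^{(k)} = v_i^{(k_0)}$. When $k_0 = k$ the claim is trivial, so assume $k_0 < k$; because the top-level branch ($\ell = \lceil \log n \rceil$) unconditionally forces $I = [n]$ at every multiple of $2^{\lceil \log n \rceil}$, the gap satisfies $k - k_0 \leq 2^{\lceil \log n \rceil}$. I would then decompose the integer interval $(k_0, k]$ into a disjoint union of dyadic blocks of the form $(a, a + 2^\ell]$ with $a \equiv 0 \pmod{2^\ell}$; the standard greedy ``peel the largest fitting block from each side'' argument produces at most $2\lceil \log n \rceil$ such blocks. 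For each block, the right endpoint $a + 2^\ell \in (k_0, k]$ is a step at which the level-$\ell$ check was executed, and since $i$ was not added to $I$ at that step the check must have failed, giving $|v_i^{(a+2^\ell)} - v_i^{(a)}| < \delta/(2\lceil \log n \rceil)$. Telescoping by the triangle inequality across the $\leq 2\lceil\log n\rceil$ blocks yields $|v_i^{(k)} - v_i^{(k_0)}| < \delta$.

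For the sparsity bound at step $k$, only levels $\ell \in \{0, 1, \ldots, \ell_k\}$ are active, since the test $k \equiv 0 \pmod{2^\ell}$ fails for $\ell > \ell_k$. At any active level $\ell < \lceil \log n \rceil$, a coordinate $i$ joins $I$ only when $|v_i^{(k)} - v_i^{(k-2^\ell)}| \geq \delta/(2\lceil \log n \rceil)$, so a Markov-type estimate bounds the count of such coordinates by $(2\lceil \log n \rceil/\delta)^2 \cdot \|v^{(k)} - v^{(k-2^\ell)}\|_2^2$. The streaming hypothesis $\|v^{(j+1)} - v^{(j)}\|_2 \leq \beta$ combined with the triangle inequality gives $\|v^{(k)} - v^{(k-2^\ell)}\|_2 \leq 2^\ell \beta$, so level $\ell$ contributes at most $O(2^{2\ell}(\beta/\delta)^2 \log^2 n)$ coordinates. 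Summing the geometric series in $\ell$ from $0$ to $\ell_k$ gives the desired $O(2^{2\ell_k}(\beta/\delta)^2 \log^2 n)$ bound. The top level $\ell = \lceil \log n \rceil$ triggers only when $\ell_k$ attains that value, in which case $2^{2\ell_k} \geq n^2$ easily absorbs the trivial $|I| \leq n$ contribution from setting $I = [n]$.

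I expect the only delicate point to be the bookkeeping in Part 1, namely verifying that the greedy dyadic decomposition of $(k_0, k]$ produces blocks whose right endpoints $a + 2^\ell$ lie in $(k_0, k]$ and satisfy $a \equiv 0 \pmod{2^\ell}$, so that the level-$\ell$ check actually ran at those times. Once that alignment is pinned down, both parts reduce to one application of the triangle inequality apiece, plus a Markov-style count in Part 2.
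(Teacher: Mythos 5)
Your proof is correct and follows essentially the same route as the paper: the same dyadic decomposition of $(k_0,k]$ into $O(\log n)$ aligned blocks for the $\ell_\infty$ bound, and the same Markov-type counting per active level for the sparsity bound. The only cosmetic difference is that you telescope $\|v^{(k)}-v^{(k-2^\ell)}\|_2 \leq 2^\ell\beta$ by the triangle inequality and then square, whereas the paper applies a coordinate-wise Cauchy--Schwarz to $\sum_i |v_i^{(k)}-v_i^{(k-2^\ell)}|^2$ directly; the two give the same estimate.
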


\begin{proof}
For bounding the error, we first fix some coordinate $i\in[n]$. Let
$k'$ be the iteration when $\overline{v}_{i}$ was last updated,
namely, $\overline{v}_{i}^{(k)}=\overline{v}_{i}^{(k')}=v_{i}^{(k')}$.
Since we set $\overline{v}\leftarrow v$ every $2^{\left\lceil \log n\right\rceil }$
steps, we have $k-2^{\left\lceil \log n\right\rceil }\leq k'<k$.
We can write $k'=k_{0}<k_{1}<k_{2}<\cdots<k_{s}=k$ such that $k_{i+1}-k_{i}$
is a power of $2$ and $k_{i+1}-k_{i}$ divides $k_{i+1}$ with $|s|\leq2\left\lceil \log n\right\rceil $.
Hence, we have that
\[
v_{i}^{(k)}-\overline{v}_{i}^{(k)}=v_{i}^{(k_{s})}-v_{i}^{(k_{0})}=\sum_{j=0}^{s-1}(v_{i}^{(k_{j+1})}-v_{i}^{(k_{j})}).
\]
Since $\overline{v}_{i}$ is not updated since step $k'$, we have
$|v_{i}^{(k_{j+1})}-v_{i}^{(k_{j})}|\leq\delta/(2\left\lceil \log n\right\rceil )$
and hence $|v_{i}^{(k)}-\overline{v}_{i}^{(k)}|\leq\delta$. Since
this holds for every $i$, we have that $\|\overline{v}^{(k)}-v^{(k)}\|_{\infty}\leq\delta$.

For the sparsity of $\overline{v}^{(k+1)}-\overline{v}^{(k)}$, we
first bound the size of the set $I_{\ell}\defeq\{i:|v_{i}^{(k)}-v_{i}^{(k-2^{\ell})}|\geq\delta/(2\left\lceil \log n\right\rceil )\}$.
Note that
\[
|I_{\ell}|\cdot\frac{\delta^{2}}{5\log^{2}n}\leq\sum_{i=1}^{n}|v_{i}^{(k)}-v_{i}^{(k-2^{\ell})}|^{2}\leq2^{\ell}\sum_{i=1}^{n}\sum_{t=k-2^{\ell}}^{k-1}|v_{i}^{(t+1)}-v_{i}^{(t)}|^{2}\leq2^{2\ell}\beta^{2}
\]
where we used $\|v^{(t+1)}-v^{(t)}\|_{2}\leq\beta$ at the end. Hence,
we have $|I_{\ell}|=O(2^{2\ell}(\beta/\delta)^{2}\log^{2}n).$ Hence,
the total number of changes is bounded by 
\[
|I|\leq\sum_{\ell=0}^{\ell_{k}}|I_{\ell}|=O(2^{2\ell_{k}}(\beta/\delta)^{2}\log^{2}n).
\]
\end{proof}

\subsection{Inverse Maintenance\label{subsec:IPM_Inverse_Maintenance}}

In this section, we discuss how to maintain the solution of the linear
system (Newton step) efficiently. Although both the matrix and the
vector of the Newton step changes during the algorithm, we can simplify
by moving the vector inside the matrix.
\begin{fact}
\label{fact:block-inverse}For any invertible matrix $\mm\in\R^{n\times n}$
and any vector $v\in\Rn$, we have
\[
\left[\begin{array}{cc}
\mm & v\\
0 & -1
\end{array}\right]^{-1}=\left[\begin{array}{cc}
\mm^{-1} & \mm^{-1}v\\
0 & -1
\end{array}\right].
\]
\end{fact}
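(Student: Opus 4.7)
The plan is to verify the claimed identity by direct block matrix multiplication, which reduces the fact to a one-line computation. Since the right-hand side is an explicit candidate inverse, I only need to check that the product (in either order) equals the $(n+1)\times(n+1)$ identity; uniqueness of the inverse then finishes the proof.

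Concretely, I would compute
\[
\left[\begin{array}{cc}
\mm & v\\
0 & -1
\end{array}\right]\left[\begin{array}{cc}
\mm^{-1} & \mm^{-1}v\\
0 & -1
\end{array}\right]=\left[\begin{array}{cc}
\mm\mm^{-1} & \mm\mm^{-1}v-v\\
0 & 1
\end{array}\right]=\left[\begin{array}{cc}
\mi & 0\\
0 & 1
\end{array}\right],
\]
using $\mm\mm^{-1}=\mi$ in the $(1,1)$ block and the cancellation $\mm\mm^{-1}v-v=v-v=0$ in the $(1,2)$ block; the bottom row is immediate from $0\cdot\mm^{-1}=0$ and $(-1)(-1)=1$. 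Since the block matrix is square and has a right inverse, it is invertible and the right inverse is its unique two-sided inverse, yielding the stated formula.

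There is no real obstacle: the only thing worth mentioning is that invertibility of $\mm$ is exactly what makes the $(1,2)$ cancellation well-defined via $\mm^{-1}v$, and that the Schur-complement of the $(2,2)$ entry $-1$ in this block matrix is $\mm-v\cdot(-1)^{-1}\cdot 0 = \mm$, which is invertible by assumption, so the whole block matrix is invertible. Thus a formal proof is essentially the display above together with the sentence "hence by uniqueness of the inverse, the identity holds."
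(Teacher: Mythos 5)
Your proof is correct. The paper states this as a \emph{Fact} and omits the proof entirely, leaving the verification to the reader; your direct block-multiplication check is exactly the expected argument, and exhibiting a right inverse for a square matrix already suffices (the Schur-complement remark is a harmless redundancy).
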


Hence, the question of maintaining the solution reduces to the problem
of maintaining a column of the inverse of the matrix
\begin{equation}
\mm_{\ox,\os,\overline{r}}\defeq\left(\begin{array}{cccc}
\oms & \omx & \mzero & \nabla\Phi(\overline{r})\\
\ma & \mzero & \mzero & 0\\
\mzero & \mi & \ma^{\top} & 0\\
0 & 0 & 0 & -1
\end{array}\right).\label{eq:M_xsr_def}
\end{equation}
When we update $\ox,\os,\overline{r}$ to $\ox+\delta_{\ox},\os+\delta_{\os},\overline{r}+\delta_{\overline{r}}$
with $q$ coordinates modified in total, there are only $q$ columns
in $\mm$ that change. Hence, we can compute the update of the inverse
of $\mm$ using the Woodbury matrix identity (Equation (\ref{eq:Woodbury})).
The idea of using the Woodbury identity together with fast matrix
multiplication goes back to Vaidya \cite{vaidya1989speeding}.
\begin{lem}
\label{lem:maintain_inverse}Given vectors $\ox,\os,\overline{r}\in\R^{n}$
and the update $\delta_{\ox},\delta_{\os},\delta_{\overline{r}}\in\R^{n}$.
Let $q=\|\delta_{\ox}\|_{0}+\|\delta_{\os}\|_{0}+\|\delta_{\overline{r}}\|_{0}$
and $T_{m,n,\ell}$ be the cost of multiplying an $m\times n$ matrix
with an $n\times\ell$ matrix. Then,
\begin{itemize}
\item Given $\mm_{\ox,\os,\overline{r}}^{-1}$, we can compute $\mm_{\ox+\delta_{\ox},\os+\delta_{\os},\overline{r}+\delta_{\overline{r}}}^{-1}$
in time $O(T_{n,q,n})$.
\item Given $\mm_{\ox,\os,\overline{r}}^{-1}$ and $\mm_{\ox,\os,\overline{r}}^{-1}b$,
we can compute $\mm_{\ox+\delta_{\ox},\os+\delta_{\os},\overline{r}+\delta_{\overline{r}}}^{-1}b$
in time $O(T_{q,q,q}+nq)$.
\end{itemize}
\end{lem}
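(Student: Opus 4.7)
The plan is to express the matrix update as a low-rank perturbation and apply the Woodbury identity. Let $N = 2n + d + 1$ denote the side length of $\mm$ (treated as $N = O(n)$ throughout, as is standard for this setting). Inspecting (\ref{eq:M_xsr_def}), each coordinate change in $\ox$ or $\os$ alters exactly one diagonal entry of $\omx$ or $\oms$, and all changes in $\overline{r}$ alter only the single rightmost column through $\nabla\Phi(\overline{r})$. So the set $J$ of columns of $\mm$ that actually change has size $q' \le q$, and the total number of modified entries is at most $q$. Collecting these column differences, write
\[
\mm_{\ox+\delta_{\ox},\,\os+\delta_{\os},\,\overline{r}+\delta_{\overline{r}}} \;=\; \mm_{\ox,\os,\overline{r}} + \mu\nu^{\top},
\]
with $\mu \in \R^{N \times q'}$ sparse (only $O(q)$ nonzero entries in total) and $\nu \in \R^{N \times q'}$ consisting of the standard basis columns $\{e_j\}_{j \in J}$.

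For Part 1, I use the Woodbury identity
\[
(\mm + \mu\nu^{\top})^{-1} \;=\; \mm^{-1} - (\mm^{-1}\mu)\, K^{-1}\, (\nu^{\top}\mm^{-1}), \qquad K = \mi + \nu^{\top}\mm^{-1}\mu,
\]
under the implicit assumption that $\mm + \mu\nu^{\top}$ remains invertible. I first form $\mm^{-1}\mu$ by exploiting the sparsity of $\mu$: each column coming from an $\ox$ or $\os$ update reduces to a single rescaled column of $\mm^{-1}$, and the at-most-one column from the $\overline{r}$ update is a sparse matrix-vector product; total cost $O(Nq)$. The factor $\nu^{\top}\mm^{-1}$ is a free selection of $q'$ rows of $\mm^{-1}$. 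Forming the $q' \times q'$ matrix $K$ by selecting $q'$ rows of $\mm^{-1}\mu$ costs $O(q^2)$, and inverting $K$ fits within $T_{q,q,q}$. The dominant step is the outer product $(\mm^{-1}\mu)\bigl(K^{-1}(\nu^{\top}\mm^{-1})\bigr)$, an $N \times q'$ by $q' \times N$ multiplication at cost $T_{N,q,N} = O(T_{n,q,n})$; subtracting from $\mm^{-1}$ is $O(N^2)$.

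For Part 2, I apply the same identity directly to $b$:
\[
\mm_{\ox+\delta_{\ox},\,\os+\delta_{\os},\,\overline{r}+\delta_{\overline{r}}}^{-1}\,b \;=\; \mm^{-1}b - (\mm^{-1}\mu)\, K^{-1}\, (\nu^{\top}\mm^{-1}b).
\]
Given $\mm^{-1}b$, the selection $\nu^{\top}\mm^{-1}b$ costs $O(q)$. The matrix $\mm^{-1}\mu$ is still computed as above in $O(Nq)$ time, $K$ and $K^{-1}$ cost $O(q^2 + T_{q,q,q})$, and then applying $K^{-1}$ to a $q'$-vector followed by applying the $N \times q'$ matrix $\mm^{-1}\mu$ to the result costs $O(q^2 + Nq)$. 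Final subtraction is $O(N)$. The main obstacle, and the reason the bounds come out to $O(T_{n,q,n})$ rather than $T_{n,n,q}$ and $O(T_{q,q,q} + nq)$ rather than $O(n^2)$, is the careful sparsity accounting that gives $\mm^{-1}\mu$ in $O(nq)$ and, in Part 2, the right-to-left parenthesization that keeps every $N$-sized operation at the matrix-vector level.
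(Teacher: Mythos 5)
Your proof is correct and follows essentially the same route as the paper: both reduce the update to a low-rank perturbation of $\mm$ and apply the Woodbury identity, then account for costs by observing that the $q\times q$ inverse costs $T_{q,q,q}$ and the remaining products cost $T_{n,q,n}$ (resp.\ $O(nq)$ for the vector version). The only cosmetic difference is the parameterization of the low-rank factor: the paper writes the update as $\mathbf{U}\mathbf{C}\mathbf{V}$ with both $\mathbf{U}$ and $\mathbf{V}$ being column/row selections from the identity, so that $\mm^{-1}\mathbf{U}$ and $\mathbf{V}\mm^{-1}$ are read off from $\mm^{-1}$ with zero work, whereas you absorb $\mathbf{C}$ into $\mu$ and recover the same $O(nq)$ cost for $\mm^{-1}\mu$ by noting each column of $\mu$ is a sparse linear combination of standard basis vectors; the end result and bounds are identical.
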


\begin{proof}
We write $\mm_{0}=\mm_{\ox,\os,\overline{r}}$ and $\mm_{1}=\mm_{\ox+\delta_{\ox},\os+\delta_{\os},\overline{r}+\delta_{\overline{r}}}$.
Note that $\mm_{1}$ and $\mm_{0}$ are off by just $q$ entries.
Hence, we can write
\[
\mm_{1}=\mm_{0}+\mathbf{U}\mathbf{C}\mathbf{V}
\]
where $\mathbf{U}$ consists of $q$ columns of identity matrix, $\mathbf{C}$
is a $q\times q$ matrix and $\mathbf{V}$ consists of $q$ rows of
identity matrix. Hence, the Woodbury matrix identity shows that
\begin{equation}
\mm_{1}^{-1}=(\mm_{0}+\mathbf{U}\mathbf{C}\mathbf{V})^{-1}=\mm_{0}^{-1}-\mm_{0}^{-1}\mathbf{U}(\mathbf{C}^{-1}+\mathbf{V}\mm_{0}^{-1}\mathbf{U})^{-1}\mathbf{V}\mm_{0}^{-1}.\label{eq:Woodbury}
\end{equation}
Note that $\mm_{0}^{-1}\mathbf{U}$, $\mathbf{V}\mm_{0}^{-1}\mathbf{U}$,
$\mathbf{V}\mm_{0}^{-1}$ are just blocks of $\mm_{0}^{-1}$ and no
computation is needed. Hence, we can compute $(\mathbf{C}^{-1}+\mathbf{V}\mm_{0}^{-1}\mathbf{U})^{-1}$
in the time to invert two $q\times q$ matrices, which is $O(T_{q,q,q})$.
The rest of the formula can be computed in $O(T_{n,q,q}+T_{n,q,n})=O(T_{n,q,n})$
time. In total, the runtime is $O(T_{n,q,n})$.

For computing $\mm_{1}^{-1}b$, we note that
\[
\mm_{1}^{-1}b=\mm_{0}^{-1}b-\mm_{0}^{-1}\mathbf{U}(\mathbf{C}^{-1}+\mathbf{V}\mm_{0}^{-1}\mathbf{U})^{-1}\mathbf{V}\mm_{0}^{-1}b.
\]
Since $\mm_{0}^{-1}b$ is given, the above formula can be computed
in $O(T_{q,q,q}+nq)$ time where the $O(nq)$ term comes from multiplying
a $q\times n$ matrix with an $n$-vector and a $n\times q$ matrix
with a $q$-vector.
\end{proof}
To use the previous lemma, we use the following estimate for $T_{n,r,n}$.
\begin{defn}
The exponent of matrix multiplication $\omega$ is the infimum among
all $\omega\geq0$ such that it takes $n^{\omega+o(1)}$ time to multiply
an $n\times n$ matrix by an $n\times n$ matrix. The dual exponent
of matrix multiplication $\alpha$ is the supremum among all $\alpha\geq0$
such that it takes $n^{2+o(1)}$ time to multiply an $n\times n$
matrix by an $n\times n^{\alpha}$ matrix. Currently, $\omega\leq2.3729$
\cite{coppersmith1987matrix,williams2012multiplying,le2014powers,alman2021refined}
and $\alpha\geq0.3138$ \cite{le2014powers,gall2018improved}.
\end{defn}

\begin{lem}
\label{lem:Tnnr_est}For $r\leq n$, we have $T_{n,r,n}=n^{2+o(1)}+n^{\omega-\frac{\omega-2}{1-\alpha}+o(1)}r^{\frac{\omega-2}{1-\alpha}}$.
\end{lem}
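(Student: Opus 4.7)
The plan is to establish the upper bound by combining two constructions, one for each of the two summands on the right-hand side, and interpolating between them via a tensor product of bilinear algorithms. The first observation is that the target exponent is precisely the linear interpolation between the two matrix-multiplication exponents we already control: writing $r = n^\beta$ and $\gamma = (\omega-2)/(1-\alpha)$, the claimed exponent $\omega - \gamma(1-\beta)$ equals $2$ at $\beta = \alpha$ (witnessed by the dual exponent, since $T_{n, n^\alpha, n} = n^{2+o(1)}$ after applying the symmetry of the matrix-multiplication exponent to the definition of $\alpha$) and equals $\omega$ at $\beta = 1$ (the definition of $\omega$). So it is natural to expect a construction that literally tensors the two primitive algorithms.

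For the easy regime $r \le n^\alpha$, I would simply pad the two input matrices with zeros to dimensions $n \times n^\alpha$ and $n^\alpha \times n$ and invoke the dual-exponent algorithm, giving $T_{n,r,n} \le n^{2+o(1)}$, which is exactly the first summand.

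For the main regime $n^\alpha < r \le n$, the key step is to construct the algorithm for $T_{n,r,n}$ as the Kronecker product of two bilinear algorithms: one multiplying $N_1 \times N_1^\alpha$ by $N_1^\alpha \times N_1$ in $N_1^{2+o(1)}$ operations, and one multiplying $N_2 \times N_2$ by $N_2 \times N_2$ in $N_2^{\omega+o(1)}$ operations. Choosing $N_1 = n^{(1-\beta)/(1-\alpha)}$ and $N_2 = n^{(\beta-\alpha)/(1-\alpha)}$ gives $N_1 N_2 = n$ and $N_1^\alpha N_2 = n^\beta = r$, so the product algorithm multiplies an $n \times r$ matrix by an $r \times n$ matrix at cost $N_1^{2+o(1)} N_2^{\omega+o(1)}$. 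A short calculation of the exponent, $(2(1-\beta) + \omega(\beta-\alpha))/(1-\alpha) = \omega - \gamma(1-\beta)$, yields exactly $n^{\omega-\gamma+o(1)} \cdot r^\gamma$, matching the second summand.

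The main obstacle I expect is the Kronecker-product step: I would need to invoke (rather than reprove) the standard fact that tensoring two bilinear matrix-multiplication algorithms for dimensions $(a_i, b_i, c_i)$ yields a bilinear algorithm for dimensions $(a_1 a_2, b_1 b_2, c_1 c_2)$ whose arithmetic cost is the product of the two individual costs. The remaining subtleties are cosmetic: $N_1$ and $N_2$ are generally not integers, and the zero-padding in the first regime enlarges the problem slightly, but both issues are absorbed into the $n^{o(1)}$ factor by replacing each dimension with the nearest integer (or the nearest power of a fixed base) in the standard way.
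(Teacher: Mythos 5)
The paper states this lemma without proof, treating it as a known fact about rectangular matrix multiplication, so there is no internal proof to compare against. Your argument is the standard one: it establishes the upper bound by interpolation between the square exponent $\omega$ (at $r=n$) and the dual exponent $\alpha$ (at $r=n^\alpha$), realized as the Kronecker product of a bilinear algorithm of shape $(N_1,N_1^\alpha,N_1)$ with one of shape $(N_2,N_2,N_2)$, and the exponent arithmetic you give checks out (both expressions simplify to $\bigl(2-\omega\alpha+\beta(\omega-2)\bigr)/(1-\alpha)$, and the two regimes match the two summands). The points you flag as needing to be invoked rather than reproved are exactly the right ones: submultiplicativity of matrix-multiplication cost under tensor products (equivalently, of tensor rank), the Strassen-type equivalence between arithmetic complexity and bilinear/rank complexity needed to make ``time'' compatible with tensoring, the permutation symmetry of the matrix-multiplication tensor that lets you pass from the defining shape $(n,n,n^\alpha)$ of $\alpha$ to $(n,n^\alpha,n)$, and the rounding of $N_1,N_2$ to integers. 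One small technicality worth a sentence in a full write-up: since $\omega$ and $\alpha$ are an infimum and supremum respectively, you must work with fixed $\omega'>\omega$ and $\alpha'<\alpha$ and let them tend to the optimum, and you should note that $N_1,N_2\to\infty$ with $n$ whenever $\beta$ is bounded away from $1$ and from $\alpha$ (with the other regime covered by your padding step), so that the $o(1)$ terms are meaningful. With those standard remarks spelled out, the proof is complete.
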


\begin{algorithm2e}[H]

\caption{$\mathtt{FastRobustStep}(\ma,x,s,t_{\mathrm{start}},t_{\mathrm{end}})$}

\label{alg:FastRobustStep}

\textbf{Assume $2^{2\ell_{*}}\leq n^{\alpha}$.}

\textbf{Define} $r=(xs-t)/t$ and $\Phi$ according to (\ref{eq:Phi_potential})
with $\lambda=16\log40n$.

\textbf{Invariant:} $(x,s)\in\PriR^{\circ}\times\DualR^{\circ}$ and
$\Phi(r)\leq16n$.

Let $t=t_{\mathrm{start}}$, $h=1/(128\lambda\sqrt{n})$ and $n$
be the number of columns in $\ma$.

Let $x^{(0)}=\ox^{(0)}=x,s^{(0)}=\os^{(0)}=s,r^{(0)}=\overline{r}^{(0)}=(xs-t)/t$.

Let $\mathbf{T}=\mm_{\ox^{(0)},\os^{(0)},\overline{r}^{(0)}}^{-1}$
(defined in (\ref{eq:M_xsr_def})) and $u=\mathbf{T}e_{2n+d+1}$.

\Repeat{$t\neq t_{\mathrm{end}}$}{

Let $t'=\max(t/(1+h),t_{\mathrm{end}})$, $\overline{\delta}_{\mu}=-\frac{t'}{32\lambda}\frac{\overline{g}}{\|\overline{g}\|_{2}}$,
$\overline{g}=\nabla\Phi(\overline{r})$.

\uIf{$k=0\mod2^{\ell_{*}}$}{

Update $\mathbf{T}$ to $\mm_{\ox^{(k)},\os^{(k)},\overline{r}^{(k)}}^{-1}$
using Lemma \ref{lem:maintain_inverse}.

$u\leftarrow\mathbf{T}e_{2n+d+1}$, $v\leftarrow u$.

}\Else{

Update $v$ to $\mm_{\ox^{(k)},\os^{(k)},\overline{r}^{(k)}}^{-1}e_{2n+d+1}$
using vector $u$ and Lemma \ref{lem:maintain_inverse}.

}

Let $(\delta_{x},\delta_{s})$ be the first $2n$ coordinates of $v$.

Let $x^{(k+1)}=x^{(k)}+\delta_{x}$, $s^{(k+1)}=s^{(k)}+\delta_{s}$
and $t\leftarrow t'$.

Let $r^{(k+1)}=(x^{(k+1)}s^{(k+1)}-t)/t$.

$\ln\overline{x}^{(k+1)}=\mathtt{SelectVector}(\ln\overline{x}^{(k)},\ln x^{(0)},\ln x^{(1)},\cdots,\ln x^{(k+1)},1/48)$.

$\ln\overline{s}^{(k+1)}=\mathtt{SelectVector}(\ln\overline{s}^{(k)},\ln s^{(0)},\ln s^{(1)},\cdots,\ln s^{(k+1)},1/48)$.

$\overline{r}^{(k+1)}=\mathtt{SelectVector}(\overline{r}^{(k)},r^{(0)},r^{(1)},\cdots,r^{(k+1)},1/(48\lambda))$.

Set $k\leftarrow k+1$.

}

\textbf{Return} $(x,s)$

\end{algorithm2e}

We note that in the algorithm above, we do not compute the inverse
of $\mm$ in every iteration, which would be too expensive. Rather,
we compute $\mm^{-1}b$ as needed by using the previously computed
$\mm^{-1}$ (from possibly many iterations ago) with a low-rank update
using the Woodbury formula that we maintain.

Combining the Lemma \ref{lem:maintain_inverse} with Lemma \ref{lem:SelectVector},
we have the following guarantee.
\begin{lem}
\label{lem:FastRobustStepTime}Setting $2^{2\ell_{*}}=\min(n^{\alpha},n^{2/3})$,
$\mathtt{FastRobustStep}$ (Algorithm \ref{alg:FastRobustStep}) takes
time 
\[
O((n^{\omega+o(1)}+n^{2+1/6+o(1)}+n^{5/2-\alpha/2+o(1)})\log(t_{\mathrm{end}}/t_{\mathrm{start}})).
\]
\end{lem}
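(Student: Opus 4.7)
The plan is to combine three ingredients: the iteration count inherited from $\mathtt{RobustStep}$, the update-sparsity guarantee of $\mathtt{SelectVector}$ (Lemma~\ref{lem:SelectVector}) applied separately to $\ln\ox$, $\ln\os$, and $\overline{r}$, and the inverse-maintenance cost (Lemmas~\ref{lem:maintain_inverse} and~\ref{lem:Tnnr_est}). Each iteration multiplies $t$ by $(1+h)^{-1}$ with $h=1/(128\lambda\sqrt n)$, so the total iteration count is $N=n^{o(1)}\sqrt n\log(t_{\mathrm{end}}/t_{\mathrm{start}})$, and the total work decomposes as (a) the single initial inverse $\mathbf{T}=\mm_{\ox^{(0)},\os^{(0)},\overline r^{(0)}}^{-1}$ costing $O(n^\omega)$, (b) $\lceil N/2^{\ell_*}\rceil$ full $\mathbf{T}$-rebuilds, and (c) $N$ per-step $v$-updates.

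Next I would bound the sparsity of $(\ox,\os,\overline r)$. Lemma~\ref{lem:change_xsr} supplies $\|\ln x^{(k+1)}-\ln x^{(k)}\|_2,\|\ln s^{(k+1)}-\ln s^{(k)}\|_2,\|r^{(k+1)}-r^{(k)}\|_2\le 1/(8\lambda)$, so $\beta=O(1/\lambda)$ for each vector. With thresholds $\delta=1/48$ for $\ln\ox,\ln\os$ and $\delta=1/(48\lambda)$ for $\overline r$, we have $(\beta/\delta)^2=n^{o(1)}$, so Lemma~\ref{lem:SelectVector} gives per-step sparsity $\|\ox^{(k)}-\ox^{(k-1)}\|_0,\|\os^{(k)}-\os^{(k-1)}\|_0,\|\overline r^{(k)}-\overline r^{(k-1)}\|_0\le n^{o(1)}\cdot 2^{2\ell_k}$, where $\ell_k$ is the largest integer with $2^{\ell_k}\mid k$. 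Across a block of $2^{\ell_*}$ consecutive iterations, the $\ell$-th SelectVector level fires $2^{\ell_*-\ell-1}$ times and contributes $n^{o(1)}\cdot 2^{2\ell}$ coordinates each, so the total coordinate changes in the block telescope to $\sum_{\ell=0}^{\ell_*-1}2^{\ell_*-\ell-1}\cdot n^{o(1)}2^{2\ell}=n^{o(1)}\cdot 2^{2\ell_*}$; hence the cumulative change $q_k$ at any step $k$ in the block, relative to the start, satisfies $q_k\le n^{o(1)}\cdot 2^{2\ell_*}$.

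Feeding these into Lemma~\ref{lem:maintain_inverse}, each of the $N/2^{\ell_*}$ full rebuilds costs $T_{n,q,n}$ with $q\le n^{\alpha+o(1)}$, which by Lemma~\ref{lem:Tnnr_est} and the definition of $\alpha$ equals $n^{2+o(1)}$; thus (b) totals $n^{2+o(1)}\cdot N/2^{\ell_*}$. Each $v$-update costs $O(T_{q_k,q_k,q_k}+nq_k)$; since $q_k\le 2^{2\ell_*}\le n^{2/3}$, the $T_{q,q,q}$ term is $O(q_k^{\omega+o(1)})$ and is dominated by $nq_k$, and summing $nq_k$ across a block gives $n\cdot 2^{\ell_*}\cdot n^{o(1)}\cdot 2^{2\ell_*}=n^{1+o(1)}\cdot 2^{3\ell_*}$, so (c) totals $n^{1+o(1)}\cdot N\cdot 2^{2\ell_*}$. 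Substituting $2^{\ell_*}=\min(n^{\alpha/2},n^{1/3})$ and $N=n^{o(1)}\sqrt n\log(t_{\mathrm{end}}/t_{\mathrm{start}})$, part (b) becomes at most $(n^{5/2-\alpha/2+o(1)}+n^{2+1/6+o(1)})\log(t_{\mathrm{end}}/t_{\mathrm{start}})$ and part (c) becomes $n^{2+1/6+o(1)}\log(t_{\mathrm{end}}/t_{\mathrm{start}})$; adding (a) yields the claimed bound.

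The main obstacle is the block-level bookkeeping in the second paragraph: showing the cumulative sparsity is $q_k\le n^{o(1)}\cdot 2^{2\ell_*}$ rather than the naive $2^{\ell_*}$ times the worst per-step bound. The geometric sum across $\mathtt{SelectVector}$ levels saves a factor of $2^{\ell_*}$, which is precisely what allows the balance $2^{2\ell_*}=\min(n^\alpha,n^{2/3})$ to equalize costs (b) and (c) up to the one-off $n^\omega$ initial-inverse term.
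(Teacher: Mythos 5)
Your decomposition into initial inverse, $\mathbf{T}$-updates, and $v$-updates matches the paper's, and your treatment of parts (a) and (c) is sound. But there is a genuine gap in your bound for (b), the $\mathbf{T}$-update cost.

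You claim each $\mathbf{T}$-rebuild (at steps $k$ with $k=0\bmod 2^{\ell_*}$) incorporates $q\le n^{\alpha+o(1)}$ coordinate changes and thus costs $T_{n,q,n}=n^{2+o(1)}$. This comes from your telescoping sum $\sum_{\ell=0}^{\ell_*-1}2^{\ell_*-\ell-1}n^{o(1)}2^{2\ell}=n^{o(1)}2^{2\ell_*}$, which correctly accounts for the changes made at the $2^{\ell_*}-1$ interior steps of a block — all of which have $\ell_j<\ell_*$. What it misses is the change at the block boundary itself: the step $k$ with $k=0\bmod 2^{\ell_*}$ has $\ell_k\geq\ell_*$, and the $\mathtt{SelectVector}$ call that produced $\ox^{(k)}$ (invoked at the end of iteration $k-1$ with parameter $k$) can modify $O(2^{2\ell_k}\log^2 n)$ coordinates, which for $\ell_k$ near $\tfrac{1}{2}\log n$ is $\Theta(n\,\mathrm{polylog}(n))$ — far more than $n^\alpha$. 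These boundary changes must be folded into the $\mathbf{T}$-rebuild at step $k$, so the rebuild cost at that step is $T_{n,n,O(2^{2\ell_k}\log^2 n)}$, not $T_{n,n,n^{\alpha+o(1)}}$. The paper handles this by amortization: a rebuild with level $\ell$ occurs once every $2^\ell$ iterations, giving amortized cost $\widetilde{\widetilde{O}}\bigl(\sum_{\ell=\ell_*}^{\frac{1}{2}\log n}2^{-\ell}T_{n,n,2^{2\ell}}\bigr)=\widetilde{\widetilde{O}}(n^{\omega-1/2}+n^2 2^{-\ell_*})$, and the $n^{\omega-1/2}$ term, over $\sqrt{n}\log(t_{\mathrm{end}}/t_{\mathrm{start}})$ iterations, is precisely what produces the $n^{\omega+o(1)}\log(t_{\mathrm{end}}/t_{\mathrm{start}})$ contribution in the claimed bound. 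Your analysis omits this term entirely — your $n^\omega$ only appears once from the initial inverse — so the argument undercounts the running time and does not establish the lemma. To fix it, you need to (i) include the boundary contribution $O(2^{2\ell_k}\log^2 n)$ in the rebuild cost at step $k$, and (ii) amortize over $\ell_k$ as the paper does, rather than using a uniform bound.
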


\begin{proof}
The bottleneck of $\mathtt{FastRobustStep}$ is the time to update
$v$ and $\mathbf{T}$. This depends on the number of coordinates
updated in $\ox,\os,\overline{r}$. Lemma \ref{lem:change_xsr} shows
that $\ln x,\ln s$ and $r$ change by at most $\alpha\defeq1/(8\lambda)$
in $\ell_{2}$ norm per step. Since we set the error of $\mathtt{SelectVector}$
to be $\delta\defeq1/(48\lambda)$ (or larger), Lemma \ref{lem:SelectVector}
shows that $q_{k}\defeq O(2^{2\ell_{k}}\log^{2}n)$ coordinates in
$\ox,\os,\overline{r}$ are updated at the $k$-th step where $\ell_{k}$
is the largest integer $\ell$ with $k=0\mod2^{\ell}$. We can now
bound all the computation costs as follows.

\textbf{Cost of updating $v$}: We update $u$ whenever $k=0\mod2^{\ell_{*}}$.
Within that $2^{\ell_{*}}$ steps, the number of coordinates updated
in $\ox,\os,\overline{r}$ is bounded by
\[
q\defeq\sum_{k=1}^{2^{\ell_{*}}-1}q_{k}=\sum_{k=1}^{2^{\ell_{*}}-1}O(2^{2\ell_{k}}\log^{2}n)=O(2^{2\ell_{*}}\log^{2}n).
\]
Therefore, $\mm_{\ox^{(k)},\os^{(k)},\overline{r}^{(k)}}$ and $\mathbf{T}^{-1}$
are off by at most $q$ coordinates. Lemma \ref{lem:maintain_inverse}
shows that it takes 
\[
O(T_{q,q,q}+nq)=\widetilde{\widetilde{O}}(2^{2\ell_{*}\omega}+n2^{2\ell_{*}})
\]
time to compute $v=\mm_{\ox^{(k)},\os^{(k)},\overline{r}^{(k)}}^{-1}e_{2n+d+1}$
using $u=\mathbf{T}e_{2n+d+1}$, where we used $\widetilde{\widetilde{O}}$
to omit $n^{o(1)}$ terms.

\textbf{Cost of updating $\mathbf{T}$}: For the $k$-th step that
updates $\mathbf{T}$, the number of coordinates updated in $\ox,\os,\overline{r}$
is bounded by $q+O(2^{2\ell_{k}}\log^{2}n)=O(2^{2\ell_{k}}\log^{2}n)$
where the first term is due to the delayed updates and the second
term is due to the updates at that step. Lemma \ref{lem:maintain_inverse}
shows that it takes $\widetilde{\widetilde{O}}(T_{n,n,2^{2\ell_{k}}})$
time to update $\mathbf{T}.$ Since $2^{2\ell_{k}}$ updates happen
every $2^{\ell_{k}}$ iterations, the amortized cost is
\begin{align*}
\widetilde{\widetilde{O}}(\sum_{\ell=\ell_{*}}^{\frac{1}{2}\log n}2^{-\ell}T_{n,n,2^{2\ell}}) & =\widetilde{\widetilde{O}}(\sum_{\ell=\ell_{*}}^{\frac{1}{2}\log n}(n^{\omega-\frac{\omega-2}{1-\alpha}}2^{2\ell\cdot\frac{\omega-2}{1-\alpha}-\ell}+n^{2}2^{-\ell}))\\
 & =\widetilde{\widetilde{O}}(\sum_{\ell=\ell_{*}}^{\frac{1}{2}\log n}n^{\omega-\frac{\omega-2}{1-\alpha}}2^{2\ell\cdot\frac{\omega-2}{1-\alpha}-\ell}+n^{2}2^{-\ell_{*}})
\end{align*}
where we used Lemma \ref{lem:Tnnr_est}. The sum above is dominated
by either the term at $\ell=\ell_{*}$ or the term at $\ell=\frac{1}{2}\log n$.
Hence, the amortized cost of updating $\mathbf{T}$ is
\[
\widetilde{\widetilde{O}}(n^{\omega-\frac{\omega-2}{1-\alpha}}2^{2\ell_{*}\cdot\frac{\omega-2}{1-\alpha}-\ell_{*}}+n^{\omega-\frac{1}{2}}+n^{2}2^{-\ell_{*}})=\widetilde{\widetilde{O}}(n^{\omega-\frac{1}{2}}+n^{2}2^{-\ell_{*}})
\]
where we used $n^{\omega-\frac{\omega-2}{1-\alpha}}2^{2\ell_{*}\cdot\frac{\omega-2}{1-\alpha}}\leq n^{2}$
since $2^{2\ell_{*}}\leq n^{\alpha}$.

\textbf{Cost of initializing $\mathbf{T}$ and $u$}: $\widetilde{\widetilde{O}}(n^{\omega})$.

Since there are $\sqrt{n}\log(t_{\mathrm{end}}/t_{\mathrm{start}})$
steps, the total cost is
\begin{align*}
 & \widetilde{\widetilde{O}}(n^{\omega}+\sqrt{n}\log(t_{\mathrm{end}}/t_{\mathrm{start}})(2^{2\ell_{*}\omega}+n2^{2\ell_{*}}+n^{\omega-\frac{1}{2}}+n^{2}2^{-\ell_{*}}))\\
= & \widetilde{\widetilde{O}}(\sqrt{n}\log(t_{\mathrm{end}}/t_{\mathrm{start}})(n2^{2\ell_{*}}+n^{\omega-\frac{1}{2}}+n^{2}2^{-\ell_{*}}))
\end{align*}
where we used $2^{2\ell_{*}\omega}\leq n^{\alpha\omega}\leq n$. Putting
$2^{2\ell_{*}}=\min(n^{\alpha},n^{2/3})$, we have
\[
\widetilde{\widetilde{O}}((n^{\omega}+n^{2+1/6}+n^{5/2-\alpha/2})\log(t_{\mathrm{end}}/t_{\mathrm{start}})).
\]
\end{proof}
Following Section \ref{subsec:SolveLP_apx}, we can find the initial
point by modifying the linear program and this gives the following
theorem.
\begin{thm}
\label{thm:RobustLPApproximate}Consider a linear program $\min_{\ma x=b,x\geq0}c^{\top}x$
with $n$ variables and $d$ constraints. Assume the linear program
has inner radius $r$, outer radius $R$ and Lipschitz constant $L$
(see Definition \ref{def:LP_para}), we can find $x$ such that
\begin{align*}
c^{\top}x & \leq\min_{\ma x=b,x\geq0}c^{\top}x+\delta LR,\\
\ma x & =b,\\
x & \geq0.
\end{align*}
in time
\[
O((n^{\omega+o(1)}+n^{2+1/6+o(1)}+n^{5/2-\alpha/2+o(1)})\log(R/(\delta r))).
\]

If we further assume that the solution $x^{*}=\arg\min_{\ma x=b,x\geq0}c^{\top}x$
is unique and that $c^{\top}x\geq c^{\top}x^{*}+\eta LR$ for any
other vertex $x$ of $\{\ma x=b,x\geq0\}$ for some $\eta>\delta\geq0$,
then we have that $\|x-x^{*}\|_{2}\leq\frac{2\delta R}{\eta}$.
\end{thm}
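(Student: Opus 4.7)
The plan is to mirror the structure of the proof of Theorem \ref{thm:SlowSolveLP}, replacing each call to $\mathtt{L2Step}$ by a call to $\mathtt{FastRobustStep}$ (Algorithm \ref{alg:FastRobustStep}). I would define a routine $\mathtt{FastSolveLP}(\ma,b,c,\delta)$ that first constructs the modified linear program (Definition \ref{def:IPM_interior_modified}) with $\overline{R}=5R/\epsilon$ and $t_0 = 2^{16}\epsilon^{-3}n^2(R/r)\cdot LR$, then runs $\mathtt{FastRobustStep}$ twice: once on the modified LP from $t_0$ down to $t = LR$, and once on the original LP from $t = LR$ down to $t_{\mathrm{end}} = \delta LR/(Cn)$ for a suitable constant $C$. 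Unlike in $\mathtt{SlowSolveLP}$, the parameter $\epsilon$ must be taken on the order of $1/(C\lambda) = O(1/\log n)$ rather than $1/(100\sqrt{n})$, so that the transfer from the modified to the original LP preserves the much tighter $\cosh$-potential invariant.

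The center point $(x_c^+,x_c^-,\overline{R})$ together with its dual $x_c/t_0$ is exactly on the central path of the modified LP at $t_0$ by Theorem \ref{thm:IPM_interior}, so the starting $r$ is $0$ and $\Phi(r) = n \le 16n$, which verifies the input invariant of $\mathtt{FastRobustStep}$. After the first call, Lemma \ref{lem:RobustStep} gives $\Phi \le 16n$, and then Lemma \ref{lem:robust_IPM_Phi} implies $\|(x^+ s^+ - LR)/LR\|_\infty \le \log(32n)/\lambda \le 1/16$, so in particular $\frac{5}{6}LR \le x_i^+ s_i^+ \le \frac{7}{6}LR$ and the hypothesis of Theorem \ref{thm:IPM_interior} is met. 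Extracting $(x,s) = (x^+ - x^-,\, s^+ - s^\theta)$ lands in $\PriR \times \DualR$ with $x = (1\pm\epsilon)x^+$ and $s = (1\pm\epsilon) s^+$. The second call drives $t$ down to $t_{\mathrm{end}}$ while preserving $\Phi \le 16n$ and hence $\|x_{\mathrm{end}} s_{\mathrm{end}} - t_{\mathrm{end}}\|_\infty \le t_{\mathrm{end}}/16$, so that $x_{\mathrm{end}}^{\top} s_{\mathrm{end}} \le 2n\, t_{\mathrm{end}} \le \delta LR$, and Lemma \ref{lem:duality_gap} yields the claimed optimality gap.

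The main obstacle is checking that the transition in the middle step preserves the robust invariant. Since $\cosh(\lambda(r + \eta)) \le e^{\lambda|\eta|} \cosh(\lambda r)$ coordinate-wise, a relative perturbation of size $O(\epsilon)$ in each $r_i$ inflates $\Phi$ by at most a factor $e^{O(\lambda \epsilon)}$. Choosing $\epsilon = c/\lambda$ for a small enough constant $c$ keeps $\Phi$ within a constant multiple of $16n$ after the swap; one can then absorb the excess by a handful of additional $\mathtt{FastRobustStep}$ iterations, using the guaranteed potential decrease in \eqref{eq:IPM_robust_decrease}, which gives $\Phi^{(k+1)} \le \Phi^{(k)} - \sqrt{n}/240$ whenever $\Phi^{(k)} \ge 8n$, until $\Phi \le 16n$ again. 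This adjustment of $\epsilon$ only inflates $t_0$ by a $\mathrm{poly}(\log n)$ factor, so the overall ratio $t_0/t_{\mathrm{end}}$ remains $\mathrm{poly}(n, R/(\delta r))$.

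Finally, the runtime bound is immediate from Lemma \ref{lem:FastRobustStepTime}: each call to $\mathtt{FastRobustStep}$ runs in time $O((n^{\omega+o(1)} + n^{2+1/6+o(1)} + n^{5/2-\alpha/2+o(1)}) \log(t_{\mathrm{start}}/t_{\mathrm{end}}))$, and the two calls together span a $\log$-range of $O(\log(nR/(\delta r)))$, giving the stated complexity. The last conclusion on $\|x - x^*\|_2$ is unchanged from the proof of Theorem \ref{thm:SlowSolveLP}: the sublevel set $\mathcal{P}_\eta = \PriR \cap \{c^{\top}x \le c^{\top}x^* + \eta LR\}$ is a cone at $x^*$ by the vertex-gap hypothesis, so $\mathcal{P}_\delta - x^* = (\delta/\eta)(\mathcal{P}_\eta - x^*)$, and combining with the outer-radius diameter bound yields $\|x - x^*\|_2 \le 2\delta R/\eta$.
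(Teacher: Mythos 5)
Your approach mirrors the paper's: replace each call to $\mathtt{L2Step}$ in $\mathtt{SlowSolveLP}$ by $\mathtt{FastRobustStep}$, quote Lemma~\ref{lem:FastRobustStepTime} for the runtime and Lemma~\ref{lem:RobustStep} for the invariant, and keep the final cone argument verbatim. The paper's own proof is essentially one sentence saying exactly this, so the structure is the same. Where you go beyond the paper is in flagging a genuine subtlety that the paper silently assumes away: the transfer step $x = (1\pm\epsilon)x^+$, $s = (1\pm\epsilon)s^+$ relates the potential of the original $n$-variable problem to the potential of the modified $(2n+1)$-variable problem, and these are not automatically compatible. This is a useful observation.

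However, the patch you propose is not airtight. First, the inflation after the swap is not merely the multiplicative factor $e^{O(\lambda\epsilon)}$ from the coordinatewise perturbation: the first call's invariant bounds $\Phi_{2n+1}(r^{+})$ by $16(2n+1)$, and even restricting to the $n$ surviving coordinates this can exceed the target threshold $16n$ by roughly a factor of two \emph{independently of} $\epsilon$. Your conclusion that the swapped potential is ``within a constant multiple of $16n$'' is correct, but it is the dimension mismatch, not just the $\epsilon$-perturbation, that forces the constant above $1$. Second, ``absorb the excess by a handful of additional $\mathtt{FastRobustStep}$ iterations'' appeals to the decrease bound in~\eqref{eq:IPM_robust_decrease}, but that bound was derived \emph{under} the invariant $\Phi\le 16n$ (it feeds through Lemma~\ref{lem:RobustIPM_basic}, which uses $\Phi\le 16n$ to conclude $\|xs-t\|_\infty\le t/16$). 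Applying it when $\Phi$ has just overshot $16n$ is circular as written; one would have to re-verify Lemmas~\ref{lem:RobustIPM_basic} and~\ref{lem:RobustStep} under a relaxed threshold such as $\Phi\le 48n$ (with $\lambda$ enlarged accordingly so $\log(2\Phi)/\lambda$ still clears $1/16$). That relaxed threshold, once fixed globally, also makes your $\epsilon=c/\lambda$ choice (or even the paper's original $\epsilon=1/(100\sqrt{n})$) immediately sufficient, without the extra absorbing iterations. So the cleanest repair is to bump the invariant constant and $\lambda$ once, up front, rather than to allow a temporary violation and argue it self-corrects. Everything else in your write-up (the initial point being exactly central, the $\|r^{+}\|_\infty\le 1/16$ deduction, the duality-gap bound, the runtime, and the cone argument for $\|x-x^{*}\|_2$) is correct and matches the paper's reasoning.
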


\begin{proof}
The algorithm for find $x$ is the same as $\mathtt{SlowSolveLP}$
except that the function $\mathtt{L2Step}$ is replaced by the function
$\mathtt{FastRobustStep}$. The runtime of $\mathtt{FastRobustStep}$
is analyzed in Lemma \ref{lem:FastRobustStepTime}. Since $\mathtt{FastRobustStep}$
is a instantiation of $\mathtt{RobustStep}$, its output is analyzed
in Lemma \ref{lem:RobustStep}.
\end{proof}

\paragraph{Historical Note.}

The interior-point method was pioneered by Karmarkar \cite{Karmarkar84}
and developed in beautiful ways (including \cite{renegar1988polynomial,vaidya1989speeding,nesterov1994interior,nesterov2002riemannian,lee2014path,lee2015faster}).
This classical approach appeared to reach its limit of requiring the
solution of $\sqrt{n}$ linear systems until the paper of Cohen, Lee
and Song \cite{cohen2018solving} which reduced the complexity of
approximate linear programming to $n^{\omega+o(1)}\log(1/\epsilon)$,
by introducing the robust central path method. Using further insights,
their algorithm was derandomized by van den Brand \cite{DBLP:conf/soda/Brand20}.
The technique has been extended in subsequent papers to other optimization
problems \cite{lee2019solving,jiang2020faster,BrandLSS20,jiang2021faster,van2021unifying,huang2021solving},
and has also played a crucial role in faster algorithms for classical
combinatorial optimization problems such as matchings and flows \cite{BrandLNPSS20,BrandLLSSSW21,treeLP}.

\bibliographystyle{plain}
\bibliography{ipm,main}

\appendix

\section{Finding a Point on the Central Path\label{sec:IPM_interior}}

We continue from the discussion in Section \ref{subsec:SolveLP_apx}.

First, we show that $x^{(0)}$ defined in Theorem \ref{thm:IPM_interior}
is indeed on the central path of the modified linear program.
\begin{lem}
\label{lem:ipm_interior_modified}The modified linear program (Definition
\ref{def:IPM_interior_modified}) has an explicit central path point
$x^{(0)}=(x_{c}^{+},x_{c}^{-},\overline{R})$ at $t$.
\end{lem}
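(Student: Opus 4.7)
The plan is to verify directly that $(x_c^+, x_c^-, \overline{R})$ is primal-feasible in the interior, exhibit an explicit dual witness with $A^\top y + \lambda\mathbf{1}$ offsetting things nicely, and then check that every primal-dual coordinate product equals $t$. Since the definition of $x_c^+, x_c^-, \widetilde{c}, \widetilde{b}$ was clearly reverse-engineered for exactly this to work, the proof should be essentially mechanical once the positivity conditions are handled.

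First I would check primal feasibility. The equality $\ma(x_c^+ - x_c^-) = \ma \cdot \ma^\top(\ma\ma^\top)^{-1} b = b$ follows from the definition of $x_c^-$, and $\sum_i x_{c,i}^+ + \overline{R} = \widetilde{b}$ is the definition of $\widetilde{b}$. For strict positivity of $x_c^+$, I use $t \geq 8L\overline{R}$ and $|c_i| \leq \norm{c}_2 \leq L$ to get $c_i + t/\overline{R} \geq -L + 8L = 7L > 0$, so $x_{c,i}^+ = t/(c_i + t/\overline{R}) > 0$; in fact $x_{c,i}^+ \geq (8/9)\overline{R}$. For positivity of $x_c^-$, note that the minimum-norm solution satisfies $\norm{\ma^\top(\ma\ma^\top)^{-1}b}_\infty \leq \norm{\ma^\top(\ma\ma^\top)^{-1}b}_2 \leq R$ (the inner-radius assumption guarantees a nonnegative feasible point of $\ell_2$ norm at most $R$, and the least-squares solution has no larger norm). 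Since $\overline{R} \geq 10R$, we obtain $x_{c,i}^- \geq (8/9)\overline{R} - R > 0$. Finally $\overline{R} > 0$ trivially.

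Next I would construct the dual witness: take $y = 0$, $\lambda = -t/\overline{R}$, and
\[
s^+ = c + (t/\overline{R})\mathbf{1}, \qquad s^- = \widetilde{c}, \qquad s^\theta = t/\overline{R}.
\]
Dual feasibility is direct: $\ma^\top y + \lambda\mathbf{1} + s^+ = -(t/\overline{R})\mathbf{1} + c + (t/\overline{R})\mathbf{1} = c$; $-\ma^\top y + s^- = \widetilde{c}$; and $\lambda + s^\theta = 0$. Positivity: $s^+ > 0$ by the bound on $c_i + t/\overline{R}$ above, $s^- = t/x_c^- > 0$ since $x_c^- > 0$, and $s^\theta = t/\overline{R} > 0$. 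The central-path identity is then checked componentwise:
\[
x_{c,i}^+ s_i^+ = \tfrac{t}{c_i + t/\overline{R}}\bigl(c_i + t/\overline{R}\bigr) = t, \qquad x_{c,i}^- s_i^- = x_{c,i}^- \cdot \tfrac{t}{x_{c,i}^-} = t, \qquad \overline{R}\cdot s^\theta = t.
\]

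The only step that takes any care is the positivity of $x_c^-$, which requires chaining the two parameter assumptions $t \geq 8L\overline{R}$ (to lower-bound $x_c^+$ by a constant multiple of $\overline{R}$) and $\overline{R} \geq 10R$ (to dominate the entrywise bound on the least-squares shift). Everything else is direct algebraic verification guided by the choice of $\widetilde{c} = t/x_c^-$ and the extra slack variable $x^\theta$ that absorbs the $\lambda\mathbf{1}$ term in the dual.
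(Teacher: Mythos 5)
Your proof is correct and takes essentially the same route as the paper: verify the equality constraints directly, check positivity of $x_c^+$ via $t\geq 8L\overline R$ and of $x_c^-$ via $\|x_\circ\|_2\leq R\ll\overline R$, and exhibit the same dual witness (the paper writes $s^+ = t/x_c^+$, which equals your $c+(t/\overline R)\mathbf 1$). One small attribution slip worth noting: the bound $\|x_\circ\|_2\leq R$ uses the \emph{outer}-radius assumption (the inner radius merely ensures a nonnegative feasible point exists, and then the outer radius caps its norm, which in turn caps the min-norm solution).
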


\begin{proof}
Recall that we say $(x^{+},x^{-},x^{\theta})$ is on the central path
at $t$ if $x^{+},x^{-},x^{\theta}$ are positive and it satisfies
the following equation
\begin{align}
\ma x^{+}-\ma x^{-} & =b,\nonumber \\
\sum_{i=1}^{n}x_{i}^{+}+x^{\theta} & =\widetilde{b},\nonumber \\
\ma^{\top}y+\lambda+s^{+} & =c,\label{eq:KKT_modified_LP}\\
-\ma^{\top}y+s^{-} & =\widetilde{c},\nonumber \\
\lambda+s^{\theta} & =0,\nonumber 
\end{align}
for some $s^{+},s^{-}\in\R_{>0}^{n}$, $s^{\theta}>0$, $y\in\R^{d}$
and $\lambda\in\R$.

Now, we verify the solution $x^{+}=\frac{t}{c+t/\overline{R}}$, $x^{-}=\frac{t}{c+t/\overline{R}}-x_{\circ}$,
$x^{\theta}=\overline{R}$, $x_{\circ}=\ma^{\top}(\ma\ma^{\top})^{-1}b$,
$y=0$, $s^{+}=\frac{t}{x^{+}}$, $s^{-}=\frac{t}{x^{-}}$, $s^{\theta}=\frac{t}{x^{\theta}}$,
$\lambda=-s^{\theta}$. Using $\ma x_{\circ}=b$, one can check it
satisfies all the equality constraints above. 

For the inequality constraints, using $\|c\|_{\infty}\leq L$ and
$t\geq8L\overline{R}$, we have
\begin{equation}
\frac{3}{4}\overline{R}\leq\frac{t}{L+t/\overline{R}}\leq x_{i}^{+}\leq\frac{t}{-L+t/\overline{R}}\leq\frac{3}{2}\overline{R}\label{eq:x_pos_bound}
\end{equation}
and hence $x^{+}>0$ and so is $s^{+}$. Since $\|x_{\circ}\|_{2}\leq R\leq\frac{\overline{R}}{2}$
and $x_{i}^{+}\geq\frac{3}{4}\overline{R}$ for all $i$, we have
$x_{i}^{-}\geq0$ for all $i$. Hence, $x^{-}$ and $s^{-}$ are positive.
Finally, $x^{\theta}$ and $s^{\theta}$ are positive. This proves
that $(\frac{t}{c+t/\overline{R}},\frac{t}{c+t/\overline{R}}-x_{\circ},\overline{R})$
is on the central path point at $t$.
\end{proof}
Next, we show that the near-central-path point $(x,s)$ at $t=LR$
is far from the constraints $x^{+}\geq0$ and is close to the constraints
$x^{-}\geq0$. The proof for both involves the same idea: use the
optimality condition of $x$. Throughout the rest of the section,
we are given $(x,s)\in\mathcal{P}_{\overline{R},t}\times\mathcal{D}_{\overline{R},t}$
such that $\mu=xs$ satisfies
\[
\frac{5}{6}LR\leq\mu\leq\frac{7}{6}LR.
\]
We write $\mu$ into its three parts $(\mu^{+},\mu^{-},\mu^{\theta})$.
By Lemma \ref{lem:primal_dual_representation}, we have that $x\defeq(x^{+},x^{-},x^{\theta})$
minimizes the function
\[
f(x^{+},x^{-},x^{\theta})\defeq c^{\top}x^{+}+\widetilde{c}^{\top}x^{-}-\sum_{i=1}^{n}\mu_{i}^{+}\log x_{i}^{+}-\sum_{i=1}^{n}\mu_{i}^{-}\log x_{i}^{-}-\mu^{\theta}\log x^{\theta}
\]
over the domain $\mathcal{P}_{\overline{R},t}$. The gradient of $f$
is a bit complicated. We avoid it by considering the directional derivative
at $x$ along the direction ``$v-x$'' where $v$ is the point such
that $\ma v=b$ and $v\geq r$. Since our domain is in $\mathcal{P}_{\overline{R},t}\subset\R^{2n+1}$,
we need to lift $v$ to higher dimension. So we define the point
\begin{align*}
v^{-} & =\min(x^{-},\frac{8L\overline{R}}{t}\cdot R),\\
v^{+} & =v+v^{-},\\
v^{\theta} & =\widetilde{b}-\sum_{i=1}^{n}v_{i}^{+}.
\end{align*}
First, we need to get some basic bounds on $\widetilde{b}$ and $\widetilde{c}$.
\begin{lem}
\label{lem:ipm_propert_modified}We have that $\frac{3}{4}n\overline{R}\leq\widetilde{b}\leq3n\overline{R}$
and $\widetilde{c}_{i}\geq t/(2\overline{R})$ for all $i$.
\end{lem}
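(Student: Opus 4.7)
The plan is to unpack the definitions of $\widetilde{b}$ and $\widetilde{c}$ and reuse the coordinate bounds on $x_c^+$ that were already established (implicitly) in the proof of Lemma \ref{lem:ipm_interior_modified}.

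First, for $\widetilde{b} = \sum_i x_{c,i}^+ + \overline{R}$, I will just sum the pointwise estimate on $x_c^+$. Recall $x_{c,i}^+ = t/(c_i + t/\overline{R})$, and since $\|c\|_\infty \leq \|c\|_2 \leq L$ together with $t \geq 8L\overline{R}$ gives $|c_i| \leq L \leq t/(8\overline{R})$, we obtain
\[
\tfrac{3}{4}\overline{R}\;\leq\;x_{c,i}^+\;\leq\;\tfrac{3}{2}\overline{R}
\]
exactly as in equation \eqref{eq:x_pos_bound}. Summing and adding $\overline{R}$ gives $\tfrac{3n}{4}\overline{R} \leq \sum_i x_{c,i}^+ + \overline{R} \leq \tfrac{3n}{2}\overline{R} + \overline{R}$, which is contained in $[\tfrac{3n}{4}\overline{R},\, 3n\overline{R}]$ (using $n\geq 1$ for the upper bound, since $\tfrac{3n}{2}+1 \leq 3n$). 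That disposes of the first claim.

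For the second claim, I need to upper bound $x_{c,i}^- = x_{c,i}^+ - [\ma^\top(\ma\ma^\top)^{-1}b]_i$ by $2\overline{R}$, so that $\widetilde{c}_i = t/x_{c,i}^- \geq t/(2\overline{R})$. The key observation is that $x_\circ \defeq \ma^\top(\ma\ma^\top)^{-1}b$ is the minimum-$\ell_2$-norm solution of $\ma x = b$. Since the primal feasible region $\PriR$ is nonempty (by the inner radius assumption) and every $x \in \PriR$ satisfies $\|x\|_2 \leq R$ by definition of the outer radius, we get $\|x_\circ\|_2 \leq R$, and hence $\|x_\circ\|_\infty \leq R \leq \overline{R}/10$ using $\overline{R} \geq 10R$. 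Combined with the upper bound on $x_{c,i}^+$, this gives
\[
x_{c,i}^- \;\leq\; \tfrac{3}{2}\overline{R} + \tfrac{1}{10}\overline{R} \;=\; \tfrac{8}{5}\overline{R} \;\leq\; 2\overline{R},
\]
so $\widetilde{c}_i \geq t/(2\overline{R})$.

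The only non-routine point is the justification $\|x_\circ\|_\infty \leq R$: it requires noticing that the least-norm solution lies inside the same $\ell_2$ ball as the feasible set, via the minimality of $\|x_\circ\|_2$ among all solutions of $\ma x = b$. Everything else is direct substitution of the definitions with the constants $\overline{R} \geq 10R$ and $t \geq 8L\overline{R}$.
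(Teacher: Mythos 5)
Your proof is correct and follows essentially the same route as the paper: both parts rely on the coordinate bounds $\tfrac{3}{4}\overline{R}\leq x_{c,i}^{+}\leq\tfrac{3}{2}\overline{R}$ from \eqref{eq:x_pos_bound}, and the bound on $\widetilde{c}$ reduces to showing $\|x_{\circ}\|_{2}\leq R$. One small but genuine contribution of your write-up: the paper invokes the outer-radius assumption directly to conclude $\|x_{\circ}\|_{2}\leq R$, but the outer radius only bounds \emph{nonnegative} solutions of $\ma x=b$, and $x_{\circ}$ need not be nonnegative; your explicit appeal to the minimality of $\|x_{\circ}\|_{2}$ among all solutions of $\ma x=b$, compared against some $x\in\PriR$ with $\|x\|_{2}\leq R$, is exactly the missing link and makes the step airtight.
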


\begin{proof}
By (\ref{eq:x_pos_bound}), we have $\frac{3}{4}\overline{R}\leq x_{c,i}^{+}\leq\frac{3}{2}\overline{R}$.
By the definition of $\widetilde{b}$, we have
\[
\widetilde{b}=\sum_{i=1}^{n}x_{c,i}^{+}+\overline{R}\leq\frac{3}{2}n\overline{R}+\overline{R}\leq3n\overline{R}.
\]
Similarly, we have $\widetilde{b}=\sum_{i}x_{i}^{+}+\overline{R}\geq\frac{3}{4}n\overline{R}$. 

For the bound on $\widetilde{c}$, recall that $\widetilde{c}=t/x_{c}^{-}$
with $x_{c}^{-}=x_{c}^{+}-x_{\circ}$ and $x_{\circ}=\ma^{\top}(\ma\ma^{\top})^{-1}b=\arg\min_{\ma x=b}\|x\|_{2}$.
Since we assumed the linear program has outer radius $R$, we have
that $\|x_{\circ}\|_{2}\leq R$. Hence, 
\[
x_{c,i}^{-}\leq\frac{3}{2}\overline{R}+R\leq2\overline{R}.
\]
Therefore, $\widetilde{c}\geq t/(2\overline{R})$.
\end{proof}
The following lemma shows that $(v^{+},v^{-},v^{\theta})\in\mathcal{P}_{\overline{R},t}$.
\begin{lem}
\label{lem:IPM_exact_inner_pt}We have that $(v^{+},v^{-},v^{\theta})\in\mathcal{P}_{\overline{R},t}$.
Furthermore, we have $v^{\theta}\geq\frac{1}{2}n\overline{R}$.
\end{lem}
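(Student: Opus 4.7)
The plan is to verify each defining condition of $\mathcal{P}_{\overline{R},t}$ for the triple $(v^+,v^-,v^\theta)$ and then establish the quantitative lower bound on $v^\theta$. Nonnegativity of $v^-$ is immediate since $v^-_i = \min(x^-_i, \tfrac{8L\overline{R}}{t}R)$ and $x^-_i\ge 0$; nonnegativity of $v^+$ follows because $v\ge r\mathbf{1}$ by the inner-radius hypothesis (Definition \ref{def:LP_para}) and $v^-\ge 0$. The linear constraint $\ma(v^+-v^-)=\ma v=b$ is automatic from the choice of $v$, and $\sum_i v^+_i + v^\theta=\widetilde{b}$ holds by the very definition of $v^\theta$. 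So all conditions in $\mathcal{P}_{\overline{R},t}$ reduce to checking $v^\theta\ge 0$, which will be subsumed by the stronger bound $v^\theta\ge\tfrac12 n\overline{R}$.

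For the quantitative bound, I would start from $v^\theta=\widetilde{b}-\sum_i v^+_i=\widetilde{b}-\sum_i v_i-\sum_i v^-_i$ and control each of the two sums on the right. For $\sum_i v_i$, since $v\ge 0$ and $\ma v=b$, the outer-radius property gives $\|v\|_2\le R$, and then Cauchy--Schwarz yields $\sum_i v_i=\langle v,\mathbf{1}\rangle\le\sqrt{n}\,R$. For $\sum_i v^-_i$, the definition of $v^-$ as a coordinate-wise minimum with $\tfrac{8L\overline{R}}{t}R$ immediately gives $\sum_i v^-_i\le n\cdot\tfrac{8L\overline{R} R}{t}$. Combining with the lower bound $\widetilde{b}\ge\tfrac34 n\overline{R}$ from Lemma \ref{lem:ipm_propert_modified} yields
\[
v^\theta\;\ge\;\tfrac34 n\overline{R}-\sqrt{n}\,R-\tfrac{8nL\overline{R} R}{t}.
\]

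The hypothesis $\overline{R}\ge 10R$ in Definition \ref{def:IPM_interior_modified} gives $\sqrt{n}\,R\le\tfrac{1}{10}n\overline{R}$, and the hypothesis $t\ge 8L\overline{R}$ combined with $\overline{R}\ge 10R$ gives $\tfrac{8nL\overline{R} R}{t}\le\tfrac{nR}{\overline{R}}\cdot\overline{R}\cdot\tfrac{1}{\overline{R}}\cdot\overline{R}=nR\le\tfrac{1}{10}n\overline{R}$. Together these estimates yield $v^\theta\ge(\tfrac34-\tfrac15)n\overline{R}\ge\tfrac12 n\overline{R}$, which establishes both $v^\theta>0$ and the extra quantitative claim. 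No step is truly delicate; the only point requiring care is tracking how the two ``slack'' terms $\sqrt{n}R$ and $\tfrac{8nL\overline{R} R}{t}$ are both dominated by a small constant times $n\overline{R}$, and this relies precisely on the two structural inequalities $\overline{R}\ge 10R$ and $t\ge 8L\overline{R}$ baked into Definition \ref{def:IPM_interior_modified}.
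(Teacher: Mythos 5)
Your proof is correct and follows essentially the same route as the paper: verify the equality constraints by construction, verify nonnegativity of $v^+,v^-$ from $x^-\geq 0$ and $v\geq r\mathbf{1}$, and lower-bound $v^\theta$ via $\widetilde b\geq\tfrac34 n\overline R$ together with bounds on $\sum_i v_i$ and $\sum_i v_i^-$. The only cosmetic difference is that you use Cauchy--Schwarz to get $\sum_i v_i\leq\sqrt n\,R$ whereas the paper uses the simpler coordinate-wise bound $v_i\leq R$ (giving $nR$), but both are dominated by $\tfrac{1}{10}n\overline R$ via $\overline R\geq 10R$, so the conclusion is identical.
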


\begin{proof}
Note that $(v^{+},v^{-},v^{\theta})$ satisfies the linear constraints
of $\mathcal{P}_{\overline{R},t}$ by construction. It suffices to
prove the vector is positive. Since $x^{-}>0$, we have $v^{-}>0$.
Since $v\geq r$, we also have $v^{+}>0$. For $v^{\theta}$, we use
$\widetilde{b}\geq\frac{3}{4}n\overline{R}$ (Lemma \ref{lem:ipm_propert_modified}),
$v\leq R$ and $v^{-}\leq\frac{8L\overline{R}}{t}\cdot R\leq R$ to
get
\begin{align*}
v^{\theta}=\widetilde{b}-\sum_{i=1}^{n}v_{i}^{+} & \geq\frac{3}{4}n\overline{R}-\sum_{i=1}^{n}(v_{i}+v_{i}^{-})\geq\frac{1}{2}n\overline{R}.
\end{align*}
\end{proof}
Next, we define the path $p(t)=(1-t)(x^{+},x^{-},x^{\theta})+t(v^{+},v^{-},v^{\theta})$.
Since $p(0)$ minimizes $f$, we have that $\frac{d}{dt}f(p(t))|_{t=0}\geq0$.
In particular, we have
\begin{align}
0 & \leq\frac{d}{dt}f(p(t))|_{t=0}\nonumber \\
 & =c^{\top}(v^{+}-x^{+})+\widetilde{c}^{\top}(v^{-}-x^{-})-\sum_{i=1}^{n}\frac{\mu_{i}^{+}}{x_{i}^{+}}(v^{+}-x^{+})_{i}-\sum_{i=1}^{n}\frac{\mu_{i}^{-}}{x_{i}^{-}}(v^{-}-x^{-})_{i}-\frac{\mu^{\theta}}{x^{\theta}}(v_{\theta}-x^{\theta})\nonumber \\
 & =\frac{\mu^{\theta}}{x^{\theta}}(x^{\theta}-v_{\theta})+\sum_{i=1}^{n}(c_{i}-\frac{\mu_{i}^{+}}{x_{i}^{+}})(v^{+}-x^{+})_{i}+\sum_{i=1}^{n}(\widetilde{c}_{i}-\frac{\mu_{i}^{-}}{x_{i}^{-}})(v^{-}-x^{-})_{i}.\label{eq:LP_exact_3_term}
\end{align}
Now, we bound each term one by one. For the first term, we note that
\begin{equation}
\frac{\mu^{\theta}}{x^{\theta}}(x^{\theta}-v_{\theta})\leq\mu^{\theta}\leq2LR.\label{eq:LP_exact_3_term_1}
\end{equation}
For the second term in (\ref{eq:LP_exact_3_term}), we have the following
\begin{lem}
\label{lem:LP_exact_3_term_1}We have that $\sum_{i=1}^{n}(c_{i}-\frac{\mu_{i}^{+}}{x_{i}^{+}})(v^{+}-x^{+})_{i}\leq4nL\overline{R}-\frac{LRr}{2\min_{i}x_{i}^{+}}$.
\end{lem}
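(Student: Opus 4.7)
The plan is to split the sum
\[
\sum_{i=1}^{n}\Bigl(c_i - \frac{\mu_i^+}{x_i^+}\Bigr)(v_i^+ - x_i^+)
= c^\top(v^+ - x^+) + \sum_{i=1}^n \mu_i^+ - \sum_{i=1}^n \frac{\mu_i^+ v_i^+}{x_i^+}
\]
and bound the first two groups by $O(nL\overline{R})$ (to land inside the $4nL\overline{R}$ budget), while extracting the negative $-\frac{LRr}{2\min_i x_i^+}$ contribution from the last group.

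First I would bound $c^\top(v^+ - x^+)$. For $c^\top v^+$, write $v^+ = v + v^-$. By the outer-radius assumption $\|v\|_2 \le R$, and coordinatewise $v_i^- \le \frac{8L\overline{R}}{t}R \le R$ (since $t \ge 8L\overline{R}$), so $\|v^-\|_2 \le \sqrt{n}R$. Cauchy--Schwarz with $\|c\|_2\le L$ then gives $|c^\top v^+| \le LR + L\sqrt{n}R$. For $-c^\top x^+$, use $\|x^+\|_1 \le \widetilde{b} \le 3n\overline{R}$ (Lemma \ref{lem:ipm_propert_modified}) and $\|c\|_\infty \le L$ to get $|c^\top x^+| \le 3nL\overline{R}$. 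So $c^\top(v^+-x^+) \le LR + L\sqrt{n}R + 3nL\overline{R}$.

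Next I would bound $\sum_i \mu_i^+ \le \frac{7}{6}nLR$ from the assumption $\mu_i^+ \le \frac{7}{6}LR$. The key step is the lower bound on $\sum_i \frac{\mu_i^+ v_i^+}{x_i^+}$: since $v_i^+ \ge v_i \ge r$ and $\mu_i^+ \ge \frac{5}{6}LR$, every summand is nonnegative, and picking just the index $i^*$ achieving $\min_i x_i^+$ gives
\[
\sum_{i=1}^n \frac{\mu_i^+ v_i^+}{x_i^+} \;\ge\; \frac{(5/6)LR \cdot r}{\min_i x_i^+} \;\ge\; \frac{LRr}{2\min_i x_i^+}.
\]
Therefore $-\sum_i \frac{\mu_i^+ v_i^+}{x_i^+} \le -\frac{LRr}{2\min_i x_i^+}$.

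Finally I would collect constants using $\overline{R} \ge 10R$: the ``error'' pieces $LR + L\sqrt{n}R + \frac{7}{6}nLR$ are all bounded by a small multiple of $nL\overline{R}$ (concretely each is at most $0.2\, nL\overline{R}$), so their sum with $3nL\overline{R}$ is at most $4nL\overline{R}$. The main obstacle is really just the bookkeeping of these constants and verifying that dropping the other nonnegative summands in $\sum \mu_i^+ v_i^+/x_i^+$ is tight enough to recover exactly the stated $\frac{1}{2}$ constant; everything else is a direct application of Cauchy--Schwarz, the parameter bounds from Lemma \ref{lem:ipm_propert_modified}, and the definitions of $v^+$ and $\mu^+$.
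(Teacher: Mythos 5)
Your proof is correct and follows the paper's approach at a structural level: expand the inner product, upper-bound the positive contributions by a multiple of $nL\overline{R}$, and extract the crucial negative term $-\frac{LRr}{2\min_i x_i^+}$ from $\sum_i \mu_i^+ v_i^+/x_i^+$ by discarding all nonnegative summands except the one at $\arg\min_i x_i^+$ and using $\mu_i^+\geq\frac{5}{6}LR$, $v_i^+\geq v_i\geq r$.

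The one substantive divergence is the handling of the $-c^\top x^+$ term. You retain it and bound $|c^\top x^+|\leq\|c\|_\infty\|x^+\|_1\leq 3nL\overline{R}$, then pay for this by using the much tighter Cauchy--Schwarz estimate $|c^\top v^+|\leq\|c\|_2(\|v\|_2+\|v^-\|_2)\leq L(1+\sqrt{n})R$. The paper instead simply drops $-\sum_i c_i x_i^+$ in the second line of its chain (which is only valid if $c^\top x^+\geq 0$, not guaranteed for general $c$) and then uses the looser H\"older bound $\|c\|_\infty\|v^+\|_1\leq 3nL\overline{R}$. The two allocations land on the same $4nL\overline{R}$ budget, but your version is the more careful one and in fact repairs a small unjustified step in the paper's own argument; it also requires noticing that $v^+$ is coordinatewise at most $2R$, a fact the paper does not exploit in this lemma.
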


\begin{proof}
Note that
\begin{align*}
\sum_{i=1}^{n}(c_{i}-\frac{\mu_{i}^{+}}{x_{i}^{+}})(v^{+}-x^{+})_{i} & =\sum_{i=1}^{n}(c_{i}v_{i}^{+}-\frac{\mu_{i}^{+}}{x_{i}^{+}}v_{i}^{+}-c_{i}x_{i}^{+}+\mu_{i}^{+})\\
 & \leq\sum_{i=1}^{n}c_{i}v_{i}^{+}+\sum_{i=1}^{n}\mu_{i}^{+}-\sum_{i=1}^{n}\frac{\mu_{i}^{+}}{x_{i}^{+}}v_{i}^{+}\\
 & \leq\|c\|_{\infty}\|v^{+}\|_{1}+2nLR-\frac{1}{2}\sum_{i=1}^{n}\frac{LRr}{x_{i}^{+}}
\end{align*}
where we used $\mu_{i}^{+}\in[\frac{LR}{2},2LR]$ and $v_{i}^{+}\geq v_{i}\geq r$
at the end. The result follows from $\|c\|_{\infty}\leq L$, $\|v^{+}\|_{1}\leq\widetilde{b}\leq3n\overline{R}$
(Lemma \ref{lem:ipm_propert_modified}).
\end{proof}
For the third term in (\ref{eq:LP_exact_3_term}), we have the following
\begin{lem}
\label{lem:LP_exact_3_term_2}We have that $\sum_{i=1}^{n}(\widetilde{c}_{i}-\frac{\mu_{i}^{-}}{x_{i}^{-}})(v^{-}-x^{-})_{i}\leq2LR-\frac{t}{4\overline{R}}\max_{i}x_{i}^{-}$.
\end{lem}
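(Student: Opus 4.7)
The plan is to exploit the truncation $v_{i}^{-}=\min(x_{i}^{-},8L\overline{R}R/t)$: the summand vanishes whenever $v_{i}^{-}=x_{i}^{-}$, and on the complementary set the factor $\widetilde{c}_{i}-\mu_{i}^{-}/x_{i}^{-}$ is uniformly bounded below by $t/(4\overline{R})$, so every summand is nonpositive and controlled linearly in $x_{i}^{-}$. Then the maximum-$x^{-}$ index alone supplies the $-(t/(4\overline{R}))\max_{i}x_{i}^{-}$ contribution, and all other indices can simply be discarded.

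Concretely, I would let $S=\{i\,:\,x_{i}^{-}>8L\overline{R}R/t\}$, so that for $i\notin S$ one has $v_{i}^{-}-x_{i}^{-}=0$ and for $i\in S$ one has $v_{i}^{-}-x_{i}^{-}=8L\overline{R}R/t-x_{i}^{-}<0$. Lemma \ref{lem:ipm_propert_modified} gives $\widetilde{c}_{i}\geq t/(2\overline{R})$, and the hypothesis $\mu_{i}^{-}\leq\tfrac{7}{6}LR\leq 2LR$ combined with the defining inequality of $S$ yields
\[
\frac{\mu_{i}^{-}}{x_{i}^{-}}\leq\frac{2LR\cdot t}{8L\overline{R}R}=\frac{t}{4\overline{R}},\qquad\widetilde{c}_{i}-\frac{\mu_{i}^{-}}{x_{i}^{-}}\geq\frac{t}{2\overline{R}}-\frac{t}{4\overline{R}}=\frac{t}{4\overline{R}}.
\]
Using the elementary fact that $a\geq b>0$ and $c\leq 0$ force $ac\leq bc$, I would then conclude, for each $i\in S$, the per-index bound
\[
(\widetilde{c}_{i}-\mu_{i}^{-}/x_{i}^{-})(v_{i}^{-}-x_{i}^{-})\leq\frac{t}{4\overline{R}}\Bigl(\frac{8L\overline{R}R}{t}-x_{i}^{-}\Bigr)=2LR-\frac{t}{4\overline{R}}x_{i}^{-}\leq 0.
\]

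To finish, I would set $i^{*}=\arg\max_{i}x_{i}^{-}$ and split into cases. If $i^{*}\in S$, keep the displayed per-index bound for the $i^{*}$-summand and discard the remaining summands in $S$ using that they are nonpositive (those outside $S$ are already zero), giving $\sum\leq 2LR-(t/(4\overline{R}))\max_{i}x_{i}^{-}$. If $i^{*}\notin S$, then $S=\emptyset$ and the entire sum is $0$; in that case $\max_{i}x_{i}^{-}\leq 8L\overline{R}R/t$ makes the right-hand side $2LR-(t/(4\overline{R}))\max_{i}x_{i}^{-}\geq 0$, and the claim holds trivially. The main subtlety is that the threshold $8L\overline{R}R/t$ appearing in the definition of $v^{-}$ is tuned precisely to force $\mu_{i}^{-}/x_{i}^{-}$ below $\widetilde{c}_{i}$ on $S$; this sign control is exactly what eliminates a potential $\sum\mu_{i}^{-}=O(nLR)$ contribution and delivers the clean $n$-free constant $2LR$.
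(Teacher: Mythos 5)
Your proof is correct and follows essentially the same route as the paper's: you isolate the set $S$ where the truncation is active, lower-bound $\widetilde{c}_{i}-\mu_{i}^{-}/x_{i}^{-}$ by $t/(4\overline{R})$ on $S$ using $\widetilde{c}_i\geq t/(2\overline{R})$ and $\mu_i^-\leq 2LR$, and then exploit nonpositivity of $(v^--x^-)_i$ to retain only the maximizing index. The only cosmetic difference is that you make the case $S=\emptyset$ explicit, while the paper handles it implicitly through the observation that $\sum_i(v^--x^-)_i\leq\frac{8L\overline{R}}{t}R-\max_ix_i^-$ holds in either case.
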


\begin{proof}
Using $v^{-}=\min(x^{-},\frac{8L\overline{R}}{t}\cdot R)$, we have
$v_{i}^{-}\leq x_{i}^{-}$. We can ignore the terms with $v_{i}^{-}=x_{i}^{-}$. 

For $v_{i}^{-}<x_{i}^{-}$, we have $x_{i}^{-}\geq\frac{8L\overline{R}}{t}R$.
Using $\widetilde{c}_{i}\geq\frac{t}{2\overline{R}}$ (Lemma \ref{lem:ipm_propert_modified}),
we have
\[
\widetilde{c}_{i}-\frac{\mu_{i}^{-}}{x_{i}^{-}}\geq\widetilde{c}_{i}-\frac{\mu_{i}^{-}}{\frac{8L\overline{R}}{t}R}\geq\widetilde{c}_{i}-\frac{2LR}{\frac{8L\overline{R}}{t}R}=\widetilde{c}_{i}-\frac{t}{4\overline{R}}\geq\frac{t}{4\overline{R}}.
\]
Hence, we have
\[
\sum_{i=1}^{n}(\widetilde{c}_{i}-\frac{\mu_{i}^{-}}{x_{i}^{-}})(v^{-}-x^{-})_{i}\leq\frac{t}{4\overline{R}}\sum_{i=1}^{n}(v^{-}-x^{-})_{i}\leq\frac{t}{4\overline{R}}(\frac{8L\overline{R}}{t}\cdot R-\max_{i}x_{i}^{-}).
\]
\end{proof}
Combining (\ref{eq:LP_exact_3_term}), (\ref{eq:LP_exact_3_term_1}),
Lemma \ref{lem:LP_exact_3_term_1} and Lemma \ref{lem:LP_exact_3_term_2},
we have
\begin{align*}
0 & \leq\frac{\mu^{\theta}}{x^{\theta}}(x^{\theta}-v_{\theta})+\sum_{i=1}^{n}(c_{i}-\frac{\mu_{i}^{+}}{x_{i}^{+}})(v^{+}-x^{+})_{i}+\sum_{i=1}^{n}(\widetilde{c}_{i}-\frac{\mu_{i}^{-}}{x_{i}^{-}})(v^{-}-x^{-})_{i}\\
 & \leq2LR+4nL\overline{R}-\frac{LRr}{2\min_{i}x_{i}^{+}}+2LR-\frac{t}{4\overline{R}}\max_{i}x_{i}^{-}\\
 & =5nL\overline{R}-\frac{LRr}{2\min_{i}x_{i}^{+}}-\frac{t}{4\overline{R}}\max_{i}x_{i}^{-}.
\end{align*}
Hence, we show that $(x^{+},x^{-},x^{\theta})$ is close to the central
path at $t=LR$ implies that $\min_{i}x_{i}^{+}$ cannot be too small
and $\max_{i}x_{i}^{-}$ cannot be too large
\[
\frac{LRr}{2\min_{i}x_{i}^{+}}+\frac{t}{4\overline{R}}\max_{i}x_{i}^{-}\leq5nL\overline{R}.
\]
In particular, this shows the following:
\begin{lem}
\label{lem:LP_exact_distance_OPT}We have that $\min_{i}x_{i}^{+}\geq\frac{Rr}{10n\overline{R}}$
and $\max_{i}x_{i}^{-}\leq\frac{20nL\overline{R}}{t}\cdot\overline{R}$.
\end{lem}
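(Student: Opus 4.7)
The plan is to read off the lemma directly from the inequality
\[
\frac{LRr}{2\min_{i}x_{i}^{+}}+\frac{t}{4\overline{R}}\max_{i}x_{i}^{-}\leq 5nL\overline{R}
\]
that has already been derived immediately before the statement by combining the first-order optimality inequality \eqref{eq:LP_exact_3_term} with the three bounds \eqref{eq:LP_exact_3_term_1}, Lemma \ref{lem:LP_exact_3_term_1} and Lemma \ref{lem:LP_exact_3_term_2}. Since $(x^{+},x^{-},x^{\theta})\in\mathcal{P}_{\overline{R},t}^{\circ}$ by Lemma \ref{lem:primal_dual_representation}, every coordinate of $x^{+}$ and $x^{-}$ is strictly positive, so both summands on the left-hand side are nonnegative. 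Therefore each individual summand is itself at most $5nL\overline{R}$.

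From the first summand I would conclude $\frac{LRr}{2\min_{i}x_{i}^{+}}\leq 5nL\overline{R}$, which rearranges to $\min_{i}x_{i}^{+}\geq \frac{LRr}{10nL\overline{R}}=\frac{Rr}{10n\overline{R}}$. From the second summand I would conclude $\frac{t}{4\overline{R}}\max_{i}x_{i}^{-}\leq 5nL\overline{R}$, which rearranges to $\max_{i}x_{i}^{-}\leq \frac{20nL\overline{R}^{2}}{t}=\frac{20nL\overline{R}}{t}\cdot\overline{R}$, exactly as claimed.

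There is no genuine obstacle: all of the analytic content, in particular the careful choice of the comparison point $(v^{+},v^{-},v^{\theta})$ and the three term-by-term estimates that feed the key inequality, has already been done in the preceding lemmas. The proof is two lines of arithmetic that simply isolates $\min_{i}x_{i}^{+}$ in one inequality and $\max_{i}x_{i}^{-}$ in the other.
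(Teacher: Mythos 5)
Your proof is correct and matches the paper's approach: the paper states this lemma as an immediate consequence of the displayed inequality $\frac{LRr}{2\min_{i}x_{i}^{+}}+\frac{t}{4\overline{R}}\max_{i}x_{i}^{-}\leq5nL\overline{R}$ derived just above it, and your argument simply spells out the two lines of arithmetic that the paper leaves implicit. The observation that both summands are nonnegative (so each is individually bounded by the right-hand side) is exactly the needed step.
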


Now, we are ready to prove the second conclusion of Theorem \ref{thm:IPM_interior}.
\begin{lem}
\label{lemma:IPM_center_original}For any primal $x\defeq(x^{+},x^{-},x^{\theta})\in\mathcal{P}_{\overline{R},t}$
and dual $s\defeq(s^{+},s^{-},s^{\theta})\in\mathcal{D}_{\overline{R},t}$
such that $\frac{5}{6}LR\leq x_{i}s_{i}\leq\frac{7}{6}LR$, we have
that
\[
(x^{+}-x^{-},s^{+}-s^{\theta})\in\PriR\times\DualR
\]
and that $x_{i}^{-}\leq\epsilon x_{i}^{+}$ and $s^{\theta}\leq\epsilon s_{i}^{+}$
for all $i$.
\end{lem}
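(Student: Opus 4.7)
The plan is to verify three claims in sequence: (i) $x := x^{+}-x^{-} \in \PriR$ and $s := s^{+} - s^{\theta}\mathbf{1} \in \DualR$; (ii) $s^{\theta} \leq \epsilon\, s_i^{+}$ for all $i$; (iii) $x_i^{-} \leq \epsilon\, x_i^{+}$ for all $i$. The direct bounds from Lemma~\ref{lem:LP_exact_distance_OPT} (instantiated at $\overline{R} = 5R/\epsilon$ and $t = 2^{16}\epsilon^{-3} n^{2}LR^{2}/r$) give $\min_i x_i^{+} \geq \epsilon r/(50n)$ and $\max_i x_i^{-} \leq 500\epsilon r/(2^{16}n)$; these suffice for (i) and (ii), but (iii) will require extra work via the modified LP's duality gap.

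For (i), the ratio $\max_i x_i^{-} / \min_i x_i^{+}$ is an absolute constant strictly below $1$, so $x_i^{+} > x_i^{-}$ coordinate-wise. Combined with $\ma(x^{+}-x^{-}) = b$ (a defining constraint of $\mathcal{P}_{\overline{R},t}$), this gives $x \in \PriR$, hence $\|x\|_{2}\leq R$. On the dual side, $\lambda + s^{\theta} = 0$ gives $\lambda = -s^{\theta}$, turning $\ma^{\top}y + \lambda\mathbf{1} + s^{+} = c$ into $\ma^{\top}y + s = c$; the remaining nonnegativity $s \geq 0$ is exactly the content of (ii).

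For (ii), I would write $s^{\theta} = \mu^{\theta}/x^{\theta}$ with $x^{\theta} = \widetilde{b} - \sum_i x_i^{+}$. Decomposing $\sum_i x_i^{+} = \sum_i x_i + \sum_i x_i^{-} \leq \sqrt{n}\|x\|_{2} + n\max_i x_i^{-} \leq 2\sqrt{n}R$, and using $\widetilde{b} \geq \tfrac{3}{4} n\overline{R} = \tfrac{15}{4} nR/\epsilon$ from Lemma~\ref{lem:ipm_propert_modified}, one obtains $x^{\theta} \geq \tfrac{15}{8} nR/\epsilon$ (for $\epsilon \leq 1/2$), so $s^{\theta} = O(L\epsilon/n)$. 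Similarly $\max_i x_i^{+} \leq R + \max_i x_i^{-} \leq 2R$, giving $s_i^{+} \geq 5L/12$, and therefore $s^{\theta}/s_i^{+} = O(\epsilon/n) \leq \epsilon$.

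The heart of the proof is (iii): Lemma~\ref{lem:LP_exact_distance_OPT} only yields $\max x^{-}/\min x^{+}$ as an absolute constant (about $0.38$), short by $\Theta(1/\epsilon)$. To recover the missing factor of $\epsilon$, I would exploit the modified LP's duality gap, which at the current iterate equals $\mu^{\theta} + \sum_i \mu_i^{+} + \sum_i \mu_i^{-} \leq 4nLR$. The modified LP's optimum is at most $LR$ (witnessed by the feasible triple $(x^{*}, 0, \widetilde{b} - \sum_i x_i^{*})$ with $x^{*}$ an optimal primal of the original LP, whose objective is $\leq LR$ by the outer radius). Hence the primal objective satisfies $c^{\top}x^{+} + \widetilde{c}^{\top}x^{-} \leq \text{OPT}_{\text{mod}} + \text{gap} \leq 5nLR$. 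Since $x \in \PriR$ gives $c^{\top}x \geq -LR$ and $|c^{\top}x^{-}| \leq L\|x^{-}\|_{1}$ is negligible from the bound on $\max x^{-}$, we have $c^{\top}x^{+} = c^{\top}x + c^{\top}x^{-} \geq -2LR$, whence $\widetilde{c}^{\top}x^{-} \leq 7nLR$. Using $\widetilde{c}_i \geq t/(2\overline{R})$ (Lemma~\ref{lem:ipm_propert_modified}), this gives $\sum_i x_i^{-} \leq 14 nLR \overline{R}/t = O(\epsilon^{2} r/n)$ after substituting $\overline{R}$ and $t$. Since $\max_i x_i^{-} \leq \sum_i x_i^{-}$, dividing by $\min_i x_i^{+} \geq \epsilon r/(50n)$ yields $x_i^{-}/x_i^{+} = O(\epsilon) \leq \epsilon$, completing the proof.
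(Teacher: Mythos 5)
Your proof is correct, and in fact it repairs a genuine arithmetic slip in the paper's argument for the bound $x_i^-\leq\epsilon x_i^+$. The paper bounds $\max_i x_i^-/\min_i x_i^+$ directly from the two estimates of Lemma~\ref{lem:LP_exact_distance_OPT} and asserts that $200n^2L\overline{R}^3/(tRr)\leq\epsilon$. But with the stated parameters $\overline{R}=\frac{5}{\epsilon}R$ and $t=2^{16}\epsilon^{-3}n^2\frac{R}{r}\cdot LR$, this quantity evaluates to exactly $25000/2^{16}\approx 0.38$, an absolute constant independent of $\epsilon$; since Theorem~\ref{thm:SlowSolveLP} takes $\epsilon=1/(100\sqrt n)$, the claimed inequality fails. (The choice of $t$ appears to be one power of $\epsilon^{-1}$ short; taking $t=2^{16}\epsilon^{-4}n^2\frac{R}{r}\cdot LR$ would rescue the paper's one-line computation.) You noticed this and do not rely on that computation.

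Your route for the $x^-$ bound is genuinely different: you invoke the duality gap of the modified program. The current iterate has primal--dual gap $\sum\mu_i\leq O(nLR)$, and the optimum value is at most $LR$, witnessed by the feasible point $(x^*,0,\widetilde b-\mathbf{1}^\top x^*)$ (feasible because $\widetilde b\geq\frac{3}{4}n\overline R\gg\sqrt n R\geq\mathbf{1}^\top x^*$). Hence $c^\top x^+ + \widetilde c^\top x^-\leq O(nLR)$. Since $c^\top x^+$ is bounded below by $-O(LR)$ (using $x\in\PriR$, $\|c\|_2\leq L$, and the crude bound $\|x^-\|_1\leq R$ already available from Lemma~\ref{lem:LP_exact_distance_OPT}), and $\widetilde c_i\geq t/(2\overline R)$ by Lemma~\ref{lem:ipm_propert_modified}, you get $\sum_i x_i^-\leq O(nLR\overline R/t)=O(\epsilon^2 r/n)$ --- one extra power of $\epsilon$ beyond the pointwise bound $\max_i x_i^-=O(\epsilon r/n)$ that Lemma~\ref{lem:LP_exact_distance_OPT} gives directly. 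Dividing by $\min_i x_i^+\geq\epsilon r/(50n)$ then yields a ratio of $O(\epsilon)\leq\epsilon$, as required. Your parts (i) and (ii) essentially track the paper (the positivity of $x$ and the $s^\theta/s_i^+$ bound only need the weaker constant ratio, which does hold, plus the lower bound on $x^\theta$ via $\widetilde b$). The duality-gap argument is a cleaner fix in the sense that it does not hinge on tuning the exponent of $\epsilon$ in $t$; the more local patch is simply to raise $t$ by one more factor of $\epsilon^{-1}$.
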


\begin{proof}
First we check $x\defeq x^{+}-x^{-}\in\PriR$. By the choice of $\overline{R}$
and $t$, Lemma \ref{lem:LP_exact_distance_OPT} shows that 
\[
\frac{\max_{i}x_{i}^{-}}{\min_{i}x_{i}^{+}}\leq\frac{\frac{20nL\overline{R}}{t}\cdot\overline{R}}{\frac{Rr}{10n\overline{R}}}=\frac{200n^{2}L\frac{\overline{R}^{3}}{Rr}}{t}\leq\epsilon.
\]
Hence, we have $x^{+}-x^{-}>0$ and that $\ma(x^{+}-x^{-})=b$. 

Next, we check $s\defeq s^{+}-s^{\theta}\in\DualR$. Since $x\in\mathcal{P}$,
we have $x\leq R$ and $x_{i}^{+}\leq\frac{3}{2}x_{i}\leq\frac{3}{2}R$.
Since $x_{i}^{+}s_{i}^{+}\geq\frac{5}{6}LR$, we have $s_{i}^{+}\geq\frac{1}{2}L.$
On the other hand, we have $x^{\theta}=\widetilde{b}-\sum_{i=1}^{n}x_{i}^{+}\geq\widetilde{b}-2nR\geq\frac{1}{2}n\overline{R}$
(Lemma \ref{lem:ipm_propert_modified}). Hence, $s^{\theta}\leq\frac{\frac{7}{6}LR}{\frac{1}{2}n\overline{R}}\leq\frac{5LR}{2n\overline{R}}.$
Combining both and the choice of $\overline{R}$, we have
\[
\frac{s^{\theta}}{\min_{i}s_{i}^{+}}\leq\frac{\frac{5LR}{2n\overline{R}}}{L/2}=\frac{5R}{n\overline{R}}\leq\epsilon
\]
Hence, we have $s=s^{+}-s^{\theta}>0$ and that $\ma^{\top}y+s=\ma^{\top}y+s^{+}-s^{\theta}=\ma^{\top}y+s^{+}+\lambda=c$
(See (\ref{eq:KKT_modified_LP})).
\end{proof}
To ensure the reduction does not increase the complexity of solving
linear system, we note that the linear constraint in the modified
linear program is
\[
\oma=\left[\begin{array}{ccc}
\ma & -\ma & 0\\
1 & 0 & 1
\end{array}\right]
\]
For any diagonal matrices $\mw_{1},\mw_{2}$ and any scalar $\alpha$,
we have
\[
\mh\defeq\oma\left[\begin{array}{ccc}
\mw_{1} & \mzero & 0\\
\mzero & \mw_{2} & 0\\
0 & 0 & \alpha
\end{array}\right]\oma^{\top}=\left[\begin{array}{cc}
\ma^{\top}(\mw_{1}+\mw_{2})\ma & \ma w_{1}\\
(\ma w_{1})^{\top} & \|w_{1}\|_{1}+\alpha
\end{array}\right].
\]
Note that the second row and second column block has size $1$. By
the block inverse formula (Fact \ref{fact:block-inverse}), $\mh^{-1}v$
is an explicit formula involving $(\ma^{\top}(\mw_{1}+\mw_{2})\ma)^{-1}v_{1:n}$
and $(\ma^{\top}(\mw_{1}+\mw_{2})\ma)^{-1}\ma w_{1}$. Hence, we can
compute $\mh^{-1}v$ by solving two linear systems of the form $\ma^{\top}\mw\ma$
with some extra linear work.
\end{document}